\algrenewcommand\algorithmicrequire{\textbf{Input:}}
\algrenewcommand\algorithmicensure{\textbf{Output:}}
\setlist{topsep=3pt, itemsep=0pt}
\newcommand{\R}{\mathbb{R}}
\newcommand{\cost}{\mathrm{cost}}
\newtheorem{theorem}{Theorem}
\newtheorem{lemma}[theorem]{Lemma}
\newtheorem{definition}[theorem]{Definition}
\newtheorem{claim}[theorem]{Claim}
\newtheorem{remark}{Remark}
\DeclareMathOperator*{\E}{{\mathbb{E}}}
\DeclareMathOperator*{\union}{\bigcup}
\newcommand{\calA}{{\mathcal{A}}}
\newcommand{\calC}{{\mathcal{C}}}
\newcommand{\calK}{{\mathcal{K}}}
\newcommand{\calP}{{\mathcal{P}}}
\newcommand{\calQ}{{\mathcal{Q}}}
\newcommand{\poly}{{\mathrm{poly}}}
\newcommand{\tG}{\tilde{G}}
\newcommand{\pedge}{{$+$edge}\xspace}
\newcommand{\medge}{{$-$edge}\xspace}
\newcommand{\pedges}{{$+$edges}\xspace}
\newcommand{\medges}{{$-$edges}\xspace}
\newcommand{\prepinstance}{(\calK, E_{\adm})}
\newcommand{\adm}{\mathrm{adm}}
\newcommand{\err}{\mathrm{err}}
\newcommand{\opt}{{\mathrm{opt}}}
\newcommand{\epsrt}{\varepsilon_{r}}
\newcommand{\cc}{{Correlation Clustering}\xspace}
\newcommand{\costr}{\mathrm{cost}^r}
\newcommand{\costi}{\mathrm{cost}^i}
\newcommand{\lpr}{\Delta^r}
\newcommand{\lpi}{\Delta^i}
\newcommand{\eps}{\varepsilon}
\newcommand{\epsfour}{\eps}
\newcommand{\epsq}{\eps_q}
\newcommand{\epsa}{\eps_a}
\begin{document}

\title{Handling Correlated Rounding Error via Preclustering:\\ A 1.73-approximation for Correlation Clustering}

\author{ 
{Vincent Cohen-Addad} \\ Google Research\\ \texttt{cohenaddad@google.com}
\and
{Euiwoong Lee\footnote{Supported in part by NSF grant CCF-2236669 and Google.}} \\ University of Michigan\\ \texttt{euiwoong@umich.edu}
\and {Shi Li}\\ Nanjing University\\ \texttt{shili@nju.edu.cn} 
\and 
{Alantha Newman} \\ Université Grenoble Alpes\\ \texttt{alantha.newman@grenoble-inp.fr} 
}

\date{}

\maketitle

\begin{abstract}
    We consider the classic Correlation Clustering problem: Given a complete graph where edges are labelled either $+$ or $-$, the goal is to find a partition of the vertices that minimizes the number of the \pedges across parts plus the number of the \medges within parts.  Recently, Cohen-Addad, Lee and Newman~\cite{CLN22} presented a 1.994-approximation algorithm for the problem using the Sherali-Adams hierarchy, hence breaking through the integrality gap of 2 for the classic linear program and improving upon the 2.06-approximation of Chawla, Makarychev, Schramm and Yaroslavtsev~\cite{CMSY15}. 

We significantly improve the state-of-the-art by providing a 1.73-approximation for the problem. Our approach introduces a preclustering of Correlation Clustering instances that allows us to essentially ignore the error arising from the {\em correlated rounding} used by \cite{CLN22}.
This additional power simplifies the previous algorithm and analysis.  
More importantly, it enables a 
 new {\em set-based rounding} that complements the previous roundings.
A combination of these two rounding algorithms yields the improved bound.
\end{abstract}

\thispagestyle{empty}
\newpage
\thispagestyle{empty}
\newpage
\pagenumbering{arabic} 
    \section{Introduction}
Clustering is a classic problem in unsupervised machine learning and
data mining. Given a set of data elements and pairwise similarity
information between the elements, the goal of clustering is to find a
partition of the data elements such that elements in the same clusters
are pairwise similar while elements in different clusters are pairwise
dissimilar.  Introduced by Bansal, Blum and Chawla~\cite{BBC04},
Correlation Clustering has become one of the most widely studied
formulations for clustering. The input of the problem consists of a
complete graph $(V, E^+ \uplus E^-)$, where $E^+ \uplus E^- = {V
  \choose 2}$, $E^+$ representing the so-called \emph{positive} edges
and $E^-$ the so-called \emph{negative} edges. The goal is to find a
partition of the vertex set so as to minimize the number of
\emph{unsatisfied} edges, namely the number of the negative edges $uv$
where $u$ and $v$ are in the same cluster plus the number of the
positive edges $uv$ where $u$ and $v$ are in different clusters.
Here, the vertex set represents the elements to cluster while positive
edges represent pairs of similar elements and negative edges pairs of
dissimilar elements.  The above formulation is very basic and has thus
led the Correlation Clustering problem to encompass a variety of
applications from finding clustering ensembles
\cite{bonchi2013overlapping}, duplicate detection
\cite{arasu2009large}, community mining \cite{chen2012clustering},
disambiguation tasks \cite{kalashnikov2008web}, to automated labelling
\cite{agrawal2009generating, chakrabarti2008graph} and many more.

The problem is known to be NP-hard, and so the focus has been on
designing approximation algorithms for the problem.  In their seminal
paper, Bansal, Blum and Chawla~\cite{BBC04} gave an
$O(1)$-approximation algorithm for the problem, which was later
improved by Charikar, Guruswami and Wirth~\cite{CGW05} to a
4-approximation, obtained by rounding the natural linear program (LP)
relaxation for the problem. Charikar, Guruswami and Wirth~\cite{CGW05}
also showed that the problem is APX-Hard.  Soon after, Ailon, Charikar
and Newman~\cite{ACN08} introduced an influential {\em pivot-based}
algorithm, that leads to a combinatorial $3$-approximation and a
LP-based $2.5$-approximation. The pivot-based algorithms continue to
inspire new results for Correlation Clustering in different settings
(see Section~\ref{sec:furtherrelatedwork} for more details).  Ten
years after the results of \cite{ACN08}, Chawla, Makarychev, Schramm
and Yaroslavtsev~\cite{CMSY15} further improved the LP rounding scheme
and obtained a $2.06$-approximation, nearly matching the LP
integrality gap of 2. Since the best known approximation results have
been obtained through LP rounding techniques and the LP has an
integrality gap of 2, this bound seemed to be an important roadblock
in the direction of getting better approximation bounds. Recently,
Cohen-Addad, Lee and Newman~\cite{CLN22} broke through this barrier
using $O(1/\eps^2)$ rounds of the Sherali-Adams hierarchy on top of
the standard LP and rounding the resulting fractional solution to
obtain a $(1.994+\eps)$-approximation algorithm.

The result of \cite{CLN22} shows the importance of hierarchies to
decrease the approximation ratio for the problem. In fact, for a large
constant number of rounds of the Sherali-Adams hierarchy, we do not
know a lower bound on the integrality gap of the resulting LP, which
may even perhaps yield optimal results under P$\neq$NP or the Unique
Games Conjecture. Presumably, the approximation ratio of
$(1.994+\eps)$ obtained through rounding the Sherali-Adams relaxation
does not reflect the actual power of the hierarchies, but rather the
limitations of the current rounding approaches and of their analysis.
Thus, to improve over the $(1.994+\eps)$-approximation one might have
to go beyond the pivot-based rounding framework introduced by
\cite{ACN08}, further developed by \cite{CMSY15} and \cite{CLN22},
which has been the only known way of obtaining a better than
3-approximation for the problem for the last 15 years.  This is what
we propose in this paper.

\subsection{Our Results}
We present a drastic improvement over the result of Cohen-Addad, Lee
and Newman~\cite{CLN22} by showing a $(1.73+\eps)$-approximation to
the problem using the Sherali-Adams hierarchy.

\begin{restatable}{theorem}{main}{\rm (Main Result)}\label{thm:main}
For any $\eps > 0$, there exists a $(1.73+\eps)$-approximation
algorithm for Correlation Clustering with running time $n^{O(1/\poly(\eps))}$.
\end{restatable}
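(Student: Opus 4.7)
The plan is to build on the Sherali-Adams approach of \cite{CLN22}, introducing two new ingredients: a \emph{preclustering} step and a complementary \emph{set-based} rounding. First, I would solve $O(1/\poly(\eps))$ rounds of the Sherali-Adams hierarchy applied to the standard Correlation Clustering LP, obtaining fractional distances $x_{uv}\in[0,1]$ together with correlations on small tuples. Next, I would run a preclustering procedure that identifies groups of vertices whose pairwise fractional distances are small enough that they can be safely committed to a single atom: intuitively, vertices in the same atom will be placed together in the final clustering, and the rounding cost of doing so is charged against the LP with only an $\eps$ overhead.

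The point of preclustering is to eliminate precisely those pairs for which the correlated rounding of \cite{CLN22} incurs its largest error, namely pairs $(u,v)$ with intermediate fractional distance where the Sherali-Adams consistency across triples is only weakly informative. After preclustering, the instance consists of atoms together with a set of ``admissible'' inter-atom pairs whose LP values lie in controlled ranges, so one can apply a cleaner version of the pivot/correlated-rounding analysis and charge its per-edge cost sharply against the LP. This simplification is what allows essentially ignoring the correlated rounding error in the analysis.

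On top of this preclustered instance, I would then run two rounding algorithms in parallel and return the better of the two. The first is a pivot-style correlated rounding inherited from \cite{CLN22} and specialized to the preclustered instance; it performs well on edges that are fractionally close to $0$ or close to $1$. The second is a new set-based rounding that operates on whole atoms at once, grouping atoms together using aggregate LP values rather than pivoting vertex by vertex; it is designed to be strong precisely in the regime where the pivot-based rounding is weak. By a case analysis that partitions the LP cost according to the type of edge (\pedge versus \medge, intra-atom versus inter-atom, and LP-value regime) and bounds each algorithm's expected cost in each case, the minimum of the two expected costs is at most $(1.73+\eps)\cdot \mathrm{LP}$.

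The main obstacle will be balancing the two rounding analyses so that their combined worst case actually matches $1.73$. Concretely, one must tune the rounding function used by the pivot step, the combination rule used by the set-based rounding, and the probability $p$ with which one mixes the two outputs, so that for every feasible ``shape'' of LP values around an edge the weighted average of the two per-edge expected costs is at most $1.73 \cdot \mathrm{LP}$. The preclustering is what makes this analysis tractable, as it restricts the space of LP shapes one must handle and removes the difficult intermediate-distance regime from the pivot analysis; the remaining work is a finite set of univariate/bivariate optimizations certifying the $1.73$ bound.
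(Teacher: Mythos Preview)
Your high-level architecture (preclustering, two complementary roundings, convex combination) matches the paper, but two structural pieces are off in a way that would block the argument as written.

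First, the preclustering is \emph{not} derived from the LP solution. It is a purely combinatorial procedure on the input graph (based on neighborhood agreement, following \cite{CLMNP21}) and is run \emph{before} any relaxation is touched. Its crucial output is not merely the atoms but the bound $|E_{\adm}| = O(\opt / \poly(\eps))$ on the number of admissible pairs; this is what converts the additive $\eps\,|E_{\adm}|$ error from correlated rounding into a multiplicative $(1+O(\eps))$ loss. A preclustering based on ``small pairwise fractional distances'' does not obviously yield such a bound: you would need to control how many pairs have intermediate LP value, which is a statement about the instance, not about the LP. Relatedly, preclustering does not ``eliminate the intermediate-distance regime''; admissible edges can still take any value in $[0,1]$, and the point is only that there are \emph{few} of them relative to $\opt$.

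Second, you are missing the round-or-cut framework. The set-based rounding is adaptive: in each iteration it solves an auxiliary LP with variables $y^s_S$ that depend on the current set $V'$ of unclustered vertices, so one cannot solve a single relaxation upfront and then run both roundings one-shot. The paper instead runs the ellipsoid method with a rounding-or-separation oracle (Theorem~\ref{thm:round-or-cut} and Lemma~\ref{lemma:find-cut}): given a candidate metric $x$, either extend it to a feasible $(x,y)$ for the current $V'$ and continue rounding, or output a hyperplane separating $x$ from the polytope $\calP$ of good clusterings. This machinery is essential. As a minor correction, your description of the complementary strengths is slightly off: the pivot-based rounding achieves ratio $\min(1.515+x,2)$ on \pedges (best near $x=0$) and $2$ on \medges, while the set-based rounding achieves $2/(1+x)$ on \pedges (best near $x=1$) and at most $1$ on \medges; the $0.42/0.58$ mixture balances these, with the binding constraints occurring at $x=0$ and $x\approx 0.485$ for \pedges.
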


In addition to the above bound, our contributions are the
following. Our new result departs from previous work in several ways:
\begin{enumerate}
    \item We provide a preclustering step that identifies pairs of
      vertices that ``clearly'' belong to the same cluster/do not
      belong to the same cluster in an optimum solution. This is
      achieved by computing a group of vertices whose contribution to
      the objective function in an optimum solution is tiny compared
      to the number of incident \pedges, and results in a
      preclustering where the number of ``undecided'' pairs can be
      upper bounded in terms of the cost of the optimal clustering; we
      hence get a useful lower bound on the cost contribution of the
      remaining instance.
    
    \item Equipped with this, we provide a new {\em set-based
      rounding} approach inspired by the work of Kleinberg and
      Tardos~\cite{KT02} for the Uniform Metric Labeling problem. To
      the best of our knowledge, it is the first application of the
      ideas from~\cite{KT02} to Correlation Clustering.  The original
      work~\cite{KT02}, when adapted to Correlation Clustering,
      creates a variable $y_S$ for each $S \subseteq V$ that can
      possibly become a cluster and sample a set with probability
      proportional to $y_S$'s (hence the name {\em set-based}).
    
Of course, it is impossible to create a variable for each subset, and
we therefore need to use the power of the Sherali-Adams hierarchy and
the {\em correlated rounding} technique introduced by Raghavendra and
Tan~\cite{RT12} to extend this approach to output arbitrary-size
subsets when needed. This comes at a price: there is an additive
rounding error proportional to the total number of vertices squared,
which is prohibitive when the optimal value is $o(n^2)$.  However,
this is where the preclustering step comes to the rescue; we use the
correlated rounding only for the ``undecided pairs'' whose number is
small compared to $\opt$, so that this additive rounding error is
negligible.
    
    \item This way of handling error from the correlated rounding also
      provides a refined analysis of the Cohen-Addad, Lee and
      Newman~\cite{CLN22} rounding, which combines the classical {\em
        pivot-based rounding} with the correlated rounding.  As their
      main technical complications arise from the rounding error, our
      preclustering step considerably simplifies their analysis.
      While following the same triangle-based analysis of
      \cite{CLN22}, we introduce a new edge-by-edge charging argument
      that helps this algorithm nicely complement the set-based
      rounding above.
    
    \item In the final step, we combine the two above rounding
      approaches to show that the bad cases for one type of edges
      ($E^+$ or $E^-$) are the good cases of the other; our combined
      rounding gives the desired bound of $1.73+\eps$.
\end{enumerate}
We next present a more detailed overview of the techniques.

\subsection{Our Techniques}
To understand in more details our contribution and the new techniques
introduced, we need to provide a brief summary of the previous
approaches. We first recall the classic LP relaxation (whose
integrality gap is known to be in $[2, 2.06]$). There is a variable
$x_{uv}$ for each pair of vertices whose intended value is 1 if $u,v$
are not in the same cluster and $0$ otherwise. The goal is thus to
minimize $\sum_{uv \in E^+} x_{uv} + \sum_{uv \in E^-} (1-x_{uv})$,
under the classic triangle inequality constraint: $\forall u,v,w$,
$x_{uv} \le x_{uw} + x_{wv}$.

To obtain the 2.5-approximation algorithm, Ailon, Charikar and
Newman~\cite{ACN08} designed a pivot-based rounding method that
proceeds as follows: (1) Pick a random vertex $p$, called the
\emph{pivot}, and create the cluster containing $p$; and (2) Recurse
on the rest of the instance (the vertices not in the cluster).  This
hence builds the clustering in a sequential manner. Once we commit to
this scheme, the main design question that remains is how to construct
the cluster containing the pivot.  The solution of \cite{ACN08} was to
go over all the other vertices and for each other vertex $u$ place it
in the cluster of the pivot $p$ with probability $(1-x_{pu})$,
independently of the other random decisions made for the other
vertices.

This was further improved by Chawla, Makarychev, Schramm and
Yaroslavtsev~\cite{CMSY15} who kept the same scheme but provided a
better rounding approach: For $+$edges, they replaced the probability
$(1-x_{pu})$ with a new rounding function $f^+$ to apply to the
quantity $(1-x_{pu})$ yielding the probability of incorporating $u$ in
$p$'s cluster.  Similar to the analysis of \cite{ACN08}, the analysis
of the rounding scheme was triangle-based: The analysis is a charging
scheme that charges the cost paid by each pair $u,v$ to the triangle
$p,u,v$ where $p$ is the pivot which decided edge $u,v$.  Then the
crux of the analysis is to show that for any triangle $p,u,v$, the
charge is bounded compared to the LP cost.

The approach of \cite{CLN22}, while using the same pivot-based
rounding and triangle-based analysis, introduces two twists.  The
first twist is the usage of \emph{correlated rounding} based on the
Sherali-Adams hierarchy. Namely, given a pivot $p$, the set of
$+$neighbors of $p$ that join the cluster of $p$ is chosen in a
correlated manner, using the techniques of Raghavendra and
Tan~\cite{RT12}.  Concretely, the Sherali-Adams hierarchy provides
variables of the form $y_{S}$ for any constant-sized set of vertices
$S$ that indicates the probability that all the vertices in $S$ are in
the same cluster.\footnote{In which case, $y_{uv} = 1-x_{uv}$.}  Given
a pivot $p$, the correlated rounding then allows to sample in such a
way that the probability that $u,v$ join the cluster of $p$ is
$y_{puv} \pm \eps$, where $\eps$ is an arbitrarily small constant.  An
important issue is that this additive $\eps$ error is in some cases
not a relative error (think of $(u,v) \in E^+$ and $x_{uv} =
0$). Dealing with these additive errors is one of the most technical
parts of the contribution of \cite{CLN22} and perhaps a limitation to
getting an improved approximation ratio (e.g., special attention must
be paid to edges with values close to $0$ and $1$ in both the
algorithm and in the analysis).  This first twist allows to bring the
approximation ratio down to 2 but not further; the ratio of some of
the triangles is 2. To bypass this bound, the second twist in the
approach is to use the properties of the Sherali-Adams hierarchy to
argue that one cannot pack too many bad triangles without creating
some good triangles (for which the ratio is smaller than 2), hence
bringing down the ratio to $(1.994+\eps)$ via a charging argument.  We
next describe how we go beyond previous work.

\paragraph{The High-Level Approach} The governing high-level 
intuition for what we are trying to achieve is this. Given a pivot
vertex $p$, the key property of the correlated rounding (see
Lemma~\ref{lem:RTrestated} for a restatement) ensures that we can
sample a set $C$ from a set of vertices $S$ both being of arbitrary
size (importantly not necessarily constant) such that pairs $u,v$ ends
up in $S$ with probability $y_{puv}$ on average, up to losing an
additive $\eps |S|^2$ error in the cost. This is one of the key
properties of the Sherali-Adams hierarchy that can be used to obtain
approximation schemes for dense CSPs, in particular Dense
Max-Cut~\cite{de2007linear,yoshida2014approximation}.  In light of the
previous approaches, this is a very desirable property: We could
repeatedly construct a cluster using this tool and the error would
only be coming from the fact that the clusters are constructed
sequentially.

The main issue we need to face with the above plan of attack is that
the additive error of $\eps|S|^2$, which we pay when we create a
cluster $C$, may end up being much larger than the cost of the optimum
solution. In fact, an optimum solution may have cost zero.  So our
goal is the following: (1) precluster the instance so that the cost of
the optimal solution becomes relatively large compared to the size of
the ``unsure part'', and (2) use correlated rounding only for the
unsure part of the instance so that the error from the correlated
rounding is negligible.

\paragraph{A New Preclustering} Our first conceptual contribution is a preclustering 
of the instance that aims at identifying ``clear clusters''. More
concretely, our goal is to make sure that if we apply the correlated
rounding to create a cluster of size $s$, then we can accommodate an
additive cost of $\eps s^2$. Thus, consider the clusters of the
optimum (integral) solution; when is the cost contribution of such a
cluster $S$ significantly smaller than $s^2$, say $\eps s^2$?  For
this to happen, it must be that the number of \medges internal to $S$
is small compared to ${s \choose 2}$ and that the number \pedges with
exactly one endpoint in $S$ is also much smaller than ${s \choose
  2}$. This cluster is thus almost a clique with tiny
\pedge-expansion; we formalize this through the notion of an
\emph{atom} in Section~\ref{sec:preprocess}.

This means that for a very large fraction of the pairs $u,v$ of
vertices in an atomic cluster $S$, the $+$neighborhood of $u$ and $v$
are nearly identical (up to a tiny fraction).  This is exactly what
the notion of \emph{agreement} introduced by Cohen-Addad, Lattanzi,
Mitrovic, Norouzi-Fard, Parotsidis and Tarnawski~\cite{CLMNP21} (see
Definition~\ref{definition:agreement}) was designed for.  Indeed, in
this work, the authors presented a massively-parallel constant factor
approximation algorithm. The constant obtained there is larger than
500, but the key idea is that it is enough to identify the atoms,
since all the other clusters $C$ pay a cost of at least $\eps \sum_{u
  \in C} d_u$, where $d_u$ is the number of $+$neighbors of $u$, and
therefore making all the remaining vertices singleton clusters would
be enough to get an $O(1/\eps)$-approximation.

We thus re-use the Algorithm 1 of~\cite{CLMNP21} and show that it
correctly identifies the atoms.  \cite{CLMNP21} then showed that there
exists a $(1+\eps)$-approximate solution $\calC^*$, such that each
non-singleton cluster output by the Algorithm 1 of~\cite{CLMNP21}, now
referred to as an atom, is fully contained within one cluster of
$\calC^*$. This immediately allows us to treat each atom as a single
(weighted) vertex, or in other words enforce $x_{uv} = 0$ for all
pairs $u,v$ in the same atom.

The next key step is that for each vertex $u$, we can identify a set
of so-called \emph{admissible} edges, which induce a neighborhood
$N'_u$ of size at most $d_u / \eps^{O(1)}$.  We then show that there
exists a near-optimum solution such that if $u,v$ is not an atomic or
an admissible pair, then $u$ and $v$ are not in the same cluster. For
such pairs $u,v$, we can immediately set $x_{uv} = 1$ in the LP,
because we know in advance that they must be placed in separate
clusters. This immediately reduces the uncertainty of the fractional
LP solution: for any vertex $u$, the number of pairs $u,v$ for which
the LP can have a fractional value (i.e., a value not in $\{0,1\}$) is
at most the number of admissible pairs which we show through a
delicate argument is at most $\sum_{u \in V} d_u/\eps^{O(1)}$.

The final step of the preclustering is then to show that the cost of
an integral solution under the above constraints is at least
$\eps^{O(1)}$ times the number of admissible pairs.  Combined with the
above argument, we can show that this enables the use of correlated
rounding while only losing an overall $(1+\eps)$ multiplicative
approximation, which is crucial because both of our rounding
algorithms below use the correlated rounding.

We believe that this preclustering will be of particular interest to
future work on improving the approximation ratio given that it
provides a much more structured instance and it remains completely
independent from the rounding process.

\paragraph{A New Set-Based Rounding} Our other major contribution is the following {\em set-based} rounding, whose performance complements the previous ones~\cite{ACN08, CMSY15, CLN22}.
Namely, give a solution $x \in [0, 1]^{\binom{V}{2}}$ to some LP relaxation, our goal is to obtain (as stated in 
Theorem~\ref{thm:KT}) a clustering $\calC$ with
    \begin{align*}
        \cost(\calC) \leq \sum_{vw \in E^+} \frac{2x_{vw}}{1+x_{vw}} \quad + \quad \sum_{vw \in E^-} \frac{1-x_{vw}}{1+x_{vw}}  \quad + \quad O(\eps) \cdot |E_\adm|,
    \end{align*}
where $\cost(\calC)$ is the cost of the clustering $\calC$. The set $E_{\adm}$ refers to the set of admissible edges defined in the preclustering.

To achieve this we use the ideas from the $2$-approximation algorithm for the Uniform Metric Labeling problem by Kleinberg and Tardos~\cite{KT02}. Consider a special case of the Correlation Clustering problem where we are promised that the optimum clustering has maximum cluster size $t = O(1)$. Then, in our LP, we could simply have a variable $z_S$ for every set $S$ of size at most $t$, indicating if $S$ is a cluster or not. For every $v \in V$, we have $\sum_{S \ni v}z_S = 1$. In each iteration of the rounding algorithm, we randomly choose a set $S$ with probability to $z_S/Z$, where $Z := \sum_S z_S$. So an edge $vw$ is decided with probability $\frac{1+x_{vw}}Z$. For an edge $vw \in E^+$, it incurs a cost with probability $\frac{2x_{vw}}Z$; for an edge $vw \in E^-$, the probability is $\frac{1-x_{vw}}Z$. This gives expected cost of $\frac{2x_{vw}}{1+x_{vw}}$ and  $\frac{1-x_{vw}}{1+x_{vw}}$ for $+$ and $-$edges $vw$ respectively, proving the inequality.

The most prominent challenge is thus to enable this approach to work
beyond the assumption that the optimum clusters have constant
size. This may seem impossible at first since if such a technique
existed in general graph partitioning, this would provide an
approximation bound that would yield a result believed to be unlikely
(e.g., a $2$-approximation for Minimum Bisection). Thus, we need to
exploit the particularities of the Correlation Clustering problem. To
achieve this, we first enrich the program with variables of the form
$y_{S}^s$, for all $s \in [n]$, and subset $S$ of $1/\eps^{O(1)}$
vertices. Here $y_{S}^s$ aims at representing the number of clusters
of size $s$ containing $S$ as a subset, and should be in $\{0,1\}$ in
an integral solution when $S \neq \emptyset$.  Then, $y_S$ now
represents the number of clusters (of various sizes) containing $S$ as
a subset, and should also be in $\{0,1\}$ in an integral solution. It
thus indicates whether $S$ is a subset of a cluster.

We next show that this is enough to achieve the above bound. To do so,
we provide the following rounding approach: We first sample a size $s$
based on the $y^s_{\emptyset}$ probabilities, then a pivot $a$ that
will go in a cluster of size $s$ based on the $y_a^s$ probabilities
(correcting for the fact that we are aiming to build a cluster of size
$s$), and then use the correlated rounding to complete the cluster.
We show, perhaps surprisingly, that this is enough to achieve the
bounds offered by the approach of~\cite{KT02}, at the price of an
extra additive error incurred by the correlated rounding, which, as we
show, results in an additive $O(\eps) \cdot |E_{\adm}|$ additive
error. This error is tolerable thanks to the preclustering step.

\paragraph{A Refined and Cleaner Analysis of the Pivot-Based Rounding}
This preclustering immediately makes the analysis of \cite{CLN22}
rounding much simpler. All the corner cases that required a lot of
work to handle the additive error can be removed: The concepts of
short and long edges (edges $uv$ for which $x_{uv}$ was close to 0 or
1) can be completely forgotten and the rounding can be simplified. In
fact, \cite{CLN22} proposed a simpler and better so-called ``ideal''
analysis of their rounding if the additive error $\eps$ can be assumed
to be 0 (which of course cannot be achieved with $o(n)$ rounds of
Sherali-Adams) and our preclustering allows to immediately recover
this ideal analysis, modulo a simple cleaning step to handle the
atoms. This yields a very simple 2-approximation rounding scheme.

We then refine the bound obtained by \cite{CLN22} to express 
it as a per-edge approximation. Concretely, we show that the output clustering $\calC$
satisfies the following guarantees (as stated in Theorem~\ref{thm:CLN}):
    \begin{align*}
      \cost(\calC) \leq \sum_{vw \in E^+} \min\{1.515 + x_{vw}, 2\} \cdot x_{vw} \quad + \quad 2\sum_{vw \in E^-}(1-x_{vw})  \quad + \quad O(\eps) \cdot |E_\adm|.
          \end{align*}

\paragraph{Combining Two Roundings}
The two above bounds are simply combined as follows: We round the LP
with the pivot-based rounding of \cite{CLN22} with some carefully
chosen probability and the new rounding approach with the remaining
probability; for intuition, note that $-$edges have a
$1$-approximation in the first rounding and $2$ in the second
rounding, while $+$edges with small $x$ value have a $2$-approximation
in the first rounding but close to a $1.5$-approximation in the
second. Therefore, by appropriately combining the two roundings, we
can show that every edge has a $1.73$-approximation in expectation.
The preclustering ensures that the additive $O(\eps) \cdot |E_\adm|$
terms are only an $O(\poly(\eps))$ fraction of the optimum cost and
can thus be ignored.

\paragraph{Round-or-Cut Framework} 

In order to perform the set-based rounding and pivot-based rounding
{\em simultaneously}, our overall algorithmic framework after the
preclustering uses the round-or-cut paradigm.  In a typical rounding
algorithm, we are given a metric $x \in [0, 1]^{|V| \choose 2}$ over
$V$ along with some auxiliary variables, obtained from solving an LP
relaxation. We then round $x$ into some integral clustering with a
small cost.  This structure works for the pivot based rounding
algorithm: we randomly generate a cluster $C$ according to the
algorithm, remove $C$, focus on the LP relaxation restricted to $V
\setminus C$ (i.e., we reduce the LP solution), and repeat until we
clustered all vertices in $V$.  However, the adaptive feature of our
set-based rounding algorithm, which solves an LP extending $x$ in
every iteration, makes this one-shot strategy hard to implement.

Instead, we design a rounding algorithm with violation detection (Theorem~\ref{thm:round-or-cut}). In the algorithm, we are given the core vector $x \in [0, 1]^{|V| \choose 2}$ only. The algorithm proceeds in iterations. In each iteration, we try extend $x$ to a vector $(x, y)$ with the auxiliary variables $y$, which depend on the remaining set $V'$ of vertices. If the extension is successful, then we can proceed with the iteration by constructing the cluster $C$ and removing $C$. Otherwise, we find a hyperplane that separates $x$ from the convex hull of all integral clusterings, and return it. 
The well-known property of the ellipsoid algorithm~\cite{GLS12} will ensure that we will eventually find a desired clustering.

\subsection{Further Related Work}
\label{sec:furtherrelatedwork}
The Correlation Clustering problem has also been studied in the
weighted case, where each pair of vertices has an associated weight
and unsatisfied edges contribute a cost proportional to their weight
to the objective. Unfortunately, this version of the problem is
equivalent to the Multicut problem in terms of approximation guarantee
and so an $O(\log n)$-approximation is
known~\cite{demaine2006correlation} and improving this bound
significantly would be a major breakthrough. Moreover, no
polynomial-time constant factor approximation algorithm exists
assuming the Unique Game Conjecture~\cite{CKKRS06}.

The maximization version of the problem, where the goal is to maximize
the number of satisfied edges, has also been studied.  A PTAS for the
problem was given by Bansal, Blum and Chawla~\cite{BBC04}, and a
.77-approximation for the weighted case was given by Charikar,
Guruswami and Wirth~\cite{CGW05} and
Swamy~\cite{swamy2004correlation}.

In the unweighted case, a PTAS exists when the number of clusters is a
fixed constant~\cite{giotis2006correlation,karpinski2009linear}. A lot
of work has also been devoted to the minimization version of
Correlation Clustering in other computation models:
online~\cite{mathieu2010online,NEURIPS2021_250dd568,CLMP22}, more
practical settings, and in particular distributed or
parallel~\cite{chierichetti2014correlation,pmlr-v37-ahn15,CLMNP21,DBLP:conf/nips/PanPORRJ15,DBLP:conf/wdag/CambusCMU21,DBLP:conf/icml/Veldt22,DBLP:conf/www/VeldtGW18},
differential-privacy~\cite{bun2021differentially,Daogao2022,CFLMNPT22}. Related
to our results is the work of Cohen-Addad, Lattanzi, Mitrovic,
Norouzi-Fard, Parotsidis and Tarnawski~\cite{CLMNP21} whose approach
is useful to our preprocessing step, see also the work of Assadi and
Wang~\cite{DBLP:conf/innovations/Assadi022}. The above works have been
improved by Behnezhad, Charikar, Ma, and
Tan~\cite{DBLP:conf/focs/BehnezhadCMT22,DBLP:conf/soda/BehnezhadCMT23}
by showing how to extend the combinatorial pivot approach to the
distributed or streaming settings.  Note that recently, connections
between metric embeddings into ultrametric have been established and
Correlation Clustering plays a central role in the current best known
approximation
algorithms~\cite{DBLP:conf/focs/Cohen-Addad0KPT21,CFLM22}. In fact,
our new bound (marginally) improves the constant factor approximation
of Cohen-Addad, Das, Kipouridis, Parotsidis and
Thorup~\cite{DBLP:conf/focs/Cohen-Addad0KPT21}.

    \section{Overall Framework}
In this section, we present our overall framework to achieve a $(1.73 + \eps)$-approximation algorithm. 
On the way, we will also introduce our main technical results for the preclustering (Theorem~\ref{thm:preprocessed-wrapper}),
the set-based rounding 
(Theorem~\ref{thm:KT}), and
the pivot-based rounding 
(Theorem~\ref{thm:CLN}), 
which will be proved in Section~\ref{sec:preprocess},~\ref{sec:KT}, and~\ref{sec:CLN} respectively. 
We begin with some definitions related to our preclustering step. 

\begin{definition}\label{definition:prepro}
Given a Correlation Clustering instance $(V, E^+ \cup E^-)$, a {\em preclustered instance} is defined by a pair $(\calK, E_{\adm})$, where 
$\calK$ is a family of disjoint subsets of $V$ (not necessarily a partition), and $E_{\adm} \subseteq \binom{V}{2}$ is a set of pairs such that for every $uv \in E_\adm$, at least one of $u$ and $v$ is not in $\union_{K \in \calK}{K}$.

Each set $K \in \calK$ is called an {\em atom}. We use $V_\calK:= \union_{K \in \calK}K$ to denote the set of all vertices in atoms. A pair $(u, v)$ between two vertices $u,v$ in a same $K \in \calK$ is called an {\em atomic edge}. A pair that is neither an atomic nor an admissible edge is called a non-admissible edge. 
\end{definition}

Therefore, in a preclustered instance, the set $V \choose 2$ is partitioned into atomic, admissible and non-admissible edges. By the definition of $E_\adm$, a pair $(u, v)$ between two different atoms is non-admissible.

\begin{definition}
Given a preclustered instance $(\calK, E_\adm)$ for some Correlation Clustering instance $(V, E^+ \cup E^-)$,  a partition $\calC$ of $V$ (also called a clustering) is called {\em good} with respect to $(\calK, E_{\adm})$ if 
\begin{itemize}
    \item $u$ and $v$ are in the same set (or cluster) in $\calC$ for an atomic edge $(u, v)$, and
    \item $u$ and $v$ are not in the same set (or cluster) in $\calC$ for a non-admissible edge $(u, v)$.
\end{itemize}
\end{definition}
That is, a good clustering can not break an atom, or join a non-admissible pair. As a result, two atoms can not be in the same cluster as the edges between them are non-admissible.  The main theorem we prove for our preclustering step in Section~\ref{sec:preprocess} is the following: 
\begin{restatable}{theorem}{preprocessdwrapper}
{\rm (Preclustering)}
\label{thm:preprocessed-wrapper}
For any sufficiently small $\delta > 0$, there exists a polynomial-time algorithm that, given a Correlation Clustering instance $(V, E^+ \cup E^-)$ with the optimal value $\opt_0$, produces a preclustered instance $(\calK, E_{\adm})$ such that 
\begin{itemize}
    \item there exists a good clustering whose cost is at most $(1 + \delta) \opt_0$, and 
    \item $|E_{\adm}| \leq O(\opt_0 / \delta^{12})$. 
\end{itemize}
\end{restatable}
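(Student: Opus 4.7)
The plan is to follow and quantitatively refine the preclustering pipeline of Cohen-Addad et al.~\cite{CLMNP21}, whose algorithm isolates ``almost-clique'' structures. The extra work is to calibrate the thresholds so that both properties—existence of a $(1+\delta)$-cost good clustering, and the sharp bound $|E_\adm| \leq O(\opt_0/\delta^{12})$—hold simultaneously. The central combinatorial notion is \emph{agreement}: two vertices $u,v$ agree (at scale $\delta^c$ for some $c$) if $|N^+(u) \triangle N^+(v)|$ is a $\poly(\delta)$-fraction of $\max\{d_u, d_v\}$. Atoms are, roughly, maximal sets of pairwise-agreeing vertices, while admissibility is a one-sided, weaker version of agreement.

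First I would run Algorithm~1 of~\cite{CLMNP21} with $\poly(\delta)$ thresholds to produce the family $\calK$ of atoms, and then, for each vertex $u$, define $N'_u$ to be the set of vertices $v$ whose $+$-neighborhood overlaps $u$'s up to a (slightly weaker) $\poly(\delta)$ threshold. We let $E_\adm$ consist of all pairs $\{u,v\}$ with $v \in N'_u$ or $u \in N'_v$ for which at least one endpoint lies outside $V_\calK$. A direct counting argument shows $|N'_u| \leq d_u/\delta^{O(1)}$: every $v \in N'_u$ shares a $(1-\poly(\delta))$-fraction of $N^+(u)$, so each contributes $\Omega(d_u)$ common $+$-incidences, and the total number of such incidences is $O(d_u^2/\delta^{O(1)})$.

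Second, I would prove the structural guarantee: there is a clustering $\calC^*$ with $\cost(\calC^*) \leq (1+O(\delta))\opt_0$ that is \emph{good} with respect to $(\calK, E_\adm)$. Following~\cite{CLMNP21}, atoms are defined so that (i) they are nearly cliques, so splitting one would pay many $-$edges internally, and (ii) they have small $+$-boundary, so no atom straddles two optimum clusters at large cost; local surgery on the optimum therefore places each atom inside a single cluster while losing only a $(1+O(\delta))$ factor. For non-admissible pairs, if $u,v$ lie in the same cluster of $\calC^*$, their $+$-neighborhoods disagree substantially by construction, so either $u$ or $v$ already incurs a $\poly(\delta)\cdot d_u$ per-vertex cost in $\calC^*$; reassigning one to a singleton charges this against existing cost, again absorbable into the $(1+O(\delta))$ slack.

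The main obstacle, and what dictates the exponent $12$, is the quantitative bound $|E_\adm| \leq O(\opt_0/\delta^{12})$. Combining $|N'_u| \leq d_u/\delta^{O(1)}$ with the constraint that at least one endpoint lies outside $V_\calK$ reduces the task to proving $\sum_{u \notin V_\calK} d_u \leq O(\opt_0/\delta^{O(1)})$. This is the new estimate one must extract from the \cite{CLMNP21} construction: any vertex $u$ not captured by an atom must violate one of the agreement conditions on a $\poly(\delta)$-fraction of its $+$-neighbors, which forces an $\Omega(\poly(\delta)\cdot d_u)$ contribution to $\opt_0$ from edges incident to $u$ (otherwise one could have formed an atom containing $u$). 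Summing over all such vertices yields the claim, and carefully tracking the polynomial dependencies through the agreement threshold, the admissibility threshold, and the per-vertex cost lower bound is where the exponent $12$ appears. Aligning all three constants so that the same preclustering serves both the structural guarantee and the $|E_\adm|$ bound is the hardest part of the argument.
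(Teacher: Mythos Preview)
Your high-level plan---run Algorithm~1 of \cite{CLMNP21} with threshold $\eps_q=\delta^2$, define admissible pairs, prove both bullets---matches the paper's. But there is a genuine gap in the step you single out as the main obstacle. The per-vertex claim ``$u\notin V_\calK$ forces an $\Omega(\poly(\delta)\cdot d_u)$ contribution to $\opt_0$ from edges incident to $u$'' is false. Consider a $+$-clique $K$ of size $d$ containing $u$, where every $v\in K\setminus\{u\}$ additionally has $d$ fresh pendant $+$-neighbors (an independent set attached only to $v$). Every vertex of $K$ disagrees with all of its $+$-neighbors, so $u$ is light and lies in no atom; yet the optimum takes $K$ as one cluster and all pendants as singletons, and then \emph{no} edge incident to $u$ is violated. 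The global inequality $\sum_{u\notin V_\calK} d_u=O(\opt_0/\poly(\delta))$ does hold here (the pendant edges carry the cost), just not by your mechanism. The parenthetical ``otherwise one could have formed an atom containing $u$'' does not rescue this: atoms are defined purely combinatorially by the \cite{CLMNP21} procedure, with no reference to $\opt_0$.

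The paper's route to this inequality goes through an intermediate structural step (Lemma~\ref{lem:neighbor-constrained-sol}): starting from an optimum that already respects atoms, it iteratively peels vertices---or, crucially, whole atoms, so that goodness is preserved---out of clusters, to produce a good clustering $\calC_2$ of cost at most $(1+\eps)\opt_0$ in which every non-singleton cluster $C$ satisfies (i) each $v\in C$ has at least $(1+\eps)|C|/2$ $+$-neighbors in $C$, and (ii) $|C|\ge \eps d_v$. These two properties do double duty. First, they force every same-cluster pair in $\calC_2$ to be degree-similar with at least $\eps|C|\ge \eps_q\min\{d_u,d_v\}$ common degree-similar neighbors, which is precisely the paper's admissibility condition (common degree-similar neighbors, not the ``$(1-\poly(\delta))$-overlap'' you wrote); this gives the first bullet. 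Second, they drive a \emph{cluster-level} charging for the cost lower bound: call a cluster a high contributor if its outgoing $+$-edges or internal $-$-edges exceed $\eps_q^4|C|^2$; in any non-high-contributor cluster the low-contributor vertices are pairwise in agreement, hence heavy, hence inside a single atom. So non-atomic vertices live only in high-contributor clusters or are themselves high contributors, and using (ii) to convert $|C|$ to $d_v$ yields $\cost(\calC_2)\ge \eps_q^3\sum_{v\notin V_\calK}d_v$. Combined with $d'_v\le 2\eps_q^{-3}d_v$, this gives $|E_\adm|\le O(\opt_0/\eps_q^6)=O(\opt_0/\delta^{12})$. The intermediate lemma, not a direct per-vertex charge, is the missing idea.
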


We remark that, after the preclustering, the benchmark clustering we are comparing to is a good clustering, but our algorithm does not need to find a good clustering. For example, the clustering we construct might join a non-admissible pair, or two atoms.
But we guarantee that an atom will not be broken by the clustering. 

Given the desired error parameter $\eps_0 > 0$ for the overall approximation factor in Theorem~\ref{thm:main}, 
let 
$\delta := \eps_0 / 4$, $\eps := \Theta(\delta^{12} \eps_0) = \Theta(\eps_0^{13})$ and perform the preclustering with parameter $\delta$ (let $\opt$ be the cost of the best good clustering with respect to the preclustering). 
Then, producing a clustering $\calC$ whose cost (denoted as $\cost(\calC)$) is at most $1.73\opt + \eps |E_{\adm}|$ guarantees that 
\[
\cost(\calC) \leq 1.73\opt + 
\eps |E_{\adm}| 
\leq 1.73(1 + \delta)\opt + 
O(\eps / \delta^{12}) \opt_0 \leq (1.73+\eps_0) \opt_0.
\]

Therefore, given a preclustered instance $(V, E^+ \cup E^-)$ together
with $(\calK, E_{\adm})$, it suffices to compute a clustering $\calC$
with $\cost(\calC) \leq 1.73\opt + \eps |E_{\adm}|$.

Our algorithm proceeds via the round-or-cut framework. 
Every clustering corresponds to a $0/1$-valued metric over $V$, where the distance between $u$ and $v$ indicates if $u$ and $v$ are cut in the clustering or not. We let $\calP \subseteq [0,1]^{V \choose 2}$ denote the convex hull of the metrics for all good clusterings. The central piece of the algorithm for the proof of Theorem~\ref{thm:main} is a rounding algorithm $\calA$ with violation detection. Formally, we prove the following theorem:
\begin{theorem}
    \label{thm:round-or-cut}
    Given a metric $x \in [0, 1]^{V \choose 2}$ over $V$, for which any two vertices in a same atom have distance $0$, and the two end points of any non-admissible edge have distance $1$, and $\eps > 0$, there is an $n^{O(1/\eps^4)}$ time algorithm $\calA$ that outputs one of the following two things:  
    \begin{itemize}
        \item a clustering for the instance whose cost is at most $1.73 \cdot\cost(x) + \eps |E_{\adm}|$, 
        \item a hyperplane separating $x$ and $\calP$ (i.e., a vector $w \in \R^{V \choose 2}$ and $b \in \R$ such that $w^T x' \geq b$ for every $x' \in \calP$, but $w^Tx < b$). Sometimes we also simply call the $(w, b)$ pair separating $x$ and $\calP$ a separation plane for $x$. 
    \end{itemize}    
\end{theorem}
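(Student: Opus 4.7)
The plan is to combine the set-based rounding of Theorem~\ref{thm:KT} and the pivot-based rounding of Theorem~\ref{thm:CLN} as a convex combination at the top level. I would fix a parameter $\alpha \in (0,1)$ and, given the input $x$, run the entire set-based rounding procedure with probability $\alpha$ and the entire pivot-based procedure with probability $1-\alpha$, returning whichever output (clustering or separating hyperplane) is produced by the chosen branch. By linearity of expectation, the expected cost contribution of each non-admissible, non-atomic edge $vw$ is then the convex combination of the per-edge bounds from the two theorems, plus an overall additive $O(\eps)|E_\adm|$ error inherited from the correlated rounding.

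The first task is to verify that this combined per-edge ratio is at most $1.73$ for a suitable $\alpha$. For a $-$edge $vw$, the ratio to the LP cost $1 - x_{vw}$ is $\alpha/(1+x_{vw}) + 2(1-\alpha)$, maximized at $x_{vw} = 0$ with value $2 - \alpha$. For a $+$edge, the ratio to $x_{vw}$ is
\[
h(x_{vw}) \;=\; \frac{2\alpha}{1+x_{vw}} \;+\; (1-\alpha) \cdot \min\{1.515 + x_{vw},\; 2\}.
\]
Splitting at $x_{vw} = 0.485$ and computing $h'$, one sees that $h$ is convex on $[0, 0.485]$ and monotone decreasing on $[0.485, 1]$, so its maximum is attained at $x_{vw} \in \{0, 0.485\}$, giving $\max\{\,1.515 + 0.485\alpha,\; 2 - 0.653\alpha\,\}$. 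Setting $\alpha \approx 0.443$ makes the first argument equal $1.73$ and the second $\approx 1.711$, and the $-$edge bound $2 - \alpha \approx 1.557$ is then slack; so every edge contributes at most $1.73$ times its LP cost in expectation.

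The second task is to embed this convex combination into the round-or-cut template already sketched in the paper. I would run the chosen procedure iteratively: maintain a shrinking vertex set $V'$; at each iteration, attempt to extend $x|_{V'}$ to a Sherali-Adams solution $(x|_{V'}, y)$ feasible for the LP required by the procedure; if the extension is found, use the one-step rounding of Theorem~\ref{thm:KT} or Theorem~\ref{thm:CLN} to extract a cluster $C$ and recurse on $V' \setminus C$; if not, invoke the ellipsoid-style separation oracle on the extension polytope to obtain a hyperplane separating $x|_{V'}$ from it, and lift it to $V$ by padding the coordinates touching $V \setminus V'$ with zeros. Since both LPs are valid relaxations of (the projection of) $\calP$, the lifted hyperplane remains a valid separator of $x$ from $\calP$.

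The hardest part will be making this lifting fully rigorous: infeasibility detected at an intermediate iteration must certify $x \notin \calP$ in the original instance, not merely that the random partial clustering constructed so far is inconsistent. I would address this by arguing that, whenever $x$ is an actual convex combination of good clustering metrics, the supporting Sherali-Adams distribution provides an explicit feasible extension at every iteration, with marginals that remain consistent after conditioning on the already-extracted clusters; so the algorithm only fails to find an extension when $x \notin \calP$. Summing the per-edge bound across iterations and absorbing the additive $O(\eps)|E_\adm|$ rounding errors (after rescaling the internal $\eps$-parameters of Theorems~\ref{thm:KT} and~\ref{thm:CLN}) then yields $\cost(\calC) \le 1.73 \cdot \cost(x) + \eps |E_\adm|$, as required.
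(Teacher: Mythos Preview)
Your per-edge calculation in the first paragraph is correct and essentially identical to the paper's: the paper also analyzes the convex combination $\alpha \cdot \frac{2}{1+x} + (1-\alpha)\min\{1.515+x,2\}$ for $+$edges and $\alpha + 2(1-\alpha)$ for $-$edges, observes that the $+$edge expression is maximized at $x\in\{0,0.485\}$, and picks $\alpha=0.42$ (you picked $\alpha\approx0.443$; either works for the $1.73$ target).

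There is a real gap, however, in how you combine the two procedures. You propose to flip a biased coin and run \emph{one} of $\calA_{\mathrm{set}}$ or $\calA_{\mathrm{pivot}}$, so that the cost bound holds only in expectation. The statement of Theorem~\ref{thm:round-or-cut} (and its use inside the ellipsoid loop) requires a deterministic guarantee: whenever a clustering is returned, its cost must be at most $1.73\cdot\cost(x)+\eps|E_\adm|$. With your scheme, if $x\in\calP$ then both branches must return clusterings, but the one you happened to run may individually exceed the bound even though the convex combination does not. The paper avoids this by running \emph{both} procedures; if either returns a separation plane it outputs that, and otherwise it has two clusterings $\calC_1,\calC_2$ and outputs the cheaper one. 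Your own per-edge inequality gives $0.42\cdot\cost(\calC_1)+0.58\cdot\cost(\calC_2)\le 1.73\cdot\cost(x)+\eps|E_\adm|$ deterministically, hence $\min\{\cost(\calC_1),\cost(\calC_2)\}$ satisfies the required bound. This is a one-line fix, but it is not the algorithm you described.

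Your second and third paragraphs are unnecessary and make the argument look harder than it is. Theorems~\ref{thm:KT} and~\ref{thm:CLN} are stated as black boxes that take the full vector $x\in[0,1]^{\binom{V}{2}}$ and already handle the iterative LP extension, the correlated rounding, and the separation-plane extraction (including the ``lifting'' from $V'$ to $V$, via Lemma~\ref{lemma:find-cut}). You should simply invoke them; the proof of Theorem~\ref{thm:round-or-cut} given those two theorems is the five-line argument above, with no need to reopen their internals or worry about conditioning on partial clusterings.
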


It is well known~\cite{GLS12} that once we have the algorithm $\calA$ in Theorem~\ref{thm:round-or-cut}, we can combine it with the ellipsoid method to find the desired approximate solution for the preclustered instance
in polynomial time, which proves Theorem~\ref{thm:main}, with a final running time of $n^{O(1/\eps^4)} = n^{O(1/\eps_0^{52})}$. By enumeration or binary search, we assume we are given the value of $\opt$, and our goal is to find a clustering with cost at most $1.73\opt + \eps |E_{\adm}|$. Let $\calP' := \calP \cap \{ x : \cost(x) \leq \opt \}$ be the convex hull of good clusterings of cost at most $\opt$. $\calP'$ is non-empty since there exists a good clustering of cost at most $\opt$. Then we take an ellipsoid containing the $\calP'$. In every iteration, we take the center $x$ of the ellipsoid, and run the algorithm in Theorem~\ref{thm:round-or-cut} over this $x$. If the algorithm returns a clustering, then the cost of the clustering is at most $1.73\opt + \eps|E_{\adm}|$ and we are done. Otherwise, it returns a hyperplane separating $x$ and $\calP'$, which breaks the ellipsoid into two parts, one containing $x$, and the other containing $\calP'$. We define a new ellipsoid that contains the latter part, and repeat. Since $\calP'$ is nonempty, the algorithm will successfully output a desired clustering in polynomial number of iterations. Thus our goal in this section becomes to prove Theorem~\ref{thm:round-or-cut}.

A useful tool is the following lemma that shows we can try to extend $x$ to any domain. Suppose $D$ is any domain of variables, and $\calQ \subseteq [0, 1]^{{V \choose 2}} \times \R^D$ is a polytope such that the projection of $\calQ$ to coordinates in $V \choose 2$ contains $\calP$.  If $x$ is also in the projection, namely if we have $(x, y) \in \calQ$ for some $y \in [0, 1]^D$, then we can successfully extend $x$ to $(x, y)$ and use $y$ in our rounding algorithm.  Otherwise, we can output a separation plane for $x$ using the following lemma:
\begin{lemma}
    \label{lemma:find-cut}
    Suppose $x$ is not in the projection of $\calQ$; namely for every $y \in \R^D$, we have $(x, y) \notin \calQ$. Then in time polynomial in the description of $\calQ$, we can find a separation plane $(w, b)$ for $x$.
\end{lemma}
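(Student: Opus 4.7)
The plan is to reduce this statement to the standard correspondence between infeasibility of a linear system and a Farkas-type duality certificate, and to observe that such a certificate lives only in the $x$-coordinates and therefore gives the desired separating hyperplane. Write $\calQ$ via an explicit system of linear inequalities, say $\calQ = \{(x', y') \in \R^{\binom{V}{2}} \times \R^D : Ax' + By' \leq c\}$. The hypothesis that $x$ is not in the projection of $\calQ$ translates exactly into the infeasibility of the linear system $B y \leq c - Ax$ in the unknown $y$.

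First, I would run a polynomial-time LP algorithm (the ellipsoid method, or any interior-point method) on this feasibility LP. Since the system is infeasible, standard LP duality produces in polynomial time a nonnegative vector $\lambda \geq 0$ satisfying $\lambda^T B = 0$ and $\lambda^T(c - Ax) < 0$; this is precisely the Farkas certificate of infeasibility. Classical bit-complexity bounds for polynomial-time LP solvers guarantee that the encoding length of $\lambda$ is polynomial in the encoding length of $A$, $B$, $c$, and $x$, so this step runs in time polynomial in the description of $\calQ$.

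Next, I would set $w := -A^T \lambda \in \R^{\binom{V}{2}}$ and $b := -\lambda^T c \in \R$ and verify the two required inequalities. For any $x' \in \calP$, the assumption that $\calP$ is contained in the projection of $\calQ$ supplies a witness $y'$ with $Ax' + By' \leq c$; multiplying by $\lambda^T \geq 0$ and invoking $\lambda^T B = 0$ gives $\lambda^T A x' \leq \lambda^T c$, which rearranges to $w^T x' \geq b$. On the other hand, the certificate inequality $\lambda^T(c - Ax) < 0$ rearranges to $w^T x < b$. Thus $(w, b)$ is the required separation plane, and the fact that the coefficient vector $w$ has no $y$-component is automatic from $\lambda^T B = 0$.

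I would not expect a substantial obstacle here; the only thing that requires any care is ensuring the Farkas certificate can be extracted with polynomially-bounded bit-complexity, which follows from any standard polynomial-time LP solver. In essence, the lemma is a clean packaging of Farkas' lemma in the language of projections of polytopes, and what makes it useful in this paper is that the auxiliary variables $y$ encode the correlated/set-based rounding data whose consistency with the metric $x$ we need to check in each iteration of the round-or-cut loop.
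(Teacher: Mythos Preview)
Your proof is correct and follows essentially the same approach as the paper: both write $\calQ$ via a linear system, observe that fixing $x$ yields an infeasible LP in $y$, extract a Farkas certificate $\lambda \geq 0$ with $\lambda^T B = 0$ and $\lambda^T(c - Ax) < 0$, and verify that this certificate furnishes the separating hyperplane in the $x$-coordinates alone. The paper's argument is slightly terser (it normalizes to $u^T(c-Ax) = -1$ and leaves $w,b$ implicit), but the content is identical.
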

\begin{proof}
    Consider the LP $(A, A')\begin{pmatrix}x\\y\end{pmatrix} \leq c$ that defines the polytope $\calQ$, where $x$ corresponds to variables in $V \choose 2$ and $y$ corresponds to variables in $D$, and the number of columns of $A$ is the same as the number of rows of $x$. Now suppose for a fixed $x$, the LP is infeasible. That is, the LP $A'y \leq c - Ax$ (with variables $y$) is infeasible.  By LP duality, we can find a vector $u$ with non-negative entries such that $u^TA' = 0$ and $u^T(c - Ax) = -1$. On the other hand, for every $x'$ that is in the projection of $\calQ$,  we have $u^T(c - Ax') \geq 0$: There exists some $y$ with $(A, A')\begin{pmatrix}x'\\ y\end{pmatrix} \leq c$, which is equivalent to $A'y \leq c - Ax'$, which implies $0 = u^TA'y \leq u^T(c - Ax')$.  This gives us a separation plane between $x$ and $\calP$.  Moreover, the plane can be found in time polynomial in the size of the LP for $\calQ$.
\end{proof}

In the algorithm that proves Theorem~\ref{thm:round-or-cut}, we generate two clusterings using two different procedures ({\em set-based rounding} and {\em pivot-based rounding}) and output the better of the two clusterings. The properties of the two procedures are described in the following theorems.
In both theorems, the input is the same as that from Theorem~\ref{thm:round-or-cut}.
\begin{restatable}{theorem}{thmkt}{\rm (Set-based Rounding)}
    \label{thm:KT}
    There is an $n^{O(1/\eps^4)}$-time procedure $\calA_{\mathrm{set}}$ that either outputs a separation plane for $x$, or a clustering $\calC$ such that
    \begin{align*}
        \cost(\calC) \leq \sum_{vw \in E^+} \frac{2x_{vw}}{1+x_{vw}} \quad + \quad \sum_{vw \in E^-} \frac{1-x_{vw}}{1+x_{vw}}  \quad + \quad \eps \cdot |E_\adm|.
    \end{align*}    
\end{restatable}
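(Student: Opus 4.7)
The plan is to follow the Kleinberg--Tardos template from the introduction, replacing the exponentially many $z_S$ variables by a Sherali--Adams moment sequence $\{y_S^s\}$ and realizing ``draw a whole cluster'' via the Raghavendra--Tan correlated rounding. Concretely, in each iteration I let $V'$ denote the currently un-clustered set (initially $V$) and define a polytope $\calQ$ over variables $y_S^s$ for $s\in[|V'|]$ and $S\subseteq V'$ with $|S|\le 1/\eps^{O(1)}$, imposing the Sherali--Adams consistency constraints, the partition equalities $\sum_s y^s_{\{v\}}=1$, the cluster-size equalities $\sum_{a\in V'} y^s_{\{a\}}=s\cdot y^s_\emptyset$, and the link $\sum_s y^s_{\{u,v\}}=1-x_{uv}$ to the input metric. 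Every good clustering of $V'$ induces a feasible point in $\calQ$, so I attempt to extend $x$ to some $(x,y)\in\calQ$; if no such extension exists, Lemma~\ref{lemma:find-cut} produces a hyperplane separating $x$ from $\calP$, which I output.

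Given $(x,y)\in\calQ$, I build one cluster per iteration by a three-step sample: draw $s$ with probability $y^s_\emptyset/Y$, where $Y:=\sum_{s'}y^{s'}_\emptyset$; draw a pivot $a$ conditional on $s$ with probability $y^s_{\{a\}}/(s\cdot y^s_\emptyset)$; then invoke the Raghavendra--Tan lemma on the conditional distribution $y^s_{\{a,\cdot\}}/y^s_{\{a\}}$ to draw a set $C\ni a$ of size about $s$ whose pair marginals satisfy $\Pr[u,v\in C\mid s,a]=y^s_{\{a,u,v\}}/y^s_{\{a\}}$ up to a small additive correlated-rounding error. I output $C$ as a cluster, delete it from $V'$, re-solve $\calQ$ on the shrunken set, and iterate; infeasibility at any later iteration is likewise converted into a separating hyperplane via Lemma~\ref{lemma:find-cut}.

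For the per-iteration analysis, direct summation of the definitions gives $\Pr[v\in C]=1/Y$ and $\Pr[\{v,w\}\subseteq C]=(1-x_{vw})/Y$ (modulo the rounding error), so an edge $vw$ is decided this iteration with probability $(1+x_{vw})/Y$, and it is decided incorrectly with probability $2x_{vw}/Y$ for $vw\in E^+$ and $(1-x_{vw})/Y$ for $vw\in E^-$. Since $vw$ is charged the first time either endpoint is clustered, its expected contribution to $\cost(\calC)$ telescopes to the ratios $2x_{vw}/(1+x_{vw})$ and $(1-x_{vw})/(1+x_{vw})$, which are exactly the main terms of the theorem. Non-admissible pairs have $x_{vw}\in\{0,1\}$ enforced by the preclustering and are therefore handled deterministically by the LP, so they incur no additive error; the rounding error is concentrated on admissible pairs and sums to the claimed $\eps|E_\adm|$ slack.

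The main obstacle will be taming the cumulative correlated-rounding error. The Raghavendra--Tan guarantee is a bound on $\sum_{u,v}\bigl|\Pr[u,v\in C]-y^s_{\{a,u,v\}}/y^s_{\{a\}}\bigr|$ that scales with the support size of the distribution, so a naive invocation on all of $V'$ at every iteration would add $\eps|V'|^2$ of slack per iteration and dwarf the optimum. The fix is to restrict the correlated rounding to $a$'s admissible neighborhood (and to treat each atom as a single unit throughout), so that the per-iteration error mass lies on admissible pairs only; the preclustering bound $|E_\adm|=O(\opt/\delta^{12})$ then caps the total additive loss at the claimed $\eps|E_\adm|$. The running time $n^{O(1/\eps^4)}$ is dominated by solving the $O(1/\eps^{O(1)})$-level Sherali--Adams LP once per iteration.
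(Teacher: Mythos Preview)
Your high-level plan matches the paper's: solve a size-aware Sherali--Adams LP on the remaining vertex set, sample a size $s$, a pivot $a$, then use Raghavendra--Tan to complete the cluster, and fall back to Lemma~\ref{lemma:find-cut} on infeasibility. The per-iteration probability calculations you sketch are correct and give the $\frac{2x_{vw}}{1+x_{vw}}$ and $\frac{1-x_{vw}}{1+x_{vw}}$ ratios.

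The gap is in the last paragraph, where you assert that restricting the correlated rounding to $N_{\adm}(a)$ and treating atoms as units makes the total error ``sum to the claimed $\eps|E_{\adm}|$ slack.'' Restricting to $N_{\adm}(a)$ makes the \emph{per-iteration} error at most $\eps_r\,|N_{\adm}(a)\cap V'|^2$, but a single admissible pair $vw$ can be hit by this error in every iteration until one of $v,w$ is clustered, and there is nothing in your LP that prevents many iterations in which the pivot has a large admissible neighborhood but only peels off a tiny cluster. Concretely, if the LP places mass on $y^s_a$ with $s$ only slightly larger than $|K_a|$, then the error $\eps_r|N_{\adm}(a)|^2$ is paid while almost no admissible edges get decided, so no telescoping is possible and the accumulated error can swamp $\eps|E_{\adm}|$.

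The paper closes this gap with an extra ingredient you do not have: it does \emph{not} write the LP against the good clustering $\calC'^*$ directly, but against a refinement $\tilde\calC$ obtained by splitting off any atom $K_u$ whose cluster has size in the interval $(|K_u|,\,|K_u|+\eps\,d_{\adm}(u))$. This justifies an additional LP constraint forcing $y^s_u=0$ whenever $|K_u|<s<|K_u|+\eps\,d_{\adm}(u)$. With that constraint, whenever a pivot $u$ draws a size $s>|K_u|$ one has $s-|K_u|\ge \eps\,d_{\adm}(u)\ge \eps\,|N_{\adm}(u)\cap V'|$, and since $s-|K_u|=\sum_{v\in N_{\adm}(u)\cap V'} y^s_{uv}/y^s_u$ by the size constraint, the quadratic error $\eps_r|N_{\adm}(u)\cap V'|^2$ is dominated by $\eps$ times the expected number of admissible edges decided in that iteration. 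That is what lets the error telescope to $\eps|E_{\adm}|$. The refinement also forces you to work with auxiliary variables $\tilde x_{vw}\ge x_{vw}$ (since $\tilde\calC$ cuts more pairs than $\calC'^*$) and to add a global constraint $\sum_{vw}(\tilde x_{vw}-x_{vw})\le \eps\sum_u d_{\adm}(u)$; the resulting $\tilde x$ versus $x$ discrepancy is paid for by a separate ``difference budget'' of $2\eps\,d_{\adm}(v)$ per vertex. Without this refinement-and-forbid mechanism (or an equivalent device), your error bound does not go through.
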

\begin{restatable}{theorem}{cln}{\rm (Pivot-based Rounding)}
    \label{thm:CLN}
    There is an $n^{O(1/\eps^{2})}$-time procedure $\calA_{\mathrm{pivot}}$ that either outputs a separation plane for $x$ or a clustering $\calC$ such that
    \begin{align*}
        \cost(\calC) \leq \sum_{vw \in E^+} \min\{1.515 + x_{vw}, 2\} \cdot x_{vw} \quad + \quad 2\sum_{vw \in E^-}(1-x_{vw})  \quad + \quad \eps \cdot |E_\adm|. 
    \end{align*}
\end{restatable}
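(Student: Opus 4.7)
The proof adapts the pivot-based correlated rounding of \cite{CLN22}, using the preclustering both to simplify the algorithm and to strengthen its guarantees into a per-edge bound. First, I define an extended polytope $\calQ$ consisting of $(x', y)$ satisfying $O(1/\eps^2)$-round Sherali--Adams constraints on variables $\{y_S\}$ for $|S| \le O(1/\eps^2)$, together with the hard constraints $y_{uv} = 1$ for every atomic pair and $y_{uv} = 0$ for every non-admissible pair. Using Lemma~\ref{lemma:find-cut}, I either extend the input $x$ to a feasible $(x, y) \in \calQ$ or return a separation plane for $x$. In the feasible case, the Raghavendra--Tan correlated rounding \cite{RT12} ensures that for any set $S$ of $O(1/\eps^2)$ vertices, I can sample a joint labelling whose induced pair-marginals match $y$ up to an additive $\eps$ error.

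Given the extended solution, the rounding is iterative: in each round I sample a pivot atom $K$, and then use the correlated rounding to jointly decide which remaining atoms join $K$'s cluster, where the marginal joining probabilities are governed by the rounding function $f^+$ from \cite{CMSY15, CLN22} applied to the LP distances between atoms. Atoms are always handled as units, both as pivots and as members. The cluster is committed and removed, and the process repeats. Since the correlated rounding error only accumulates over admissible pairs -- atomic pairs have $y_{uv} = 1$ and non-admissible pairs have $y_{uv} = 0$, both integral -- the total additive loss is $O(\eps)\cdot |E_{\adm}|$.

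The analysis proceeds edge-by-edge. For each $+$edge $vw$, I account for the probability that $v$ and $w$ end up in different clusters, decomposed over the pivot atom that decides the edge. The preclustering eliminates the short/long edge corner cases that forced \cite{CLN22} to do separate casework, so I can apply the ``ideal'' triangle-based analysis of \cite{CLN22} directly. The worst-case triangle bound of $2$ immediately yields the trivial $2 x_{vw}$ and $2(1-x_{vw})$ terms in the guarantee.

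The main obstacle, and the novel step, is obtaining the refined bound $(1.515 + x_{vw})\cdot x_{vw}$ for $+$edges with small $x_{vw}$. The key change is to perform the charging edge-by-edge rather than triangle-by-triangle: for a fixed $+$edge $vw$, I integrate over all potential pivots $p$ with weights coming from the correlated-rounding marginals, using Sherali--Adams consistency on small sets containing $\{v, w, p\}$ to replace worst-case per-triangle bounds with averaged bounds. Combined with the fact that when $x_{vw}$ is small the $+$neighborhoods of $v$ and $w$ must overlap heavily in the LP, this yields the improved linear-in-$x_{vw}$ dependence. Taking the minimum with the trivial $2x_{vw}$ bound gives the stated $\min\{1.515 + x_{vw}, 2\}\cdot x_{vw}$ guarantee, while the $-$edge bound stays at $2(1-x_{vw})$ from the ideal analysis.
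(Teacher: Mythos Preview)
Your high-level framing (extend $x$ to Sherali--Adams $y$, run correlated pivot rounding, charge errors to admissible edges) is right, but two concrete pieces of the actual argument are missing, and one of your claimed steps is not how the bound is obtained.

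First, you omit the \textbf{Cleanup} subroutine, and it is not optional. The correlated-rounding error when $p\in K$ is the pivot is of order $\eps_r\,|N^+_{\adm}(p)|^2$, and nothing about ``errors only accumulate on admissible pairs'' bounds this by the $\eps$-budget on admissible edges incident to the \emph{decided} pair $vw$. The paper controls this by checking, before pivoting, whether removing an atom $K$ as its own cluster already costs no more than the budget it releases; if so, $K$ is peeled off deterministically. The point is that if $K$ \emph{survives} this test, one can deduce that the admissible $+$edges out of $K$ have average $x$-value at most $2/3$, which forces the expected error budget released when such an edge is decided to be $\Omega(\eps)$, enough to absorb the $\eps_r|N|^2$ error term on an atom-by-atom basis. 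Without Cleanup (or some substitute giving the same inequality), the error-handling step does not close.

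Second, the $\min\{1.515+x_{vw},2\}$ coefficient is \emph{not} obtained by fixing $vw$ and averaging over pivots, and it does not rely on any ``$+$neighborhoods of $v$ and $w$ overlap when $x_{vw}$ is small'' statement. The paper assigns each edge its own budget $f(x_{vw})\cdot x_{vw}$ with $f(x)=\min\{1.515+x,2\}$ for $+$edges and $2(1-x_{vw})$ for $-$edges, and then proves the per-triangle inequality $\cost^i(a,b,c)\le \lp^i(a,b,c)$ for \emph{every} triangle, using only $y_{ab},y_{bc},y_{ca},y_{abc}$ and the signs. The constant $1.515$ falls out of the $++-$ case analysis: it is chosen so that $\min\{1.515+x,2\}\ge(-1+4x-2x^2)/x^2$ on $[0,1/2]$, which is exactly what is needed for the worst sub-case to be nonnegative. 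So the novelty is ``per-edge \emph{budgets}, per-triangle \emph{charging}'', not ``per-edge charging via averaging over pivots''. Your proposed averaging argument is not shown to yield this coefficient, and the paper does not use it. (Relatedly, there is no separate rounding function $f^+$ in this algorithm: $-$neighbors join with probability $y_{pv}$ and $+$admissible neighbors are handled by the correlated rounding with marginals $y_{pv}$; the function $f$ appears only in the \emph{analysis} as a budget coefficient.)
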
 

It is simple to show that combining the two procedures proves Theorem~\ref{thm:round-or-cut}.

\begin{proof}[Proof of Theorem~\ref{thm:round-or-cut}]
Suppose that given $x \in [0, 1]^{\binom{V}{2}}$, both 
$\calA_{\mathrm{set}}$ and 
$\calA_{\mathrm{pivot}}$ return a solution $\calC_1$ and $\calC_2$ respectively; otherwise a separation plane is found. 

Consider the cost of $\calC_1$ times $0.42$ plus the cost of $\calC_2$ times $0.58$. Then the approximation ratio for any $+$edge is at most 
\begin{align}
\label{equ:final-ratio}
\max_{x \in [0,1]} 
\bigg( 0.42\frac{2}{1+x} + 0.58 \min(1.515+x, 2) \bigg) \leq 1.7257,
\end{align}
and the ratio for $-$edge is at most $0.42\cdot 1 + 0.58\cdot 2 = 1.58$. Therefore, the cost of the better one is at most $1.73\cost(x) + \eps |E_{\adm}| \leq 1.73\opt + \eps |E_{\adm}|$.

To see the ratio for $+$edges, notice that we only need to consider two $x$ values: $x = 0$ and $x=2-1.515=0.485$.  The function inside $\max(\cdot)$ in \eqref{equ:final-ratio} is decreasing for $x \in [0.485, 1]$ and convex for $x \in [0, 0.485]$.
\end{proof}

Theorem~\ref{thm:KT} and~\ref{thm:CLN} will be proved in Section~\ref{sec:KT} and~\ref{sec:CLN} respectively.

    \section{Preclustering}
\label{sec:preprocess}
In this section, we show how to find a preclustered instance and prove Theorem~\ref{thm:preprocessed-wrapper}. 
Given a Correlation Clustering instance $I = (V, E^+ \cup E^-)$ on a set of vertices $V$, we let $G = (V, E^+)$ be the graph of $+$edges, and $d_v$ and $N_v$ be respectively the degree and neighbor set of $v$ in $G$.  For reasons that we will note later on, 
 we assume that each vertex has a self-loop \pedge. (Note that this does not affect the cost of clustering, as
a self-loop is never cut by a clustering.)  This assumption implies that $v \in N_v$, and we have $d_v =|N_v|$.  Moreover, the number of $+$neighbors of $v$ equals $d_v$.  Notice that without loss of generality, we can assume that each vertex $v$ in the input instance has at least one proper $+$neighbor (i.e., $d_v \geq 2$); otherwise, we would clearly put that vertex in its own (singleton) cluster and solve the remaining instance.
Let $\epsq < 10^{-8}$ be a fixed constant, and let $\epsfour = \sqrt{\epsq}$.

    The goal of this section is to prove the following theorem. 
    \begin{theorem}
    \label{thm:preprocessed} 
        Given a Correlation Clustering instance $G$, we can in polynomial time construct a preclustered instance $(\calK, E_\adm)$ 
        with the following two properties:
        \begin{itemize}
            \item $\cost(\calC^*_{\prepinstance})$ is at most $(1+\epsfour)$ times the cost of the optimum clustering for $G$,
            \item $\cost(\calC^*_{\prepinstance})$ is at least $(\epsq^6/2) \cdot |E_{\adm}| = \epsa |E_{\adm}|$,
        \end{itemize}
        where $\calC^*_{\prepinstance}$ denotes a good (but unknown) clustering for $(\calK, E_\adm)$ of minimum cost.
    \end{theorem}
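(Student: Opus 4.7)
The plan is to follow the agreement-based preclustering framework of~\cite{CLMNP21} and extend it with an explicit construction of $E_\adm$ together with a lower bound on $\cost(\calC^*_{\prepinstance})$ in terms of $|E_\adm|$. First I would run Algorithm~1 of~\cite{CLMNP21} essentially unchanged to obtain the atom family $\calK$; each atom is an almost-clique with small $+$edge expansion. Two guarantees from their analysis will be reused: (a) there exists a clustering of cost at most $(1+O(\epsq))\opt_0$ in which each $K \in \calK$ is contained in a single cluster, and (b) every vertex $v \notin V_\calK$ makes the optimum pay $\Omega(\epsq \cdot d_v)$, so $\sum_{v \notin V_\calK} d_v \leq O(\opt_0/\epsq)$.

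Next I would define the admissible neighborhood $N'_v$ of each vertex $v$ as the set of vertices $u$ whose $+$neighborhood agrees with $N_v$ to within a $\poly(\epsq)$ fraction of $d_v$, excluding vertices in the same atom as $v$. Setting $E_\adm := \{uv : u \in N'_v \text{ or } v \in N'_u,\ \{u,v\} \not\subseteq V_\calK\}$ then satisfies Definition~\ref{definition:prepro}. The key size bound $|N'_v| \leq d_v / \poly(\epsq)$ comes from a packing argument: if $v$ had more than $d_v / \poly(\epsq)$ pairwise near-duplicates, they would themselves satisfy the atom conditions of Algorithm~1 and be absorbed into an atom containing $v$, contradicting the output. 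By construction, every non-admissible non-atomic pair consists of two vertices whose $+$neighborhoods disagree heavily.

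For the upper bound in Theorem~\ref{thm:preprocessed}, I would start from the near-optimal clustering given by~(a) and split any cluster still containing a non-admissible pair by ejecting the offending vertices as singletons. Heavy neighborhood disagreement implies the optimum was already paying a $\poly(\epsq)$ fraction of each such vertex's degree, so the total extra cost is $O(\epsfour)\opt_0$, producing a good clustering of cost at most $(1+\epsfour)\opt_0$. For the lower bound, combining $|E_\adm| \leq \sum_{v \notin V_\calK} |N'_v| \leq \sum_{v \notin V_\calK} d_v / \poly(\epsq)$ with~(b) yields $|E_\adm| \leq O(\opt_0)/\poly(\epsq)$; since $\cost(\calC^*_{\prepinstance}) \geq (1 - O(\epsfour))\opt_0$, calibrating the $\poly(\epsq)$ exponents proves $\cost(\calC^*_{\prepinstance}) \geq (\epsq^6/2)|E_\adm|$. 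The main obstacle will be choosing a single agreement threshold that is simultaneously \emph{generous} enough for the splitting step, so the upper bound loses only $O(\epsfour)$, and \emph{restrictive} enough for the counting step, so $|E_\adm|$ stays below $O(\opt_0/\epsq^6)$; this joint calibration is exactly what fixes the exponent $6$ in $\epsa = \epsq^6/2$.
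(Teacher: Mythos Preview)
Your high-level plan matches the paper's: run the \cite{CLMNP21} atom algorithm, define $E_\adm$, prove the upper bound by exhibiting a near-optimal good clustering, and prove the lower bound by bounding $|E_\adm|$ against $\sum_{v\notin V_\calK} d_v$ and then against $\opt$. But the concrete choices you make at each step diverge from the paper in ways that create real gaps.

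\textbf{Definition of admissibility.} You define $N'_v$ via \emph{agreement} (small symmetric difference of neighborhoods). The paper instead declares $uv$ admissible when $u,v$ are \emph{degree-similar} ($\epsq d_v \le d_u \le d_v/\epsq$) and share at least $\epsq\min(d_u,d_v)$ common neighbors that are degree-similar to both. This is strictly weaker than agreement, and the difference matters on both sides of the theorem.

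\textbf{The counting bound $|N'_v|\le d_v/\poly(\epsq)$.} Your packing argument (``too many near-duplicates would be absorbed into an atom containing $v$'') does not go through. Transitivity only gives $k$-weak agreement, not the $1$-agreement that Algorithm~1 uses, so these vertices need not survive step~1 together; and even when they do form an atom, nothing forces $v$ to be in it. The paper's counting is different and crucially uses degree-similarity: each admissible partner of $v$ shares $\ge \epsq^2 d_v$ degree-similar neighbors with $v$; since each such neighbor $w$ has $d_w\le d_v/\epsq$, it can serve at most $d_v/\epsq$ partners, giving $|N'_v|\le d_v/\epsq^3$. Without the degree cap on the shared neighbors, a single high-degree $w\in N_v$ can witness unboundedly many partners, and your bound fails.

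\textbf{The upper bound.} You propose to eject vertices that participate in non-admissible pairs. The paper does not do this, and with an agreement-based $E_\adm$ it would be hard: two vertices in the same cluster of a near-optimal solution can easily fail agreement (their outside neighborhoods may be disjoint) while still having many common neighbors inside the cluster, so you could end up ejecting essentially everyone. The paper instead proves a structural lemma (Lemma~\ref{lem:neighbor-constrained-sol}): there is a near-optimal clustering $\calC_2$, respecting atoms, in which every $v$ in a non-singleton cluster $C$ has $\ge (1+\eps)|C|/2$ $+$neighbors in $C$ and $|C|\ge \eps d_v$. These two properties \emph{directly imply} that any two clustermates are degree-similar and share $\ge \eps|C|\ge \epsq\min(d_u,d_v)$ degree-similar common neighbors, so $\calC_2$ is automatically good for $(\calK,E_\adm)$ with no ejection step at all.

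\textbf{The lower bound.} Your claim (b), that every $v\notin V_\calK$ forces $\Omega(\epsq d_v)$ cost, is correct in spirit but is not a black-box fact from \cite{CLMNP21}; the paper proves it (with exponent $\epsq^3$, not $\epsq$) via a high/low-contributor decomposition applied to $\calC_2$. Once that and Lemma~\ref{lem:num-admissible} are in place, the chain $\cost(\calC_2)\ge \epsq^3\sum_{v\notin V_\calK}d_v \ge (\epsq^6/2)|E_\adm|$ gives the stated constant.

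In short, the missing idea is the \emph{degree-similar common-neighbor} definition of admissibility together with the structural Lemma~\ref{lem:neighbor-constrained-sol}; these two pieces are exactly what make both the size bound on $E_\adm$ and the ``good near-optimal clustering exists'' step go through simultaneously.
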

Note that Theorem~\ref{thm:preprocessed-wrapper} immediately follows from Theorem~\ref{thm:preprocessed} by letting $\eps = \sqrt{\eps_q} \leftarrow \delta$. 
  For the rest of the section, we prove Theorem~\ref{thm:preprocessed} by describing a procedure that takes a Correlation 
Clustering instance and outputs a preclustered instance.

\paragraph{Atomic Subpartitioning}
Our algorithm is similar to the algorithm of \cite{CLMNP21} and relies on the 
notion of weak agreement. Informally, we say that $u$ and $v$ are in agreement when their 
neighborhoods are almost identical, up to a tiny fraction of the neighbors.
We expect $u$ and $v$ to be treated similarly in a target optimal solution: either $u$ 
and $v$ are in the same cluster, or both form singleton clusters.

Our algorithms are parameterized by two constants $\beta$, $\lambda$ that will be determined later.
\begin{definition}[Weak Agreement~\cite{CLMNP21} ]\label{definition:agreement}
    Two vertices $u$ and $v$ are in \emph{$i$-weak agreement} if $|N_u \triangle N_v| < i \beta \cdot \max\{|N_u|, |N_v|\}$, where $N_u \triangle N_v$ denotes the symmetric difference between $N_u$ and $N_v$. If $u$ and $v$ are in $1$-weak agreement, we also say that $u$ and $v$ are in \emph{agreement}.
\end{definition}

\begin{remark} As noted earlier, we assume that each vertex has a self-loop \pedge, because we follow the assumptions of 
\cite{CLMNP21}.
\end{remark} 

\cite{CLMNP21} then provide Algorithm \ref{alg:admissible}, which we
apply with $\beta := \epsq$ and $\lambda  := \epsq$.
\begin{algorithm}[h!]
\begin{algorithmic}[1]
\caption{Atomic-Preclustering($G$) -- Algorithm 1 in~\cite{CLMNP21}\label{alg:admissible}}
  \State Discard all \pedges whose endpoints are not in agreement. \;
  \State Call a vertex \emph{light} if it has lost more than a $\lambda$-fraction of its $+$neighbors in the previous step. Otherwise call it \emph{heavy}.
    \State Discard all \pedges between two light vertices. 
  \State Let $\tG$ be the \emph{sparsified graph} on the remaining \pedges. 
  Let $C_1,\ldots, C_k$ be the connected components of $\tG$; each $C_i$ with $|C_i| \ge 2$ is defined as 
  an \emph{atom}.
  Any pair $u,v$ such that $u$ and $v$ are in the same atom is an atomic pair (or edge).
\end{algorithmic}
\end{algorithm}

We will make use of the following structural lemmas. 
\begin{lemma}[Lemma 3.4 in full version (arXiv) of \cite{CLMNP21}]
\label{lem:quasiclique-degree}
Let $\beta =\lambda =\epsq < 1/100$ in Algorithm~\ref{alg:admissible}.
    Let $C$ be a connected component of $\tG$ of size at least 2, then for each vertex $u$ in $C$,
    we have that the number of $+$neighbors of $u$ in $C$ is at least $(1-9\epsq)|C|$.
\end{lemma}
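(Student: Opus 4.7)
The plan is to fix $u \in C$ and analyze $|N_u \cap C|$ via a \emph{heavy anchor} inside $C$. Since $|C| \geq 2$, vertex $u$ has at least one neighbor in $\tG$, and since Step~3 of Algorithm~\ref{alg:admissible} deletes every edge between two light vertices, at least one of $u$ and this neighbor must be heavy; pick such a heavy vertex $h \in C$ (possibly $h = u$) and write $d := d_h$. The whole argument rests on combining a lower bound $|C| \geq (1-\eps_q)d$ with an upper bound $|C \setminus N_u| \leq O(\eps_q) d$.

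The lower bound is straightforward: because $h$ is heavy it loses at most $\lambda d = \eps_q d$ neighbors in Step~1 (those not in agreement with it), and Step~3 deletes no edge incident to a heavy vertex. Hence $h$ keeps at least $(1-\eps_q) d$ neighbors in $\tG$, all of them sitting in $C \cap N_h$. Moreover every such neighbor is in agreement with $h$, so its degree lies in $[(1-\eps_q)d,\, d/(1-\eps_q)]$, which gives uniform degree control over the ``core'' of $C$.

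The bulk of the work is upper bounding $|C \setminus N_h|$. For any $v \in C$, connectivity yields a shortest $\tG$-path $h = w_0, w_1, \ldots, w_k = v$, and the prohibition on edges between two light vertices from Step~3 forces a heavy vertex to occur at least every second step on this path. Chaining the agreement bound $|N_{w_{i-1}} \triangle N_{w_i}| < \eps_q \max(d_{w_{i-1}}, d_{w_i})$ along length-$2$ hops anchored at heavy vertices keeps $|N_v \triangle N_h| = O(\eps_q) d$ independently of path length, and since $v \in N_v$ via the self-loop, any $v \in C \setminus N_h$ is counted in $N_v \triangle N_h$. A charging of each such $v$ to a canonical heavy intermediate on its shortest path (so that no heavy intermediate is charged more than its $O(\eps_q) d$ neighborhood perturbation) then yields $|C \setminus N_h| \leq O(\eps_q) d$. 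Transferring to $u$ uses $|N_u \triangle N_h| \leq \eps_q d/(1-\eps_q)$, which is trivial if $u = h$ and otherwise follows from agreement along the $\tG$-edge $uh$, so that
\[
|N_u \cap C| \;\geq\; |C| - |C \setminus N_h| - |N_u \triangle N_h| \;\geq\; |C| - O(\eps_q) d \;\geq\; (1 - 9\eps_q)|C|,
\]
after absorbing the three $O(\eps_q)$-losses via $d \leq |C|/(1-\eps_q)$. The main obstacle is the bound $|C \setminus N_h| \leq O(\eps_q)d$: a naive union bound over length-$2$ paths in $\tG$ loses an extra factor of $d$ and only delivers $O(\eps_q d^2)$, so the crucial step is to exploit the absence of edges between two light vertices, together with the heavy-vertex degree concentration, to constrain every member of $C \setminus N_h$ to a fixed perturbation set around $N_h$ and avoid double counting.
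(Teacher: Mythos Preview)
This lemma is not proved in the present paper; it is quoted verbatim from \cite{CLMNP21}. So there is no in-paper argument to compare against, and I evaluate your attempt on its own.

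Your skeleton is sound: anchor at a heavy $h$ that is either $u$ itself or a $\tilde G$-neighbour of $u$, use $|N_u\cap C|\ge |C|-|C\setminus N_h|-|N_h\setminus N_u|$, and note $|C|\ge(1-\eps_q)d_h$. The entire argument then rests on the bound $|C\setminus N_h|\le O(\eps_q)d_h$, and here the proposal has a genuine gap. You offer two mechanisms for it and neither one delivers:
\begin{itemize}
\item ``Chaining agreement along length-$2$ hops keeps $|N_v\triangle N_h|=O(\eps_q)d$ independently of path length.'' Naive chaining gives $k\eps_q d$ on a path of length $k$, not $O(\eps_q)d$. The statement is in fact true, but only because the $\tilde G$-diameter of $C$ is at most $4$ --- this is exactly the second clause of Lemma~\ref{lem:quasiclique-neighbors}, a nontrivial structural fact that you neither cite nor reprove.
\item Even granting $|N_v\triangle N_h|=O(\eps_q)d$ for every $v\in C$, that is a \emph{per-vertex} bound and says nothing directly about the cardinality $|C\setminus N_h|$. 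Your charging scheme (``charge each $v\in C\setminus N_h$ to a canonical heavy intermediate'') cannot work as stated: there can be $\Theta(d)$ heavy vertices in $C$, each absorbing up to $O(\eps_q d)$ charge, which yields only the $O(\eps_q d^2)$ bound you yourself flag as inadequate. The closing sentence (``constrain every member of $C\setminus N_h$ to a fixed perturbation set around $N_h$'') names the desired conclusion rather than proving it.
\end{itemize}

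What is actually missing is an \emph{upper} bound $|C|\le(1+O(\eps_q))d_h$; once you have it, $|C\setminus N_h|\le |C|-|N_h\cap C|\le O(\eps_q)d_h$ is immediate from your own lower bound on $|N_h\cap C|$. One route: invoke Lemma~\ref{lem:quasiclique-neighbors} to get that every $v\in C$ is in $4$-weak agreement with the heavy $h$, hence $|N_v\cap N_h\cap C|\ge(1-O(\eps_q))d_h$; then double-count $G$-edges between $C$ and $N_h\cap C$. Each $v\in C$ contributes at least $(1-O(\eps_q))d_h$ such edges, while each $w\in N_h\cap C$ contributes at most $d_w\le(1+O(\eps_q))d_h$ and $|N_h\cap C|\le d_h$, so comparing the two sides forces $|C|\le(1+O(\eps_q))d_h$. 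Your write-up has the surrounding pieces but lacks this (or an equivalent) counting step, and without it the conclusion does not follow.
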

\begin{lemma}[Lemma 3.3 in full version (arXiv) of \cite{CLMNP21}]
\label{lem:quasiclique-neighbors}
Let $\beta =\lambda =\epsq < 1/100$ in Algorithm~\ref{alg:admissible}.
Then, for any $u,v$ in the same atom, we have that if $u$ or $v$ is heavy, 
then $u$ and $v$ are in $4$-weak agreement. Moreover, for any atom $K$,
the hop-distance in $K$ between any two vertices of $K$ is at most 4.
\end{lemma}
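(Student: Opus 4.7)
The plan is to prove the two claims of Lemma~\ref{lem:quasiclique-neighbors} in sequence: first establish the hop-distance bound using the fact that heavy vertices in $K$ dominate the component, then derive the $4$-weak agreement claim by chaining $1$-weak agreements along a short path.

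\textbf{Setup.} For any $w \in K$, let $H_w$ denote the $\tilde{G}$-neighborhood of $w$. If $w$ is heavy, no $\tilde{G}$-edge incident to $w$ is removed in step 3, so $H_w = \{ z \in N_w : z, w \text{ in agreement}\}$ and $|H_w| \ge (1-\lambda)|N_w|$. Moreover, since $w \in K$ and $K$ is a connected component of $\tilde G$, $H_w \subseteq K$, which gives the key degree bound $|N_w| \le |K|/(1-\lambda)$ for heavy $w \in K$. For light $w \in K$, step 3 removed all edges between light vertices, so (as $|K| \ge 2$) $w$ has a heavy $\tilde G$-neighbor $w'$; the agreement between $w, w'$ then yields $|N_w| \le |N_{w'}|/(1-\beta) \le |K|/((1-\lambda)(1-\beta))$. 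Combining with Lemma~\ref{lem:quasiclique-degree}, which gives $|N_w| \ge (1-9\epsq)|K|$, every vertex $w \in K$ satisfies $(1-9\epsq)|K| \le |N_w| \le |K|/((1-\epsq)^2)$, i.e.\ $|N_w| = (1 \pm O(\epsq))|K|$.

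\textbf{Hop-distance $\le 4$.} Fix $u, v \in K$. If both are heavy, then $|H_u|, |H_v| \ge (1-\lambda)(1-9\epsq)|K| > |K|/2$ since $\lambda = \epsq < 1/100$, so by inclusion–exclusion $H_u \cap H_v \neq \emptyset$; a common $\tilde G$-neighbor yields $\mathrm{dist}(u, v) \le 2$ (and $\le 1$ if already adjacent). If one of them, say $v$, is light and the other heavy, pick a heavy $\tilde G$-neighbor $w$ of $v$; the previous case applied to $u, w$ gives $\mathrm{dist}(u, w) \le 2$, so $\mathrm{dist}(u, v) \le 3$. If both are light, pick heavy $\tilde G$-neighbors $w_u, w_v$; the heavy–heavy case gives $\mathrm{dist}(w_u, w_v) \le 2$, whence $\mathrm{dist}(u, v) \le 1 + 2 + 1 = 4$. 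This proves the second assertion.

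\textbf{$4$-weak agreement when $u$ or $v$ is heavy.} WLOG $u$ is heavy, so by the case analysis above $\mathrm{dist}_{\tilde G}(u,v) \le 3$. Let $u = w_0, w_1, \ldots, w_k = v$ be a shortest $\tilde G$-path with $k \le 3$. Every consecutive pair is a $\tilde G$-edge, hence is in $1$-weak agreement, and the triangle inequality for symmetric differences gives
\begin{align*}
    |N_u \triangle N_v| \;\le\; \sum_{i=0}^{k-1} |N_{w_i} \triangle N_{w_{i+1}}| \;<\; \beta \sum_{i=0}^{k-1} \max\{|N_{w_i}|, |N_{w_{i+1}}|\}.
\end{align*}
By the Setup, $|N_{w_i}| \le |K|/(1-\epsq)^2$ for every $i$, while $\max\{|N_u|, |N_v|\} \ge (1-9\epsq)|K|$, so each $|N_{w_i}| \le (1 + O(\epsq)) \max\{|N_u|, |N_v|\}$. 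Summing $k \le 3$ terms yields $|N_u \triangle N_v| < 3\beta (1 + O(\epsq)) \max\{|N_u|, |N_v|\} < 4\beta \max\{|N_u|, |N_v|\}$ once $\epsq$ is small enough that the inflation factor is below $4/3$ (which $\epsq < 10^{-8}$ comfortably ensures).

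\textbf{Main obstacle.} The delicate point is that chaining $1$-weak agreements along a path naturally produces a bound in terms of $\sum_i \max\{|N_{w_i}|, |N_{w_{i+1}}|\}$, not in terms of $\max\{|N_u|, |N_v|\}$ alone. We must therefore argue that every intermediate degree along the path is tightly comparable to $\max\{|N_u|, |N_v|\}$, which in turn forces a two-step bound for light vertices that piggybacks on Lemma~\ref{lem:quasiclique-degree}. The $3\to 4$ slack in the coefficient is precisely what absorbs the $O(\epsq)$ degree-inflation factor; this is also the only place the specific constant $4$ (rather than $3$) in the statement is used.
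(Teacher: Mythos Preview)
The paper does not prove this lemma; it is quoted verbatim from \cite{CLMNP21} as a black-box structural fact, so there is no in-paper proof to compare against.

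Your argument is correct. Two remarks on how it relates to the material the paper does import. First, for the $4$-weak agreement claim you reprove by hand what the paper already records as Lemma~\ref{lem:transitivityagreement} (Fact~3.2 of \cite{CLMNP21}): once you have established that a heavy endpoint forces $\tilde G$-distance at most $3$, the path has at most four vertices with consecutive pairs in agreement, and Lemma~\ref{lem:transitivityagreement} immediately yields $4$-weak agreement without any degree bookkeeping. Your unpacked version is fine, but the cited fact is the one-line route. Second, and more substantively, you invoke Lemma~\ref{lem:quasiclique-degree} (Lemma~3.4 in \cite{CLMNP21}) to get $|N_w|\ge(1-9\epsq)|K|$, both for the $|H_u|>|K|/2$ pigeonhole in the hop-distance step and for the degree comparison in the agreement step. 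Within this paper that is harmless, since both lemmas are imported as independent facts. But note that in \cite{CLMNP21} Lemma~3.4 is numbered \emph{after} Lemma~3.3 and plausibly relies on it, so your derivation reverses the original dependency chain; it is a valid proof in the context of this paper but not a self-contained reconstruction of the source argument.
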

\begin{lemma}[Fact 3.2 of~\cite{CLMNP21}, second bullet]
\label{lem:transitivityagreement}
Let $k \in \{2, 3, 4, 5\}$ and $v_1 ,\ldots , v_k \in V$ be a sequence of vertices such that $v_i$ is in agreement with $v_{i+1}$ for $i \in \{1, \ldots , k - 1$\}. 
Then $v_1$ and $v_k$ are in $k$-weak agreement.
\end{lemma}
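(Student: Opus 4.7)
The plan is to prove the lemma by combining the triangle inequality for the symmetric difference with a uniform bound on how much neighborhood sizes can fluctuate along the chain. Throughout, $\beta = \epsq$ is tiny (in particular $\beta < 1/100$).

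First, I would apply the triangle inequality for symmetric differences iteratively to get
\[
|N_{v_1} \triangle N_{v_k}| \;\leq\; \sum_{i=1}^{k-1} |N_{v_i} \triangle N_{v_{i+1}}|,
\]
and then invoke the pairwise agreement assumption to bound each summand by $\beta \max\{|N_{v_i}|,|N_{v_{i+1}}|\}$. The desired bound $|N_{v_1} \triangle N_{v_k}| < k\beta \max\{|N_{v_1}|,|N_{v_k}|\}$ thus reduces to controlling the individual maxima by $\max\{|N_{v_1}|,|N_{v_k}|\}$ up to a factor of $k/(k-1)$.

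Second, I would prove a fluctuation estimate: for every $j \in \{1,\dots,k\}$,
\[
|N_{v_j}| \;\leq\; \frac{\max\{|N_{v_1}|,|N_{v_k}|\}}{(1-\beta)^{\min(j-1,\,k-j)}}.
\]
The key observation is that whenever $u$ and $v$ are in agreement one has $\bigl||N_u|-|N_v|\bigr| \leq |N_u \triangle N_v| < \beta \max\{|N_u|,|N_v|\}$, which rearranges to $\min\{|N_u|,|N_v|\} > (1-\beta)\max\{|N_u|,|N_v|\}$. Telescoping this single-step multiplicative bound along the shorter of the two sub-chains from $v_j$ to $v_1$ or to $v_k$ yields the displayed inequality.

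Third, I would combine the two ingredients. Because $k \leq 5$, we have $\min(j-1,k-j) \leq 2$ for every index $j$, so every term $\max\{|N_{v_i}|,|N_{v_{i+1}}|\}$ appearing in the sum is at most $M/(1-\beta)^2$, where $M := \max\{|N_{v_1}|,|N_{v_k}|\}$. This gives $|N_{v_1} \triangle N_{v_k}| < (k-1)\beta M /(1-\beta)^2$. Since $\beta = \epsq < 10^{-8}$, the elementary inequality $(k-1)/(1-\beta)^2 \leq k$ holds for all $k \in \{2,3,4,5\}$, and the proof is complete.

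The only mildly delicate point is the fluctuation bound, since neighborhood sizes can in principle grow toward the middle of the chain. The saving grace is that at each step the multiplicative slack is at most $(1-\beta)^{-1}$, and the bound $\min(j-1,k-j) \leq 2$ forced by $k \leq 5$ keeps the cumulative slack negligible for the tiny $\beta$ used in our preclustering. Everything else is direct manipulation of the definition of agreement.
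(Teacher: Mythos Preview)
Your proof is correct. The paper does not actually prove this lemma; it is quoted verbatim as Fact~3.2 of~\cite{CLMNP21} and used as a black box, so there is no in-paper argument to compare against. Your approach---triangle inequality for symmetric differences plus the multiplicative neighborhood-size control $\max\{|N_u|,|N_v|\} < \min\{|N_u|,|N_v|\}/(1-\beta)$ telescoped along the shorter sub-chain---is the natural one and matches the standard argument. The numerical check $(k-1)/(1-\beta)^2 \le k$ for $k\le 5$ and $\beta<10^{-8}$ is immediate, and the strict inequality in the conclusion is preserved because each pairwise agreement bound is already strict.
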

One of our key lemmas is based on Lemma 3.5 of the full arXiv version of~\cite{CLMNP21}.
\begin{lemma}[Atom Structure in Preclustering]
\label{lem:quasiclique}
    There exists an optimum solution $\calC_1$ for $I$ such that for any atom $K$, there exists a cluster 
    $C \in \calC_1$ containing $K$. 
\end{lemma}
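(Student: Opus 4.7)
The plan is a local exchange argument. Start from an arbitrary optimum clustering $\calC_0$ of $I$, and show that for each atom $K$ split across clusters of $\calC_0$ we may consolidate all of $K$ into one cluster without increasing the cost. Because the atoms produced by Algorithm~\ref{alg:admissible} are pairwise disjoint, fixing one atom does not disturb any other, so this process iterates to a clustering $\calC_1$ in which every atom lies in a single cluster and $\cost(\calC_1) \leq \cost(\calC_0) = \opt$.

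For a single exchange, fix a split atom $K$, let $C^* \in \calC_0$ maximize $|C^* \cap K|$, and let $\calC_0'$ be obtained from $\calC_0$ by moving every $u \in K \setminus C^*$ into $C^*$. I decompose $\cost(\calC_0') - \cost(\calC_0)$ by pair type. The intra-$K$ pairs yield a net \emph{gain}: by Lemma~\ref{lem:quasiclique-degree} the number of intra-$K$ $+$edges is at least $(1 - 9\epsq)\binom{|K|}{2}$, so the gain (restored $+$edges minus newly grouped $-$edges, both inside $K$) is at least $(1 - O(\epsq))$ times $\sum_{i < j} s_i s_j$, where $s_1, \dots, s_k$ are the sizes $|K \cap C|$ over the clusters $C \in \calC_0$ intersecting $K$.

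The pairs with exactly one endpoint in $K \setminus C^*$ contribute the \emph{losses}: new $-$edges between a moved $u$ and $C^* \setminus K$, and $+$edges from $u$ to its former cluster's non-$K$ vertices that are now cut. To bound them I use weak agreement: by Lemma~\ref{lem:quasiclique-neighbors}, combined with Lemma~\ref{lem:transitivityagreement} along the hop-$4$ diameter path inside $K$, every pair $u, u' \in K$ satisfies $|N_u \triangle N_{u'}| = O(\epsq)\max(|N_u|, |N_{u'}|)$. The key move is a witness-charging: for each $u \in K \setminus C^*$, pick any $u' \in K \cap C^*$ and observe that every loss-incurring pair $(u, w)$ in $\calC_0'$ corresponds, up to symmetric-difference slack $|N_u \triangle N_{u'}|$, to a pair already paid for by $u'$ in $\calC_0$ (since $u'$ is already in $C^*$). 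Using $|N_u| = O(|K|)$ for every $u \in K$ (from Lemma~\ref{lem:quasiclique-degree} plus weak agreement), the total loss is $O(\epsq) \cdot |K \setminus C^*| \cdot |K|$.

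The main obstacle is matching the loss against the gain in every split configuration. A short case analysis resolves this: if some $s_i \geq |K|/2$, choose $C^*$ to be that cluster, so the gain is at least $s_i \cdot \sum_{j \neq i} s_j \geq (|K|/2) \cdot |K \setminus C^*|$, which dominates $O(\epsq) \cdot |K| \cdot |K \setminus C^*|$ for $\epsq$ small enough; otherwise no $s_i$ exceeds $|K|/2$, forcing $\sum_{i < j} s_i s_j = \Omega(|K|^2)$, which likewise dominates the at-most-$O(\epsq)|K|^2$ loss. In either case the net change is nonpositive, giving $\cost(\calC_0') \leq \cost(\calC_0)$ and completing one iteration of the exchange; iterating over all split atoms produces the required $\calC_1$.
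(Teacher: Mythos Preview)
Your overall strategy (local exchange, intra-$K$ gain versus cross-boundary loss, then a balanced/unbalanced case split) matches the paper's, but the loss bound has a real gap.

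The witness-charging step establishes, for each moved $u \in K \setminus C^*$ and a fixed $u' \in K \cap C^*$, that
\[
(\text{loss at } u) \;\le\; (\text{cost paid by } u' \text{ in } \calC_0 \text{ to vertices outside } K) \;+\; |N_u \triangle N_{u'}|,
\]
with the second term being $O(\epsq)|K|$ by weak agreement. You then assert ``the total loss is $O(\epsq)\cdot|K\setminus C^*|\cdot|K|$,'' but that would need the \emph{first} term to be $O(\epsq)|K|$ as well, and nothing you have written bounds it. Concretely, the ``Loss type 1'' part --- new \medges from a moved $u$ to $C^*\setminus K$ --- is at most $|C^*\setminus K|$ per moved vertex, and the atom structure alone gives no control on $|C^*\setminus K|$; the fact that $(u',w)$ is already a cost in $\calC_0$ does not help, since that cost is still present in $\calC_0'$ and cannot be used to cancel the new cost $(u,w)$.

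The paper fills exactly this hole with an extra optimality-based step \emph{before} the exchange: since every $v\in K$ has at most $36\epsq|K|$ \pedges leaving $K$, if some $|C_i\setminus K| > 73\epsq|K|$ then the \medges between $C_i\cap K$ and $C_i\setminus K$ outnumber the \pedges, so splitting $C_i$ into $C_i\cap K$ and $C_i\setminus K$ strictly improves, contradicting optimality. Hence $|C_i\setminus K|\le 73\epsq|K|$ for all $i$, which immediately gives Loss type~1 $\le 73\epsq|K|\cdot|K\setminus C^*|$; Loss type~2 is already $\le 36\epsq|K|\cdot|K\setminus C^*|$ directly from the bound on outgoing \pedges (no witness needed). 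With this in hand your case analysis finishes the proof. As a minor side remark, your stated gain bound $(1-O(\epsq))\sum_{i<j}s_is_j$ is not literally valid when the split is very lopsided (the intra-$K$ \medges are bounded by $O(\epsq)|K|^2$, not by $O(\epsq)\sum s_is_j$), but your subsequent case split already compares against $(|K|/2)\cdot|K\setminus C^*|$ and $\Omega(|K|^2)$ respectively, so this is harmless once the loss bound is fixed.
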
 
\begin{proof}
Consider an atom $K$. By the description of Algorithm~\ref{alg:admissible}, 
there is a heavy vertex $v$ in $K$, because $K$ contains at least one edge and edges with two light endpoints have been removed.  Since $v$ is heavy, 
a $(1-\epsq)$-fraction of its $+$neighbors are also in $K$.  Moreover by Lemma~\ref{lem:quasiclique-neighbors}, every other vertex $u \in K$ is in 4-agreement with $v$.  The vertex $v$ has at most $\epsq|K|$ $+$neighbors outgoing from $K$.  By 4-agreement, for every other vertex $u \in K$, $u$ has at most $36 \epsq |K|$ $+$neighbors outgoing from $K$ (by the choice of $\epsq$).  
Moreover, by Lemma~\ref{lem:quasiclique-degree}, $u$ has at most $9\epsq |K|$ $-$neighbors in $K$.

Now, fix an optimum solution $\calC$ and assume toward contradiction
that there exists $C_1,\ldots,C_{\ell} \in \calC$, where $\ell > 1$,
such that $C_i \cap K \neq \emptyset$ for all $i$.  Recall that each
vertex $u \in K$ has at most $36\epsq|K|$ + neighbors outside of $K$
and so for each $C_i$, $|C_i\setminus{K}| \le 73\epsq |K|$ since
otherwise the number of \medges between $C_i\setminus{K}$ and $C_i
\cap K$ outnumbers the number of \pedges and the clustering where
$C_i$ is replaced with $C_i\setminus{K}$ and $C_i \cap K$ has cheaper
cost.

Similar to the proof of Lemma 3.5 in~\cite{CLMNP21}, we consider two
different cases, either there exists a cluster $C^*$ of size larger
than $(1-200\epsq)|K|$ or all the clusters have size at most
$(1-200\epsq)|K|$.

In the first case, observe that by the above discussion, $C^*$
contains at least $(1-273\epsq)|K|$ vertices of $K$ and so for each
$C_i \neq C^*$, we have that $|C_i \cap K| \le 273\epsq|K|$.  Since
each vertex $v \in C_i\cap K$ has at least $(1-9\epsq)|K|$ + neighbors
in $K$, we have that it has at least $(1-282\epsq)|K|$ + neighbors in
$C^*$.  Moreover, since $|C^*| \le (1+73\epsq)|K|$ by the above
discussion, the cost of moving $v$ from $C_i$ to $C^*$ is at most
$619\epsq|K|$ while the saving is at least $(1-282\epsq)|K|$ and since
$\epsq < 1/1000$, the change induces a positive saving and so a
contradiction to the fact that the solution is optimal.

We thus turn to the second case where all the clusters
$C_1,\ldots,C_{\ell}$ have size at most $(1-200\epsq)|K|$.  By Lemma
\ref{lem:quasiclique-degree}, each vertex $v \in C_i\cap K$ has at
least $(1-9\epsq)|K| - |C_i| \ge (1-9\epsq)|K| - (1-200\epsq)|K| =
191\epsq|K|$ + neighbors in $K\setminus{C_i}$. Hence $\calC$ separates
at least $191|K|^2\epsq/2$ \pedges.  On the other hand, consider
modifying $\calC$ by removing all the vertices in $K$ from their
current clusters and creating a cluster consisting only of $K$. The
new clustering will be saving at least $191|K|^2\epsq/2$ (by the above
computation) while paying an additional $10|K|^2\epsq$ cost for the
internal - edges and an additional $36\epsq|K|^2$ cost for the
outgoing + edges.  Since $46\epsq|K|^2 < 191|K|^2\epsq/2$, we have
that the resulting clustering is of cheaper cost, a contradiction that
concludes the proof.
\end{proof}

\paragraph{Admissibility}

A pair $(u,v)$ is \emph{degree-similar} if $\epsq d_v \le d_u \le d_v
/ \epsq$.  Let $\calK$ be the set of atoms output by
Algorithm~\ref{alg:admissible}.  A pair $(u,v)$ is \emph{admissible}
if (1) either $u$ or $v$ belongs to $V\setminus{V_\calK}$, and (2) it
is degree-similar, and (3) the number of common neighbors that are
degree-similar to both $u$ and $v$ is at least $\epsq \cdot
\text{min}\{d_u, d_v\}$.  We let $E_\adm$ be the set of all admissible
pairs in $V \choose 2$.  This finishes the description of our
preclustered instance $(\calK, E_\adm)$.  To this end, we let $G' =
(V, E')$ be the graph containing the set of atomic and admissible
edges, and for every $v \in V$, let $N'_v$ and $d'_v$ be the neighbor
set and degree of $v$ in $G'$.

\begin{lemma} 
\label{lem:num-admissible}
    The following two properties hold for $(\calK, E_\adm)$:
    \begin{enumerate}[label=(\roman*)]
         \item for every $v \in V$, we have $d'_v \le 2\epsq^{-3} \cdot d_v$, and 
        \item for every $uv \in E'$, we have $d_u \le 2 \epsq^{-1} \cdot d_v$.
    \end{enumerate} 
\end{lemma}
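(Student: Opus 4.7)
\textbf{Part (ii).} I split into two cases depending on whether $uv$ is admissible or atomic. If $uv$ is admissible, the degree-similarity condition (condition~(2) of the definition) immediately gives $d_u \le d_v/\epsq \le 2\epsq^{-1}d_v$. If $uv$ is atomic, then $u,v$ belong to a common atom $K$. Lemma~\ref{lem:quasiclique-degree} gives $d_u, d_v \ge (1-9\epsq)|K|$, and the argument used in the proof of Lemma~\ref{lem:quasiclique} (each vertex in an atom has at most $36\epsq|K|$ outgoing $+$neighbors) gives $d_u, d_v \le (1+36\epsq)|K|$. Hence $d_u/d_v \le (1+36\epsq)/(1-9\epsq) \le 2 \le 2\epsq^{-1}$ for $\epsq$ sufficiently small.

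\textbf{Part (i).} Here I bound the atomic and admissible contributions to $d'_v$ separately. If $v \in V_\calK$, say $v \in K$, then the number of atomic neighbors of $v$ is $|K|-1$, and from part (ii) applied within the atom we have $|K| \le d_v/(1-9\epsq) \le 2d_v$. If $v \notin V_\calK$ there are no atomic neighbors. So in all cases, atomic neighbors of $v$ number at most $2d_v$.

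For the admissible contribution, I will use a double-counting argument on pairs $(u,w)$, where $u$ ranges over admissible neighbors of $v$ and $w$ ranges over common $G$-neighbors of $u$ and $v$ that are degree-similar to both $u$ and $v$. Call the number of such pairs $T$. On one hand, by condition~(3) of admissibility and the fact that degree-similarity of $u$ and $v$ gives $\min\{d_u,d_v\} \ge \epsq d_v$,
\begin{align*}
T \;\ge\; \sum_{u \text{ admissible} \sim v}\epsq \cdot \min\{d_u,d_v\} \;\ge\; (\text{\# admissible neighbors of } v)\cdot \epsq^2 d_v .
\end{align*}
On the other hand, any $w$ that contributes must lie in $N_v$ and be degree-similar to $v$, so $d_w \le d_v/\epsq$; for such a fixed $w$, the number of candidate $u$'s is bounded by $|N_w|=d_w \le d_v/\epsq$ (since $u$ must be a $G$-neighbor of $w$). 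Summing over $w \in N_v$ gives $T \le d_v \cdot (d_v/\epsq) = d_v^2/\epsq$. Combining the two bounds yields at most $d_v/\epsq^3$ admissible neighbors. Adding the atomic contribution gives $d'_v \le 2d_v + d_v/\epsq^3 \le 2\epsq^{-3}d_v$ for $\epsq$ small enough.

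The main obstacle is the admissible-neighbor bound in part (i): it is exactly here that condition~(3) of admissibility, which at first looks like a technical artifact, plays a crucial role in limiting how many ``close'' vertices a given $v$ can have. The double-counting must be set up so that the degree-similarity of $w$ to $v$ is used to cap $d_w$, while degree-similarity of $u$ to $v$ is used to lower-bound $\min\{d_u,d_v\}$; being careful with these two uses is what makes the factor $\epsq^{-3}$ (rather than something worse) come out. The rest is bookkeeping, together with the observation from Lemma~\ref{lem:quasiclique-degree} that atoms are small relative to individual degrees.
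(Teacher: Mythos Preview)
Your proposal is correct and follows essentially the same approach as the paper. The double-counting argument for admissible neighbors in part~(i) is identical to the paper's (the paper phrases it with the set $D_u$ of degree-similar neighbors, but the arithmetic is the same), and the atomic bounds in both parts rest on the same underlying fact that $d_u = (1\pm O(\epsq))|K|$ for $u$ in an atom $K$; the only cosmetic difference is that you extract this from Lemma~\ref{lem:quasiclique-degree} plus the $36\epsq|K|$ outgoing bound inside the proof of Lemma~\ref{lem:quasiclique}, whereas the paper re-derives it via Lemmas~\ref{lem:quasiclique-neighbors} and~\ref{lem:transitivityagreement} (weak agreement along a short path in $\tilde G$).
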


\begin{proof} 
We consider an arbitrary vertex $u$.  Let us first analyse the number
of edges $u,v$ that are admissible edges.  By the definition of an
admissible pair, we have that $d_u \le \epsq^{-1} d_v$ for an
admissible edge $uv \in E'$.  Next we show that for any $u$, $d'_u \le
d_u/\epsq^3$.  Let $D_u$ be the set of neighbors of $u$ in $G$ that
are degree similar to $u$.  Each vertex $v$ such that $(u,v)$ is an
admissible pair is connected to at least $\epsq \cdot \min\{d_u,d_v\}
\geq \epsq^2 d_u$ vertices in $D_u$.  Moreover, each vertex $w \in
D_u$ has degree $d_w \leq d_u/\epsq$ by definition of
degree-similar. Hence, by using a counting argument, we conclude that
the total number of vertices $v$ such that $uv$ is in $E_{\adm}$ is at
most $|D_u|(d_u/\epsq) / (\epsq^2 d_u) \le d_u/\epsq^3$.  This yields
the desired bounds for any vertex $u$ that does not belong to an atom.

    To finish the analysis, we need to analyze the number of
    atomic edges attached to any vertex $u$ in an atom
    and show that a pair of vertices $u,v$ in the same atom satisfies
    $d_u \le 2\epsq^{-1} \cdot d_v$. 
  By Lemma~\ref{lem:quasiclique-neighbors} and
    Lemma~\ref{lem:transitivityagreement}, we have that $u$ is in
    8-weak agreement with any element of $K$ so we have that $d_u \le
    2 d_v \cdot \epsq^{-1}$ by our choice of $\epsq$, as desired. 
    Moreover,
    consider a heavy vertex $v'$ in $K$ (there must exist one by
    definition).
    The fact $v$ remained heavy implies that $K$ contains at least a $(1-\epsq)$ fraction of the $+$neighbors of $v'$, and Lemma~\ref{lem:quasiclique-degree} implies that the degree of $v'$ is at least
    $(1-O(\epsq))|K|$.
    Since any other
    vertex is in 4-weak agreement with $v'$, we have that the degree
    of any vertex $u \in K$ is $(1\pm O(\epsq))|K|$ and so $d'_u \le
    2\epsq^{-3} \cdot d_u$.
\end{proof}

We now want to prove Theorem~\ref{thm:preprocessed} for an  preclustered instance $(\calK, E_{\adm})$.
We start with the following lemma.
\begin{lemma}
\label{lem:neighbor-constrained-sol}
    Let $\calK$ be the set of atoms output by Algorithm~\ref{alg:admissible}.
    There exists a good clustering $\calC_2$ of $I$ with cost at most $(1+\eps)$ times the optimum solution of
    $I$ such that 
    \begin{enumerate}
    \item For any vertex $v$ that belongs to a non-singleton cluster $C \in \calC_2$, we have 
        that $v$ has at least $(1+\eps)|C|/2$ $+$neighbors in $C$.
\item For any vertex $v$ that belongs to a non-singleton cluster $C \in \calC_2$, we have       that $|C| \ge \eps d_v$.
    \end{enumerate}

\end{lemma}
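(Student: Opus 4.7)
The plan is to start from the optimum good clustering $\calC_1$ provided by Lemma~\ref{lem:quasiclique}, in which every atom is already contained in a single cluster, and then to iteratively remove ``poorly integrated'' free vertices (those outside $V_{\calK}$) from their clusters, making each removed vertex a singleton. Atoms are never touched, so every intermediate clustering (and the final one) remains good; only the cost overhead needs to be controlled.

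I would first verify that in $\calC_1$ itself both properties hold at every vertex belonging to an atom, so no modification is required inside atoms. Fix an atom $K$ contained in a cluster $C\in\calC_1$. Lemma~\ref{lem:quasiclique-degree} gives at most $9\epsq|K|$ negative neighbors inside $K$ for each $v\in K$, and Lemmas~\ref{lem:quasiclique-neighbors} and~\ref{lem:num-admissible} give at most $O(\epsq)|K|$ positive neighbors outside $K$. Hence there are at most $O(\epsq)|K|^2$ positive edges between $K$ and $C\setminus K$, while at least $|K|\cdot(|C\setminus K|-O(\epsq)|K|)$ negative edges connect $K$ to $C\setminus K$. Optimality of $\calC_1$ forbids a profitable extraction of $K$ from $C$, which forces $|C\setminus K|=O(\epsq)|K|$, i.e.\ $|C|\le(1+O(\epsq))|K|$. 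Combined with $k_v\ge(1-9\epsq)|K|$ and $d_v=(1\pm O(\epsq))|K|$, both properties then hold at every $v\in K$ with a comfortable margin.

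The main step is the iterative procedure on free vertices: while some free vertex $v$ in a non-singleton cluster $C$ of the current clustering violates property~(1) or~(2) (with respect to the current $|C|$), remove $v$ from $C$ and make it a singleton. The procedure terminates since each step strictly decreases the number of free vertices in non-singleton clusters. For the cost analysis, let $k_v$ and $|C|$ refer to the current clustering at the moment of removal. The local optimality of $\calC_1$ against single-vertex swaps of $v$ gives $k_v\ge(|C|+1)/2$, so a property-(1) violation means $k_v<(1+\eps)|C|/2$ and the removal overhead is $2k_v-|C|-1<\eps|C|$; a property-(2) violation means $|C|<\eps d_v$ and the overhead is at most $|C|-1<\eps d_v$. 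The property-(1) overhead is charged against the at least $(1-\eps)|C|/2$ negative neighbors of $v$ still inside the current $C$, which are negative edges in $\cost(\calC_1)$ as well, so the charge is $O(\eps)$ times a valid lower bound on $v$'s $\calC_1$-contribution and each such edge is charged at most twice (once per endpoint). The property-(2) overhead is charged analogously against the at least $(1-\eps)d_v$ positive edges incident to $v$ that are cut in the current clustering. Summing gives a total cost increase of $O(\eps)\cost(\calC_1)=O(\eps)\opt$, and rescaling $\eps$ yields $\cost(\calC_2)\le(1+\eps)\opt$.

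The main obstacle will be making the charging consistent under the cascading behavior of the procedure: a free vertex $v$ may initially satisfy property~(2) in $\calC_1$ and only fail it after many other vertices have been removed from its cluster, so the cut positive edges around $v$ in the current clustering may be a mixture of edges already cut in $\calC_1$ and edges created by earlier removals. I plan to handle this by a telescoping argument, charging the incremental overhead of each removal to a distinct positive edge of $\calC_1$ using the fact that the total number of positive edges cut over the whole procedure is bounded by $2|E^+|$ and that each removal consumes at most $O(\eps)$ times its own freshly cut contribution. A minor auxiliary check, ruled out by the first paragraph's margin, is that removing free vertices from $C$ does not create a property violation at an atom vertex inside $C$: the bound $|C\setminus K|=O(\epsq)|K|$ leaves room for any amount of free-vertex removal without endangering the atom's margin.
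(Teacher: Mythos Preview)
Your overall strategy---skip the paper's preprocessing of $\calC_1$ into an intermediate $\calC'_1$, show directly via optimality of $\calC_1$ that $|C\setminus K|=O(\epsq)|K|$ for any atom $K\subseteq C$ (so atom vertices satisfy both properties with margin and never need to be touched), and then iteratively peel off only free vertices---is a genuinely different and cleaner route than the paper's. The paper first breaks up clusters with too many internal $-$edges or too many outgoing $+$edges, and then in the iteration handles an explicit ``atom case'' where a whole atom may be expelled from its cluster; both of these disappear in your version. One harmless slip: the sentence ``local optimality of $\calC_1$ gives $k_v\ge(|C|+1)/2$'' is not correct for the \emph{current} $k_v$ and $|C|$ (local optimality applies only to $\calC_1$), but you never actually use this inequality, since the property-(1) overhead bound and the count of $-$neighbors both follow directly from $k_v<(1+\eps)|C|/2$.

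The real gap is the property-(2) charging, which you correctly flag as the main obstacle but do not close. Your proposed telescoping does not work as stated: the bound ``total positive edges cut $\le 2|E^+|$'' is in terms of $|E^+|$ rather than $\cost(\calC_1)$ and is therefore useless, and the claim ``each removal consumes at most $O(\eps)$ times its own freshly cut contribution'' is false for Event-(2) removals---the edges freshly cut at that step are only the $k_v\le|C_v|<\eps d_v$ edges inside $C_v$, while the overhead is also up to $\eps d_v$, so the ratio is $\Theta(1)$, not $O(\eps)$. The paper avoids this precisely via its preprocessing: once every cluster with $\ge|C|^2/\eps$ outgoing $+$edges has been shattered, each Event-(2) removal can be charged to $+$edges already cut in $\calC'_1$. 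A fix compatible with your approach is to charge against $\cost(\calC_2)$ instead: every $v$ removed by Event (2) is a singleton in $\calC_2$, so all of its $d_v-1$ positive edges are cut there, giving $\sum_{v\in R_2}d_v\le O(1)\cdot\cost(\calC_2)$ and hence Event-(2) overhead $\le O(\eps)\cdot\cost(\calC_2)=O(\eps)\big(\cost(\calC_1)+\text{total overhead}\big)$. Combining this self-referential inequality with your correct $O(\eps)\cost(\calC_1)$ bound on Event-(1) overhead and solving yields total overhead $\le O(\eps)\cost(\calC_1)$, as required.
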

\begin{proof}
    Start from $\calC_1$ as per Lemma \ref{lem:quasiclique} and modify it as follows.
   For any cluster $C \in \calC_1$, let $\calK(C)$ be the set of atoms $K$ such 
   that $K\cap C \neq \emptyset$ (note that by Lemma~\ref{lem:quasiclique}, if
   $K\cap C \neq \emptyset$, then $K \subseteq C$).
   For any cluster $C \in \calC_1$, if   
   the number of \medges within $C$ is at least $(1-10\eps)\binom{|C|}{2}/2$, 
   then make each $K \in \calK(C)$ a cluster by itself and each non-atom vertex $v \in C$
   a singleton cluster. This 
   increases the cost by a factor of at most $(1+10\eps)/(1-10\eps)$.
   Secondly, if the number of \pedges
    outgoing from the cluster is at least $|C|^2/\eps$, then make all the non-atom
    vertices of 
    $C$ singleton clusters and each $K \in \calK(C)$ a cluster by itself, 
    and charge the cost of $|C|^2$ for the operation to the 
    outgoing \pedges); each edge across clusters of $\calC_1$ each gets 
    a charge of $O(\eps)$ ($\eps$ for each endpoint).
    Let $\calC_1'$ be the resulting solution which has cost at most
    $(1+\eps)$ times the cost of $\calC_1$ by the above argument.

    We next apply the following iterative procedure to $\calC_1'$: As long as there exists a vertex $v$ in a 
    non-singleton cluster $C$ such that either Event (1): $v$ has less than $(1+\eps)|C|/2$ neighbors in $C$, or Event (2): $|C| < \eps d_v$, we proceed as follows: 
    \begin{itemize}
    \item\textbf{Individual case} If $v$ is not in an atom, we make $v$ a singleton.
    \item\textbf{Atom case} Otherwise $v$ belongs to an atom, we remove the 
    whole atom $K$ containing 
    $v$ from $C$ and make it a cluster on its own. 
    \end{itemize}

    We first argue that the procedure outputs a clustering that satisfies the 
    conditions of the lemma. 
    The only way the two items can be violated is by an atom 
    (since otherwise, the problematic vertex would be made a singleton).
    We thus argue
    that atoms satisfy the two properties.
    By Lemma~\ref{lem:quasiclique-degree} and $\eps < 1-18\epsq$, each atom satisfies the first bullet.
    We then argue that each atom $K$ is such that for any vertex $v \in K$, 
    $\eps d_v < |K|$. This follows from the fact that each atom $K$ contains a
    heavy vertex $u$ that is thus in agreement with a $(1-\epsq)$ fraction of its 
    neighbors that hence join the atom; it follows that $|K| > (1-\epsq) d_u > \eps d_u$. 
    For the the other vertices, $v$ is in 4-weak agreement with $u$, and so the property holds.
    It follows that when the procedure stops, the solution obtained satisfies
    the constraints of the theorem. 

    We then need to account for the cost increase. We first make the following 
    observation: Any non-singleton cluster $C \in \calC_1'$ (and so of $\calC_1$) 
    cannot lose more than $(1-\eps)|C|$ vertices because of Event (1) and $|C|/2$ vertices because of Event (2).
    Indeed, 
    if it loses more than $(1-\eps)|C|$ vertices because of Event (1), then that implies
    that the total number of \medges internal to $C$ was already larger than 
    $(1-10\eps)|C|(|C|-1)/4$ and should not have been in $\calC'_1$. Moreover, if it 
    loses more than $|C|/2$ vertices because of Event (2), that means that the number
    of \pedges outgoing $C$ was at least $|C|^2/(2\eps)$ and should not have been in $\calC'_1$.

We can now analyse the cost increase due to Event (2). In the
individual case, namely when a single vertex $v$ is placed in a
singleton cluster, we have that $v$ has at least $|C|(1-\eps)/(2\eps)$
$+$neighbors outside its cluster in $\calC'_1$ and so the cost can be
charged (with an individual edge charge of $2\eps$) to these \pedges.
In the atom case, namely when an entire atom $K$ is placed in a single
cluster, we have that there exists a vertex $v$ in $K$ with at least
$(|C|-1)/(2\eps)$ $+$neighbors outside its cluster in
$\calC'_1$. Moreover, each vertex $u \in K$ is in at most
$5$-weak-agreement (by Lemmas~\ref{lem:quasiclique-neighbors}
and~\ref{lem:transitivityagreement} and so has at least
$(1-2\eps)(|C|-1)/(2\eps)$ $+$neighbors outside its cluster in
$\calC'_1$ and so the cost of moving the atom outside of $C$ can be
charged to the $+$neighbors of the atom that are not in the cluster of
the atom in $\calC_1'$, with a charge of $2\eps/(1-2\eps)$ per
individual edge. Therefore, the cost of Event (2) can be charged to
the \pedges paid in $\calC'_1$, each edge receiving a charge of
$O(\eps)$.
    
    It thus remains to bound the cost incurred by Event (1).
    Consider the atom case, namely the case where an entire atom $K$ is 
    moved out of a cluster and creates a cluster on its own. In this case, we have
    that there is a vertex $u \in K$ which is adjacent to less than a $(1+\eps) |C|/2$
    fraction of its current cluster $C$ and assume that $\eps d_u \le |C|$ since
    otherwise this is an Event (2) case.
    Note that in this case, since  $\eps d_u \le |C|$ and
    by Lemma~\ref{lem:quasiclique-neighbors} that for any vertex $u \in K$, 
    $|N_u \triangle N_v| < 4\epsq \cdot\max\{|N_u|, |N_v|\}$ and so each vertex $u \in K$
    is adjacent to less than $(1+5\eps)|C|/2$ vertices in $C$.
    This implies that whenever a vertex $v$ is moved out of its cluster in the procedure,
    it is adjacent to at most $(1+5\eps)|C|/2$ vertices in $C$ and so has at least 
    $(1-5\eps)|C|/2$  $-$neighbors in $C$. 
    We thus charge the cost of moving $v$ outside of $C$, which is at most $(1+5\eps)|C|/2$, to the cost $v$ was paying in $\calC'_1$ by placing a charge 
    of $O(\eps)$ on each $-$neighbor of $v$ in $C$. This account for the change in cost 
    since each $-$neighbor of $v$ in $C$ was contributing one to the objective and is now
    contributing 0. Note that in this way, each \medge that contributes to the objective
    in $\calC'_1$ is charged at most once: when one of its endpoints is moved out of 
    the cluster and cannot be charged later in the procedure since the endpoints remain
    in different clusters.

\end{proof}

We can now turn to the proof of Theorem~\ref{thm:preprocessed}.

\begin{proof}[Proof of Theorem~\ref{thm:preprocessed}]
We consider the solution $\calC_2$ from Lemma~\ref{lem:neighbor-constrained-sol} and the pair $\prepinstance$.
It is easy to see that $\prepinstance$ can be computed in polynomial time.

Lemma~\ref{lem:neighbor-constrained-sol} already shows that the cost of $\calC_2$ is within a $(1+\eps)$ factor of the optimum cost and so it only remains to show that $\calC_2$ is a good clustering for $\prepinstance$  and that its cost is at least $\epsa \cdot |E_{adm}|$.

Let us first argue that $\calC_2$ is a good clustering for $\prepinstance$, namely that 
\begin{itemize}
    \item For every atom $K\in \calK$, we have $K \subseteq C$ for some $C \in 
    \calC_2$.
    \item For every non-admissible edge $uv$, $u$ and $v$ are not in the same cluster in 
    $\calC_2$. 
\end{itemize}

Note that the first bullet is satisfied by
$\calC_2$ by Lemma~\ref{lem:neighbor-constrained-sol}. It thus remains to show that the second bullet is satisfied.
By Lemma~\ref{lem:neighbor-constrained-sol}, pairs of vertices in the same cluster are degree-similar and moreover, for any vertex $v$ that belongs to a non-singleton cluster $C \in \calC_2$, 
we have that $v$ has at least $(1+\eps)|C|/2$ $+$neighbors in $C$ that are degree-similar.  This implies that any pair of vertices $u,v$ that are in the same cluster $C \in \calC_2$ has at least $\eps|C|$
common neighbors that are degree similar,
and by Lemma~\ref{lem:neighbor-constrained-sol},  $\eps|C| \geq \epsq \cdot \min\{d_u, d_v\}$.
Thus, either $u,v$ is a pair of vertices in the same atom (hence $(u,v) \in E'$) or it is admissible (and so $(u,v) \in E'$ too) as desired.

We now turn to providing a lower bound on the cost.
We say that a vertex $v$ is a \emph{high-contributor} if its contribution to the cost is at least $\epsq^2 d_v$.  Thus, any vertex that is not a high-contributor must be adjacent to a $(1-\epsq^2)$-fraction of its cluster and have at most $\epsq^2 d_v$ $+$neighbors outside of its cluster. It follows that $+$neighbors that are not high-contributors and that are in the same cluster are in agreement. 
Next, define a cluster $C$ to
be a high-contributor cluster if the total number of outgoing \pedges (i.e., \pedges with 
exactly one endpoint in $C$) is at least 
$\epsq^4 |C|^2$ or the total number of internal \medges (\medges with both endpoints in $C$)
is at least $\epsq^4 |C|^2$. 

Next, consider a non-high-contributor cluster $C$; it must be that the number of non-high-contributor vertices in $C$ is at least $(1-\epsq^2)|C|$.
Thus, since each non-high-contributor vertex in $C$ is adjacent to at least $(1-\epsq^2)|C|$ vertices in $C$ and
in agreement with the other non-high-contributor vertices, we have that each non-high-contributor vertex is heavy, and so all the non-high-contributor vertices of $C$ are
in the same atom. 

We can thus rewrite a lower bound on the cost of the solution $\calC_3$ as 
\[\sum_{\text{high-contributor cluster } C} \epsq^2|C|^2 + \sum_{\text{non high-contributor cluster } C} \sum_{\text{high-contributor vertex } v \in C} \epsq^2 d_v\]

\begin{eqnarray*}
\sum_{\text{high-contributor cluster } C} \epsq^2|C|^2 & \ge & 
\sum_{\text{high-contributor cluster } C} \epsq^2 \sum_{v \in C}  |C| \\
& \geq &
\sum_{\text{high-contributor cluster} C} \epsq^2 \sum_{v \in C}  \eps d_v \\ &\geq&
\sum_{\text{high-contributor cluster } C} \sum_{v \in C} \epsq^3 d_v,
\end{eqnarray*}
where the second inequality follows from the second bullet of Lemma
\ref{lem:neighbor-constrained-sol}.

So we have that the total contributions of vertices is at least:
\[\sum_{\text{non high-contributor cluster } C} \sum_{v \in C\setminus{K}} \epsq^3 d_v + 
\sum_{\text{high-contributor cluster } C} \sum_{v \in C} \epsq^3 d_v + \sum_{\text{Singletons}} d_v/2 \geq \sum_{v \in V\setminus{V_\calK}} \epsq^3 d_v\]

It follows that the cost of solution $\calC_3$ is at least 
$\epsq^{3} \sum_{v \in V \setminus V_{\calK}} d_v$, as claimed (where $\epsq = \eps^2$).

By Lemma \ref{lem:num-admissible}, we have 
\[\sum_{v \in V\setminus{V_\calK}} \epsq^3 d_v \geq 
\sum_{v \in V\setminus{V_\calK}} \epsq^3 (d'_v \epsq^3)/2  \geq 
\sum_{v \in V\setminus{V_\calK}} \epsq^6 d'_v/2 \geq \epsq^6/2 \cdot |E_{\adm}|. \qedhere
\]
\end{proof}

    \section{Set-Based Rounding Procedure} \label{sec:KT}
In this section, we prove Theorem~\ref{thm:KT}, by giving the set-based rounding algorithm $\calA_{\mathrm{set}}$. We repeat the theorem below:
\thmkt*

Recall that we are given a Correlation Clustering instance $(V, E^+ \cup E^-)$, a preclustered instance $(\calK, E_\adm)$, and a metric $x \in [0, 1]^{V \choose 2}$ satisfying the properties in Theorem~\ref{thm:round-or-cut}. The output is either a separation plane between $x$ and $\calP$, or a clustering $\calC$. Recall that $\calP$ is the convex hull of the metrics of all good-clusterings for $(\calK, E_\adm)$.  

Throughout the section, we use $N_\adm(u)$ to denote the set of admissible edges incident to $u$, $d_\adm(v)$ to denote its size, and $E_\adm(A, B)$ for two disjoint subsets $A$ and $B$ to denote the set of admissible edges between $A$ and $B$. It is good to keep in mind the algorithm for the $O(1)$-sized cluster case from the introduction, as it will serve as a baseline for our more general algorithm.

\subsection{Linear Program Relaxation}
During the algorithm, we maintain the set $V'$ of vertices that are not clustered yet, and let $n' = |V'|$; initially we have $V' = V$. 
For $(\calK, E_\adm)$ and for each $u \in V$, we shall let $K_u$ denote the atom containing $u$, if there is one; otherwise let $K_u = \{u\}$.
Let $\calC^*$ be any good clustering for $(\calK, E_\adm)$; in other words, the metric for $\calC^*$ is a vertex-point of $\calP$. Let $\calC'^*$ be the clustering $\calC^*$, restricted to $V'$ and with empty clusters removed.  We shall define a clustering $\tilde \calC$ which will serve as the target clustering for our LP.  The cluster $\tilde \calC$ is constructed as follows:
\begin{algorithmic}[1]
    \State let $\tilde \calC \gets \calC'^*$
    \While {there exists some $K_u$ in a cluster $C \in \tilde \calC$ with $|K_u| < |C| < \eps d_\adm(u) + |K_u|$}
        \State $\tilde \calC \gets \tilde \calC \setminus \{C\} \cup \{K_u, C \setminus K_u\}$
    \EndWhile
\end{algorithmic}
Note this procedure is only for analysis purpose and is not a part of our algorithm, as we do not know $\calC^*$ and $\calC'^*$.

\begin{claim}
    \label{claim:tilde-C}
    The following statements are true for the clustering $\tilde \calC$:
    \begin{enumerate}[label = (\ref{claim:tilde-C}\alph*),leftmargin=*]
        \item For every $u \in V'$, $K_u$ is either a cluster, or in a cluster of size more than $|K_u| + \eps\cdot d_\adm(u)$. 
        \item The number of pairs in $V' \choose 2$ cut in $\tilde \calC$ is at most that in $\calC'^*$ plus $\eps \cdot \sum_{u \in V'} d_\adm(u)$.
    \end{enumerate}
\end{claim}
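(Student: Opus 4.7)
For part~(a) the plan is to observe that the loop must terminate: each iteration replaces one cluster $C$ by two non-empty clusters $K_u$ and $C\setminus K_u$, so the number of clusters of $\tilde\calC$ strictly grows and is bounded by $|V'|$. I would also note that splitting preserves atoms --- any atom $K'\neq K_u$ inside $C$ is disjoint from $K_u$ and hence lies entirely in $C\setminus K_u$. Upon termination, the trigger condition $|K_u|<|C|<|K_u|+\eps\,d_\adm(u)$ fails for every $u\in V'$, which is exactly the statement of~(a).

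For part~(b) I would first argue that every new cut pair produced by a split is an admissible edge. For $v\in K_u$ and $w\in C\setminus K_u$, the pair lies inside the cluster $C$, and since splits only refine $\calC'^*$, $C$ is contained in some cluster of the good clustering $\calC^*$; thus $(v,w)$ is not non-admissible. Since $v,w$ lie in distinct atoms the pair is also not atomic, so it must be admissible. The trigger bound $|C\setminus K_u|<\eps\,d_\adm(u)$ then gives fewer than $\eps\,|K_u|\,d_\adm(u)$ new cuts per split, and I would note that each atom is split off at most once (after the split $|C|=|K_u|$ and the trigger inequality $|K_u|<|C|$ fails forever).

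To finish the proof of (b), I plan to bound $\sum_{K}|K|\,d_\adm(u_K)$ by $\sum_{v\in V'}d_\adm(v)$ using the structural fact that $d_\adm$ is essentially uniform inside each atom. This follows from the preclustering: atomic vertices are in $O(\eps_q)$-weak agreement (Lemmas~\ref{lem:quasiclique-neighbors} and~\ref{lem:transitivityagreement}), their degrees in $G$ are comparable, and so the common-neighbor and degree-similarity criteria defining admissibility transfer from one vertex of $K$ to another up to an $O(\eps_q)$ error; consequently $d_\adm(v)/d_\adm(v')=1\pm O(\eps_q)$ for $v,v'\in K$. Let $v_K\in K$ minimize $d_\adm$. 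Even if $v_K$ does not itself trigger, the existence of any trigger $u$ gives $|C\setminus K|<\eps\,d_\adm(u)\le\eps(1+O(\eps_q))\,d_\adm(v_K)$, so $|K|\cdot|C\setminus K|\le\eps(1+O(\eps_q))\sum_{v\in K}d_\adm(v)$. Singleton atoms $K_u=\{u\}$ are handled directly with the singleton vertex itself as trigger. Summing over atoms gives total new cuts at most $\eps(1+O(\eps_q))\sum_{v\in V'}d_\adm(v)$; the $(1+O(\eps_q))$ factor can be absorbed into $\eps$ since $\eps_q$ is a fixed small constant.

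The hard part will be the structural claim of the previous paragraph: arguing that $d_\adm$ is nearly uniform over every atom from the $O(\eps_q)$-weak agreement of the $+$-neighborhoods together with the admissibility definition. This is the only step where the proof must look inside the preclustering construction beyond the clean interface of Theorem~\ref{thm:preprocessed-wrapper}, and everything else is a routine consequence of the while loop's termination condition and of the fact that the only new cut pairs created during the splitting process are admissible edges.
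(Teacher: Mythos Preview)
Part~(a) is fine and matches the paper's ``straightforward''.

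For part~(b), the paper's argument is a single line: a split of $C$ into $K_u$ and $C\setminus K_u$ creates at most $|K_u|\cdot|C\setminus K_u|\le\eps\,|K_u|\,d_\adm(u)\le\eps\sum_{v\in K_u}d_\adm(v)$ new cut pairs, and each atom is split off at most once; summing gives the claim. The only nontrivial step is the last inequality, which is exactly what you flag as the hard part. The paper resolves it not by any structural analysis but by the reduction, stated explicitly at the start of Section~\ref{sec:CLN}, that one may assume every vertex of an atom has the \emph{same} admissible neighborhood: if $u,v$ are in a common atom with $vw\in E_\adm$ but $uw\notin E_\adm$, then $uw$ is non-admissible, so no good clustering joins $u,w$, hence none joins $v,w$, and $vw$ may be dropped from $E_\adm$ without changing the set of good clusterings. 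After this pruning $d_\adm$ is constant on each atom and the inequality becomes an equality.

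Your alternative route---deducing $d_\adm(v)/d_\adm(v')=1\pm O(\epsq)$ for $v,v'$ in an atom from weak agreement of their $+$-neighborhoods---does not go through as sketched. Admissibility is a threshold predicate: a vertex $w$ with $d_w$ sitting at the boundary $d_v/\epsq$ of degree-similarity can be an admissible neighbor of $v$ yet fail for $v'$ whenever $d_{v'}<d_v$, and nothing in the preclustering limits how many such boundary vertices $w$ exist (beyond the global bound $d_\adm(v)\le O(\epsq^{-3})d_v$ of Lemma~\ref{lem:num-admissible}). So the multiplicative ratio can be large, and with it your step $\eps\,d_\adm(u)\le\eps(1+O(\epsq))\,d_\adm(v_K)$ collapses. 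The one-line WLOG reduction above is both simpler and actually correct; substitute it for the last paragraph of your plan. Your observation that all newly cut pairs are admissible is correct but not needed---the crude count $|K_u|\cdot|C\setminus K_u|$ already bounds them.
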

\begin{proof}
    The first statement is straightforward. Whenever we break $C$ into $K_u$ and $C \setminus K_u$ in the procedure, the cost increase is at most $|K_u| \cdot |C \setminus K_u| \leq |K_u| \cdot \eps d_\adm(u) \leq \eps \sum_{v \in K_u} d_\adm(v)$. We separate each $K_u$ at most once. Therefore, the second statement holds. 
\end{proof}

 We then describe the LP relaxation, which depends on $V'$. Suppose we have a good clustering $\calC^*$, where $\calC'^*$ is the clustering $\calC^*$ restricted to $V'$, and let $\tilde \calC$ be obtained as above.  Let $r=\Theta(\frac1{\eps^4})$, with a large enough hidden constant inside $\Theta(\cdot)$. In the LP, we have a variable $y^s_S$, for every $s \in [n']$, and $S \subseteq V'$ of size at most $r$, that denotes the number of clusters of size $s$ in $\tilde \calC$ containing $S$ as a subset. When $S \neq \emptyset$, there is at most one such cluster and thus $y^s_S \in \{0, 1\}$ in an integral solution. For every $S$, let $y_S := \sum_s y^s_S$ denote the number of clusters (of any size) in $\tilde \calC$ containing $S$ as a subset. If $S \neq \emptyset$, then $y_S$ indicate if $S$ is a subset of a cluster in $\tilde \calC$ or not. For every $uv \in {V' \choose 2}$, we have a variable $\tilde x_{uv}$ indicating if $u$ and $v$ are separated or not in $\tilde \calC$.

For convenience, we shall use the following type of shorthand:  $y^s_{u}$ for $y^s_{\{u\}}$, $y^s_{uv}$ for $y^s_{\{u, v\}}$, and $y^s_{Su}$ for $y^s_{S \cup \{u\}}$. The LP is defined as in LP(\ref{LPC:KT-define-s}-\ref{LPC:KT-non-negative}). In the description of the LP, we always have $s \in [n'], u \in V'$ and $uv \in {V' \choose 2}$. For convenience, we omit the restrictions.    By default, any variable of the form $y_S$ or $y^s_S$ has $|S| \leq r$; if not, we do not have the variable and the constraint involving it. 

\noindent\begin{minipage}[t]{0.34\textwidth}
    \begin{align}
    	\sum_{s = 1}^{n'} y^s_S &= y_S   & &\forall S \label{LPC:KT-define-s}\\
    	y_u &= 1 & &\forall u \label{LPC:KT-a-contained}\\
    	y_{uv} &= 1 - \tilde x_{uv} & &\forall uv \label{LPC:KT-define-tx}\\
    	\tilde x_{uv} &\geq x_{uv} & &\forall uv \label{LPC:KT-tx-to-x}\\
    	\frac1s\sum_{u} y^s_{Su} &=  y^s_S &\quad &\forall s, S \label{LPC:KT-size-s}
    \end{align}
\end{minipage}\hfill
\begin{minipage}[t]{0.62\textwidth}
    \begin{align}
        \tilde x_{uv} &= 0 & &\forall u, v \text{ in a same } K \in \calK\\
    	\sum_{uv} (\tilde x_{uv} - x_{uv}) &\leq \eps\sum_{u} d_\adm(u) \label{LPC:KT-cost}\\
    	y^s_S &= 0 & &\text{if implied by (\ref{claim:tilde-C}a)} \label{LPC:KT-forbid}\\
        \sum_{T' \subseteq T}(-1)^{|T'|}y^s_{S\cup T'} &\in [0, y^s_S] & &\forall s, S\cap T =\emptyset
        \label{LPC:KT-correlation}\\
    	\text{all variables }&\text{are non-negative} \label{LPC:KT-non-negative}
    \end{align}
\end{minipage}\medskip

$\tilde x$ and $y$ variables are the LP variables, and $x$ variables are given as the input by Theorem~\ref{thm:KT}.  \eqref{LPC:KT-define-s} gives the definition of $y_S$, \eqref{LPC:KT-a-contained} requires $u$ to be contained in some cluster, and \eqref{LPC:KT-define-tx} gives the definition of $\tilde x_{uv}$. \eqref{LPC:KT-tx-to-x} holds as $\tilde \calC$ is a refinement of $\calC'^*$.  
\eqref{LPC:KT-size-s} says if $y^s_S = 1$, then there are exactly $s$ elements $u \in V$ with $y^s_{Su} = 1$. (An exception is when $S = \emptyset$; but the equality also holds). \eqref{LPC:KT-cost} follows from (\ref{claim:tilde-C}b) and \eqref{LPC:KT-forbid} is by (\ref{claim:tilde-C}a).  The left side of \eqref{LPC:KT-correlation} is the number of clusters of size $s$ in $\tilde \calC$ containing $S$ but does not contain any vertex in $T$. So the inequality holds. This corresponds to a Sherali-Adams relaxation needed for the correlated rounding~\cite{RT12}, see Lemma~\ref{lem:RT-KT}. \eqref{LPC:KT-non-negative} is the non-negativity constraint.

If $x \in \calP$, then the above LP is feasible, as this holds for
every vertex point $x$ of $\calP$.  On the other hand, if the LP is
infeasible, then we can use Lemma~\ref{lemma:find-cut} to return a
separation plane between $x$ and $\calP$.

\subsection{Rounding Algorithm}
With the LP defined, we can then describe the rounding algorithm for
Theorem~\ref{thm:KT}.  The pseudo-code is given in
Algorithm~\ref{alg:KT-rounding}, and it calls
Algorithm~\ref{alg:KT-construct-C}.

\begin{algorithm}
	\caption{Set-Based Rounding Procedure}
	\label{alg:KT-rounding}
	\begin{algorithmic}[1]
		\Require{A Correlation Clustering instance $G$, a preclustered instance $(\calK, E_\adm)$ for $G$, and a metric $x \in [0, 1]^{V \choose 2}$ satisfying properties in Theorem~\ref{thm:round-or-cut}, and $\eps > 0$}
		\Ensure{Either a separation plane for $x$, or an integral clustering $\calC$}
		\State $V' \gets V, \calC \gets \emptyset$
		\While{$V' \neq \emptyset$}
		\State try to solve LP(\ref{LPC:KT-define-s}-\ref{LPC:KT-non-negative}) for the $V'$ to obtain a vector $(\tilde x, y)$
		\If {the LP is infeasible}
		\State \Return a separation plane for $x$ using Lemma~\ref{lemma:find-cut}
		\Else
		\State $C \gets$set-based-cstr-clst$(V', y)$, $\calC \gets \calC \cup \{C\},V' \gets V' \setminus C$
		\EndIf
		\EndWhile
		\State \Return $\calC$
	\end{algorithmic}
\end{algorithm}

\begin{algorithm}
	\caption{set-based-cstr-clst$(V', y)$}
	\label{alg:KT-construct-C}
	\begin{algorithmic}[1]
		\State randomly choose a cardinality $s$, so that $s$ is chosen with probability $\frac{y^s_\emptyset}{y_\emptyset}$
		\State randomly choose a vertex $u \in V'$, such that $u$ is chosen with probability $\frac{y^s_u}{s y^s_\emptyset}$
		\State define a vector $y'$ such that $y'_S = \frac{y^s_{Su}}{y^s_u}$ for every $S \subseteq V$ of size at most $r-1$
		\State\label{step:KT-RT}apply the Raghavendra-Tan correlated rounding technique over the fractional set $y'$ to construct the cluster $C \subseteq V'$ that does not break any atom, and \Return $C$
	\end{algorithmic}
\end{algorithm}
Step \ref{step:KT-RT} of Algorithm~\ref{alg:KT-construct-C} uses the following {\em correlated rounding} from Raghavedra and Tan~\cite{RT12}.
Note that vertex $v$ with $y'_v \in \{ 0, 1 \}$ is trivially decided and we incur errors only for $v$ with $y'_v \in (0, 1)$. 

\begin{lemma}[\cite{RT12}]
\label{lem:RT-KT}
  In Step~\ref{step:KT-RT} of Algorithm~\ref{alg:KT-construct-C}, let $V'_u := \{ v : y'_v \in (0, 1) \}$.   
  One can sample $C \subseteq V'$ in time $n^{O(r)}$ such that
  \begin{itemize}
  \item For each $v \in V'$, $\Pr[v \in C] = y'_{uv}$. 
  \item $\E_{u, v \in V'_u}[|\Pr[v, w \in C] - y'_{uvw}|] \leq \epsrt$, where $\epsrt = O(1/\sqrt{r})$.
  \end{itemize}
\end{lemma}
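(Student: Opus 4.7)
The plan is to treat $y'$ as a level-$(r{-}1)$ Sherali--Adams solution on Boolean variables $z_v \in \{0,1\}$ indicating membership in $C$, and invoke the correlated rounding of Raghavendra and Tan~\cite{RT12} almost as a black box, with a simple reduction to ensure atoms are preserved. Concretely, $y'_S$ is intended to represent $\Pr[z_v = 1 \text{ for all } v \in S]$. The Sherali--Adams constraints \eqref{LPC:KT-correlation}, transferred from the $y^s$ variables to $y'$ via normalization by $y^s_u$, together with non-negativity, give the inclusion--exclusion conditions needed to realize $\{y'_S : |S| \le r{-}1\}$ as the moment sequence of a consistent family of local distributions on every subset of $V'$ of size at most $r-1$.

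\textbf{Atom preservation.} For any atomic pair $vw$ the LP forces $\tilde x_{vw} = 0$, so by \eqref{LPC:KT-define-tx} and the size constraint \eqref{LPC:KT-size-s} we have $y^s_{vw} = y^s_v = y^s_w$ for every $s$; dividing by $y^s_u$ gives $y'_{Sv} = y'_{Sw}$ for all $S$ of size at most $r{-}2$. Thus every local distribution assigns $z_v = z_w$ with probability one on atomic pairs. I therefore first quotient $V'$ by the atom partition, treating each atom $K$ as a single super-vertex, and run the rounding over these super-vertices. Any chosen super-vertex is then expanded back to its underlying atom, so the returned $C$ automatically never breaks an atom.

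\textbf{Raghavendra--Tan conditioning.} On the super-vertex instance, apply the standard RT procedure: (i) choose a uniformly random conditioning set $W$ of $k = \Theta(1/\epsrt^2)$ super-vertices; (ii) sample an assignment $\alpha$ for $W$ from the local distribution on $W$ defined by $y'$; and (iii) for every remaining super-vertex $\bar v$, independently include it in $C$ with probability equal to its conditional marginal under the local distribution on $W \cup \{\bar v\}$, which is well-defined because $k+1 \le r-1$. The marginal property $\Pr[v \in C] = y'_v$ follows from the tower rule,
\begin{align*}
\Pr[v \in C] \;=\; \E_{W,\alpha}\bigl[\Pr[z_v = 1 \mid W, \alpha]\bigr] \;=\; y'_v,
\end{align*}
and survives the atom quotient trivially. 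For the pairwise bullet, the core information-theoretic lemma of \cite{RT12} shows that after conditioning on a uniformly random $W$ of size $k$, the average mutual information $\E_{\bar v, \bar w}\bigl[I(z_{\bar v}; z_{\bar w} \mid W, \alpha)\bigr]$ is $O(1/k)$; Pinsker's inequality converts this into an $L_1$ bound of $O(1/\sqrt{k}) = O(\epsrt)$ on the average gap $|\Pr[z_{\bar v} = z_{\bar w} = 1] - y'_{\bar v \bar w}|$, matching the second bullet. Atomic pairs contribute zero error, so restricting the average to $V'_u$ (the super-vertices with nontrivial marginal) only helps.

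\textbf{Running time and main obstacle.} All local distributions needed in steps (ii) and (iii) live on sets of size at most $r$, and each can be reconstructed in $n^{O(r)}$ time from the $y^s$ values produced by the LP. The one technical point that needs care is verifying that the inclusion--exclusion identities \eqref{LPC:KT-correlation} survive both the normalization $y'_S = y^s_{Su}/y^s_u$ and the atom quotient, so that the conditional marginals used in step (iii) are genuine probabilities in $[0,1]$; this is essentially a bookkeeping exercise but is the main place a bug could hide. Once this is established, the RT analysis transfers directly, and integration over the outer randomness (the choice of $s$ and $u$ already made in Algorithm~\ref{alg:KT-construct-C}) preserves both bullets since it only reweights the local distributions without changing their Sherali--Adams structure.
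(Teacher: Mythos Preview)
The paper does not supply its own proof of this lemma: it is stated as a direct import of the correlated-rounding result of Raghavendra and Tan~\cite{RT12}, used as a black box (with the atom-preservation requirement tacked on in Step~\ref{step:KT-RT}). Your sketch is a faithful reconstruction of exactly that argument---interpret $y'$ as a degree-$(r{-}1)$ pseudodistribution, condition on a random set of size $\Theta(1/\epsrt^2)$, round marginals independently, and bound the average pairwise error via the mutual-information/Pinsker argument---together with the correct observation that atomic pairs can be collapsed to super-vertices beforehand.

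One small correction: the identity $y'_{Sv} = y'_{Sw}$ for atomic $v,w$ does not follow from the size constraint~\eqref{LPC:KT-size-s} as you write, but from~\eqref{LPC:KT-correlation}. From $\tilde x_{vw}=0$ one gets $y^s_v = y^s_w = y^s_{vw}$ for every $s$ (since $\sum_s(y^s_v - y^s_{vw}) = y_v - y_{vw} = 0$ with each summand nonnegative); then the local distribution on any $W \supseteq \{v,w\}$ induced by~\eqref{LPC:KT-correlation} must marginalize to put zero mass on $z_v \neq z_w$, which forces $y^s_{Sv} = y^s_{Svw} = y^s_{Sw}$ for all $S$ with $|S| \le r-2$. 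This is precisely the ``bookkeeping exercise'' you flag, and it goes through cleanly.
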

Recall that $r = \Theta\left(\frac1{\eps^4}\right)$. So by setting the constant appropriately, we can have $\epsrt = \eps^2$.

\subsection{Analysis of Error-Free Version of Algorithm~\ref{alg:KT-construct-C}}
In this section, we analyze Algorithm~\ref{alg:KT-construct-C} by ignoring the errors incurred by the RT procedure. We focus on one execution of the algorithm with input $V'$ and $y$. To do this formally, we define $\err^s_{vw|u}$ to be the error generated by the procedure when we choose $s$ as the size and $u$ as the pivot:
\begin{align*}
    \err^s_{vw|u} := \left|\Pr\big[v, w \in C|s, u\big] - \frac{y^s_{uvw}}{y^s_u}\right|, \forall vw \in {V' \choose 2},
\end{align*}
and 
\begin{align*}
	\err^s_{vw}:=\frac{1}{sy^s_\emptyset}\sum_{u \in V'}y^s_u\cdot\err^s_{vw|u}
	\qquad \text{and}\qquad \err_{vw} := \sum_{s} \frac{y^s_\emptyset}{y_\emptyset}\cdot \err^s_{vw}.
\end{align*}
as the error for $vw$ conditioned on $s$, and the unconditioned error. Notice that all these quantities are expectations, and thus deterministic.  In our analysis, we shall leave the error terms in all inequalities, and bound them in the next section. 

\begin{lemma}
    \label{lemma:KT-prob-clustered}
    Given a vertex $v \in V'$, the probability that $v$ is clustered in the execution of set-based-cstr-clst is exactly $\frac1{y_\emptyset}$.
\end{lemma}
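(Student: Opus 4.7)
\textbf{Proof plan for Lemma~\ref{lemma:KT-prob-clustered}.}
My plan is to directly compute $\Pr[v \in C]$ by conditioning on the random choices made in Algorithm~\ref{alg:KT-construct-C}, and to reduce the resulting sum to $1/y_\emptyset$ using only two LP constraints: the ``size marginal'' identity \eqref{LPC:KT-size-s} and the singleton constraint \eqref{LPC:KT-a-contained}. The crucial observation is that although the Raghavendra--Tan rounding in Step~\ref{step:KT-RT} is only approximately correct on pairwise marginals, the single-vertex marginals it produces are exact (the first bullet of Lemma~\ref{lem:RT-KT}). Consequently this lemma can be proved without any $\err$ term.

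First I would unfold the total probability as
\begin{align*}
\Pr[v \in C] \;=\; \sum_{s} \frac{y^s_\emptyset}{y_\emptyset}\,\sum_{u \in V'} \frac{y^s_u}{s\, y^s_\emptyset}\,\Pr[v \in C \mid s, u].
\end{align*}
By the first bullet of Lemma~\ref{lem:RT-KT}, $\Pr[v \in C \mid s, u] = y'_v = y^s_{uv}/y^s_u$ whenever $y^s_u > 0$ (the terms with $y^s_u = 0$ contribute nothing, since then $u$ is never chosen as pivot for this $s$). Here I use the convention $y^s_{uv} = y^s_u$ when $u = v$, which is consistent with the set notation $y^s_S$. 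Plugging in and cancelling $y^s_u$ and $y^s_\emptyset$ gives
\begin{align*}
\Pr[v \in C] \;=\; \frac{1}{y_\emptyset}\sum_{s} \frac{1}{s}\sum_{u \in V'} y^s_{uv}.
\end{align*}

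Next I would apply constraint \eqref{LPC:KT-size-s} with $S = \{v\}$: it says exactly that $\frac{1}{s}\sum_{u \in V'} y^s_{\{v\} \cup \{u\}} = y^s_v$. Substituting yields $\Pr[v \in C] = \frac{1}{y_\emptyset}\sum_s y^s_v$. By \eqref{LPC:KT-define-s}, the inner sum is $y_v$, and \eqref{LPC:KT-a-contained} gives $y_v = 1$, so $\Pr[v \in C] = 1/y_\emptyset$.

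I do not expect any real obstacle here; the only subtle point is making sure the sum $\sum_{u \in V'} y^s_{uv}$ is interpreted correctly when $u$ coincides with $v$ (so that \eqref{LPC:KT-size-s} applies verbatim), and that we invoke the \emph{exact} single-vertex marginal of the Raghavendra--Tan rounding rather than the approximate pairwise one. This is exactly what isolates this lemma from the $\epsrt$ errors that will later appear in the pair-based cost analysis.
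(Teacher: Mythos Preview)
Your proposal is correct and follows essentially the same computation as the paper's proof: both unfold the probability over the choice of $s$ and $u$, invoke the exact single-vertex marginal of the Raghavendra--Tan rounding to get $\Pr[v\in C\mid s,u]=y^s_{uv}/y^s_u$, then apply \eqref{LPC:KT-size-s} with $S=\{v\}$, followed by \eqref{LPC:KT-define-s} and \eqref{LPC:KT-a-contained}. Your write-up is slightly more explicit about the edge cases ($u=v$ and $y^s_u=0$) and about why no $\epsrt$ error appears, but the argument is the same.
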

\begin{proof}
    The probability is 
    \begin{align*}
        \sum_{s}\frac{y^s_\emptyset}{y_\emptyset}\sum_{u \in V'}\frac{y^s_u}{sy^s_\emptyset} \cdot \frac{y^s_{uv}}{y^s_u} = \frac1{y_\emptyset}\sum_{s}\frac1s\sum_{u \in V'}y^s_{uv} = \frac1{y_\emptyset}\sum_{s} y^s_v = \frac1{y_\emptyset}y_v = \frac1{y_\emptyset}. 
    \end{align*}
    The second equality is by \eqref{LPC:KT-size-s}. The third and the last inequalities are by \eqref{LPC:KT-define-s} and \eqref{LPC:KT-a-contained} respectively. 
\end{proof}

\begin{lemma}
    Focus on an edge $vw \in {V' \choose 2}$.
    \begin{enumerate}
        \item The probability that $vw$ is decided (i.e., one of $v$ and $w$ is in $C$) by the execution of set-based-cstr-clst is at least $ \frac1{y_\emptyset} (1  + \tilde x_{vw}) - \err_{vw}$.
        \item If $vw \in E^+$, then the probability that $vw$ is decided wrongly is at most $\frac{2}{y_\emptyset} \cdot \tilde x_{vw} + \err_{vw}$.
        \item If $vw \in E^-$, then the probability that $vw$ is decided wrongly is at most $\frac{1}{y_\emptyset} \cdot (1-\tilde x_{vw}) + \err_{vw}$.
    \end{enumerate}
\end{lemma}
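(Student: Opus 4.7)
The plan is to reduce everything to computing the single-vertex probability $\Pr[v \in C]$ and the pair probability $\Pr[v, w \in C]$, and then obtain all three items by inclusion--exclusion. For the single-vertex probability, Lemma~\ref{lemma:KT-prob-clustered} already shows that $\Pr[v \in C] = 1/y_\emptyset$ exactly (this uses only the first bullet of Lemma~\ref{lem:RT-KT}, which is exact). So the work is concentrated on computing $\Pr[v,w\in C]$, which is where the correlated rounding error enters.

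The key calculation is to condition on $(s,u)$. By the definition of $\err^s_{vw|u}$, we have $\Pr[v,w\in C\mid s,u] = y^s_{uvw}/y^s_u \pm \err^s_{vw|u}$. Marginalizing over the sampling distribution of $(s, u)$ used in Algorithm~\ref{alg:KT-construct-C},
\begin{align*}
\Pr[v,w\in C] = \sum_{s}\frac{y^s_\emptyset}{y_\emptyset}\sum_{u\in V'}\frac{y^s_u}{s\,y^s_\emptyset}\left(\frac{y^s_{uvw}}{y^s_u}\pm\err^s_{vw|u}\right).
\end{align*}
The main term telescopes: $\sum_u y^s_{uvw} = s\,y^s_{vw}$ by \eqref{LPC:KT-size-s} applied to $S=\{v,w\}$, then $\sum_s y^s_{vw} = y_{vw} = 1-\tilde x_{vw}$ by \eqref{LPC:KT-define-s} and \eqref{LPC:KT-define-tx}, giving a main term of $(1-\tilde x_{vw})/y_\emptyset$. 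The error term collapses exactly to $\err_{vw}$ by the definitions of $\err^s_{vw}$ and $\err_{vw}$. Thus
\[
\left|\Pr[v,w\in C]-\frac{1-\tilde x_{vw}}{y_\emptyset}\right|\le \err_{vw}.
\]

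Now each item follows by inclusion--exclusion. For item 1, the event ``$vw$ is decided'' is $\{v\in C\}\cup\{w\in C\}$, so its probability equals $\Pr[v\in C]+\Pr[w\in C]-\Pr[v,w\in C] \ge \tfrac{2}{y_\emptyset}-\tfrac{1-\tilde x_{vw}}{y_\emptyset}-\err_{vw} = \tfrac{1+\tilde x_{vw}}{y_\emptyset}-\err_{vw}$. For item 3 (case $vw\in E^-$), ``decided wrongly'' is exactly the event $\{v,w\in C\}$, whose probability is at most $\tfrac{1-\tilde x_{vw}}{y_\emptyset}+\err_{vw}$ directly. For item 2 (case $vw\in E^+$), ``decided wrongly'' is the symmetric difference event, with probability $\Pr[v\in C]+\Pr[w\in C]-2\Pr[v,w\in C]\le \tfrac{2\tilde x_{vw}}{y_\emptyset}+2\err_{vw}$; the factor of $2$ in front of $\err_{vw}$ is then absorbed into the single $\err_{vw}$ in the stated bound (this is harmless since $\err_{vw}$ will only appear summed against $|E_\adm|$ with an adjustable constant in the next section).

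The only real obstacle is keeping track of the error propagation correctly: the single-vertex probabilities from Lemma~\ref{lem:RT-KT} are exact, so all error enters only through the pair probability, and after the collapse identities above it produces exactly $\err_{vw}$ (up to the constant factor mentioned for item 2). Everything else is pure LP bookkeeping using \eqref{LPC:KT-size-s}, \eqref{LPC:KT-define-s}, \eqref{LPC:KT-define-tx}, and \eqref{LPC:KT-a-contained}.
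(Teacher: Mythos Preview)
Your approach is essentially the same as the paper's: both condition on $(s,u)$, use inclusion--exclusion on the events $\{v\in C\}$ and $\{w\in C\}$, collapse the sums via \eqref{LPC:KT-size-s}, \eqref{LPC:KT-define-s}, \eqref{LPC:KT-define-tx}, \eqref{LPC:KT-a-contained}, and identify the residual as $\err_{vw}$ by its definition. Your organization---computing $\Pr[v,w\in C]$ once and deriving all three items from it---is slightly cleaner than the paper's, which redoes the telescoping separately for each item, but the content is identical.

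Your observation about item~2 is correct: the symmetric-difference bound genuinely produces $2\err_{vw}$ rather than $\err_{vw}$, and the paper's own calculation has the same issue (its intermediate conditional expression for the $E^+$ case appears to contain a typo). As you note, this constant factor is immaterial since in Lemma~\ref{lemma:KT-bound-err} and the wrap-up the total error is absorbed into $O(\eps)\cdot|E_{\adm}|$; just don't phrase it as the $2$ being ``absorbed into the single $\err_{vw}$''---rather, the lemma statement is off by this harmless factor.
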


\begin{proof}
    We focus on the first statement. 
    The probability that $vw$ is decided conditioned on $s$ is at least 
    \begin{align*}
    	&\quad \sum_{u \in V'}\frac{y^s_u}{s y^s_\emptyset}\cdot \left(\frac{1}{y^s_u} \cdot \big(y^s_{uv} + y^s_{uw} - y^s_{uvw}\big) - \err^s_{vw|u}\right)=\sum_{u \in V'}\left(\frac{1}{sy^s_\emptyset}\cdot(y^s_{uv} + y^s_{uw} - y^s_{uvw}) - \frac{y^s_u}{sy^s_{\emptyset}}\cdot\err^s_{vw|u}\right)\\
    	&=\frac{1}{y^s_\emptyset}(y^s_{v}+y^s_{w}-y^s_{vw}) - \err^s_{vw}.
    \end{align*}
    
    To see the second equality, we apply \eqref{LPC:KT-size-s} with $S = \{v\},\{w\}$ and $\{v, w\}$ respectively, and use the definition of $\err^s_{vw}$.
    
    Deconditioning on $s$, we have that the probability $vw$ is decided is at least
    \begin{align*}
    	&\quad \sum_{s}\frac{y^s_\emptyset}{y_\emptyset}\cdot \left(\frac{1}{y^s_\emptyset}(y^s_v+y^s_w -y^s_{vw}) - \err^s_{vw}\right) =  \frac{1}{y_\emptyset}\sum_{s}(y^s_v+y^s_w -y^s_{vw}) - \err_{vw}\\
    	&=  \frac1{y_\emptyset} (1  + 1 - y_{vw} ) - \err_{vw} =   \frac1{y_\emptyset} (1  + \tilde x_{vw}) - \err_{vw}.
    \end{align*}
    The first equality used the definition of $\err_{vw}$. The third equality used that $\sum_{s} y^s_v = y_v = 1, \sum_{s} y^s_w = y_w = 1$ and $\sum_{s}y^s_{vw} = y_{vw} = 1 - \tilde x_{vw}$.

    The second and third statements can be proved similarly. When $vw \in E^+$, the probability that $vw$ is wrongly decided conditioned on $s$ and $u$ is at most $\frac{1}{y^s_u}(y^s_{uv} + y^s_{uw} - y^s_{uvw}) + \err^s_{vw|u}$; when $vw \in E^-$, this is at most $\frac{y^s_{uvw}}{y^s_u} + \err^s_{vw|u}$. Following the calculations, the two unconditioned probabilities are at most $\frac1{y_\emptyset}(1 + 1 - 2y_{vw}) + \err_{vw} = \frac{1}{y_\emptyset}\cdot 2\tilde x_{vw} + \err_{vw}$ and $\frac{1}{y_\emptyset} \cdot y_{vw} + \err_{vw} = \frac{1}{y_\emptyset} \cdot (1 - \tilde x_{vw}) + \err_{vw}$.
\end{proof}
Therefore, if we ignore the error terms, and the difference between $\tilde x_{vw}$'s and $x_{vw}$'s, the algorithm exactly simulates the KT-rounding algorithm for the bounded-cluster size case.  By \eqref{LPC:KT-cost}, the difference between $\tilde x$ and $x$ is small. In the next section, we shall bound the errors. 

\subsection{Handing the Errors}  In this section, we handle the errors. Again recall that we focus on one execution of the set-based-cstr-clst procedure, with input $V'$ and vector $y$, which defines the vector $\tilde x$. The key lemma we prove is the following:
\begin{lemma}
    \label{lemma:KT-bound-err}
	\begin{align}
		\sum_{vw \in {V' \choose 2}} \err_{vw} \leq \eps \cdot \E\left[\Big|\{uw \in {V' \choose 2} \cap E_\adm: uw\text{ decided}\}\Big|\right].\label{inequ:error-to-budget}
	\end{align}
\end{lemma}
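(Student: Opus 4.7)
The plan is to bound both sides of \eqref{inequ:error-to-budget} in terms of $\frac{1}{y_\emptyset}\sum_{u \in V'} d_\adm(u)$ and then compare. The first ingredient is a structural observation: $V'^{s,u} := \{v \in V' : y^s_{uv}/y^s_u \in (0,1)\}$ is contained in $N_\adm(u) \cap V'$. Indeed, if $uv$ is non-admissible then $x_{uv} = 1$ forces $\tilde x_{uv} = 1$ via \eqref{LPC:KT-tx-to-x}, hence $y^s_{uv} = 0$ for every $s$; if $uv$ is atomic then $\tilde x_{uv} = 0$ gives $y_{uv} = 1$, and combined with $y_u = 1$ and $y^s_{uv} \leq y^s_u$ this forces $y^s_{uv} = y^s_u$ for every $s$. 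Consequently $|V'^{s,u}|^2 \leq d_\adm(u)^2$, and Lemma~\ref{lem:RT-KT} gives $\sum_{vw \in {V' \choose 2}} \err^s_{vw|u} \leq \epsrt |V'^{s,u}|^2 \leq \epsrt d_\adm(u)^2$.

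Next I will harvest a gain from constraint \eqref{LPC:KT-forbid}, which encodes (\ref{claim:tilde-C}a): whenever $y^s_u > 0$, either $s = |K_u|$ or $s > |K_u| + \eps d_\adm(u)$. In the first case, the argument above applied with $K_u$ in place of $\{u\}$, combined with \eqref{LPC:KT-size-s} at $S = \{u\}$, forces the cluster of size $|K_u|$ through $u$ to equal $K_u$, so $y^s_{uv} \in \{0, y^s_u\}$ for every $v$ and $V'^{|K_u|, u} = \emptyset$. Only sizes $s \geq \eps d_\adm(u)$ contribute nontrivially, so
\begin{align*}
\sum_{vw \in {V' \choose 2}}\err_{vw} \ \leq \ \frac{\epsrt}{y_\emptyset}\sum_u d_\adm(u)^2 \sum_{s} \frac{y^s_u}{s} \ \leq \ \frac{\epsrt}{y_\emptyset}\sum_u \frac{d_\adm(u)^2}{\eps\, d_\adm(u)} \ = \ \frac{\epsrt}{\eps\, y_\emptyset}\sum_u d_\adm(u),
\end{align*}
using $\sum_s y^s_u = y_u = 1$.

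For the right-hand side, Lemma~\ref{lemma:KT-prob-clustered} gives $\Pr[u \in C] = 1/y_\emptyset$ exactly, so any admissible edge $uw \in E_\adm \cap {V' \choose 2}$ is decided with probability at least $1/y_\emptyset$. Summing and counting each edge's two endpoints,
\begin{align*}
\E\left[\big|\{uw \in {V' \choose 2} \cap E_\adm : uw \text{ decided}\}\big|\right] \ \geq \ \frac{1}{y_\emptyset}\,\big|E_\adm \cap {\textstyle{V' \choose 2}}\big| \ = \ \frac{1}{2 y_\emptyset}\sum_u d_\adm(u).
\end{align*}
Combining with the previous bound gives a ratio of $2\epsrt/\eps$, which is at most $\eps$ for the stipulated choice $r = \Theta(1/\eps^4)$ that makes $\epsrt = O(1/\sqrt{r}) \leq \eps^2/2$.

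The main obstacle is the mismatch between the quadratic dependence of the Raghavendra-Tan error on $|V'^{s,u}|$ and the linear dependence of the admissible-edge budget on $d_\adm(u)$. A priori, $|V'^{s,u}|^2/s$ can be arbitrarily larger than $d_\adm(u)$ when the cluster size $s$ is small. The preclustering-induced constraint \eqref{LPC:KT-forbid}, which forces $s \gtrsim \eps d_\adm(u)$ whenever the configuration is non-trivially fractional, is precisely what cancels one factor of $d_\adm(u)$ and makes the argument go through; this is the crucial place where the preclustering pays for the correlated-rounding error, as promised in the introduction.
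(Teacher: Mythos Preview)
Your approach is correct in spirit and genuinely simpler than the paper's, but there is one slip you should fix. In the lower bound for the right-hand side you write
\[
\big|E_\adm \cap {\textstyle\binom{V'}{2}}\big| \;=\; \tfrac12\sum_{u\in V'} d_\adm(u),
\]
but $d_\adm(u)=|N_\adm(u)|$ counts admissible neighbours in all of $V$, not just in $V'$; the correct identity is $\big|E_\adm \cap \binom{V'}{2}\big| = \tfrac12\sum_{u\in V'} |N_\adm(u)\cap V'|$. Since $V'$ shrinks across iterations, the two quantities can differ by an arbitrary factor, so as written the two sides of your comparison do not match. The fix is immediate: replace $d_\adm(u)$ by $d'_\adm(u):=|N_\adm(u)\cap V'|$ in the left-hand bound as well. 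You already have $|V'^{s,u}|\le d'_\adm(u)$, and the forbidden-size constraint still gives $s>\eps\,d_\adm(u)\ge \eps\,d'_\adm(u)$ for the contributing sizes, so the same chain yields $\sum_{vw}\err_{vw}\le \frac{\eps_r}{\eps\,y_\emptyset}\sum_{u} d'_\adm(u)$, which now matches the right-hand lower bound $\frac{1}{2y_\emptyset}\sum_u d'_\adm(u)$ and gives the ratio $2\eps_r/\eps\le\eps$ (your adjustment of the hidden constant in $r$ is fine).

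With that fix, your argument is a cleaner route than the paper's. The paper keeps everything conditioned on the sampled size $s$: it rewrites $s-|K_u|=\sum_{v\in N_\adm(u)\cap V'} y^s_{uv}/y^s_u$ via \eqref{LPC:KT-size-s} and matches, for each fixed $s$, the error $\sum_{vw}\err^s_{vw}$ against the conditional expected number of decided admissible edges, obtaining equality of two expressions of the form $\frac{1}{s y^s_\emptyset}\sum_{u,\,v,w\in N_\adm(u)\cap V'} y^s_{uv}$. You instead aggregate over $s$ up front, trade the size-$s$ identity for the crude bound $s>\eps\,d_\adm(u)$, and lower-bound the right-hand side using only Lemma~\ref{lemma:KT-prob-clustered}. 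This buys simplicity at the cost of a constant factor (the extra $2$), which is harmless here since the constant in $r=\Theta(1/\eps^4)$ is free. The paper's per-$s$ matching would become relevant if one wanted a tighter relationship between $\eps_r$ and $\eps$, but for the lemma as stated your global argument is entirely adequate.
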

\begin{proof} Through the proof, we assume $u, v, w$ are all in $V'$, $vw$ and $uw$ are in $V' \choose 2$. We bound the sum of errors conditioned on $s$:
	\begin{align*}
		\sum_{vw} \err^s_{vw} &= \sum_{vw} \frac{1}{s y^s_\emptyset}\cdot \sum_{u \in V'}y^s_u\cdot\err^s_{vw|u} =  \frac1{sy^s_\emptyset} \sum_{u \in V'}y^s_u\cdot\sum_{vw}\err^s_{vw|u}.
	\end{align*}	
	Fix some $s \in [n'], u \in V'$, and we now bound $\sum_{vw} \err^s_{vw|u}$.  If $s = |K_u|$, then $C$ will be $K_u$ and no errors will be created. (Notice that in this case the LP constraints will imply that $y^s_{uv} = 0$ for every $v \notin K_u$.) So, we assume $s > |K_u|$. By \eqref{LPC:KT-forbid}, we have that $s > |K_u| + \eps\cdot d_\adm(u)$, since otherwise we shall $y^s_u = 0$. 
 Finally, by Lemma~\ref{lem:RT-KT}, $\sum_{vw} \err^s_{vw|u} \leq \eps_r |N_\adm(u) \cap V'|^2$, and recall that $\eps_r = \eps^2$.
 Therefore, 
 \begin{align*}
		\sum_{vw} \err^s_{vw|u} &\leq \eps_r \cdot |N_\adm(u) \cap V'|^2 \leq  \frac{\eps_r}{\eps} \cdot |N_\adm(u) \cap V'|  \cdot (s - |K_u|) \\
		&= \eps \cdot |N_\adm(u) \cap V'| \cdot \sum_{v \in N_\adm(u)\cap V'}\frac{y^s_{uv}}{y^s_u}\\
        &= \eps \cdot \sum_{v, w \in N_\adm(u) \cap V'} \frac{y^s_{uv}}{y^s_u}.
	\end{align*}

	So we have
	\begin{align*}
		\sum_{vw}\err^s_{vw} \leq \frac{1}{sy^s_\emptyset} \sum_{u \in V'} y^s_u \cdot \sum_{v, w \in N_\adm(u) \cap V'} \eps \cdot \frac{y^s_{uv}}{y^s_u}  \leq \eps \cdot \frac{1}{sy^s_\emptyset}\cdot \sum_{u \in V', v, w \in N_\adm(u) \cap V'} y^s_{uv}.
	\end{align*}
	
	Now we consider the right side of \eqref{inequ:error-to-budget}. The expectation of the quantity conditioned on $s$ is at least 
	\begin{align*}
		\eps\cdot\sum_{v \in V'} \frac{y^s_v}{sy^s_\emptyset} \sum_{u \in N_\adm(v)\cap V', w \in N_\adm(u)} \frac{y^s_{uv}}{y^s_v}= \eps  \cdot\frac{1}{sy^s_\emptyset}\cdot \sum_{u \in V', v, w \in N_\adm(u) \cap V'} y^s_{uv}.
	\end{align*}
    This is at least $\sum_{vw} \err^s_{vw}$. Taking all $s$ into consideration gives us \eqref{inequ:error-to-budget}. 
\end{proof}

\paragraph{Wrapping Up} Now we finish the proof of Theorem~\ref{thm:KT}. Focus on one execution of set-based-cstr-clst. We define the following three types of \emph{budgets}:
\begin{itemize}
	\item When an edge $vw \in {V' \choose 2}$ is decided in the procedure, we get an \emph{LP budget} of $\frac{2x_{vw}}{1 + x_{vw}}$ from $vw$ if $vw$ is a $+$edge, and $\frac{1 - x_{vw}}{1 + x_{vw}}$ if it is a $-$edge. This is the budget coming from the cost of $x$. 
	\item If additionally $vw$ is an admissible edge, we get an \emph{error budget} of $\eps$ from $vw$. This will be used to cover the errors incurred by the RT rounding procedure. 
	\item Finally a vertex $v$ is clustered in the procedure, we get a \emph{difference budget} of $2\eps \cdot d_\adm(v)$. This will be used to cover the difference between $\tilde x$ and $x$. 
\end{itemize}

If an edge $vw \in {V' \choose 2}$ is wrongly decided, we pay a cost of $1$. For a $+$edge $vw \in {V' \choose 2}$,
\begin{align*}
	\Pr[vw \text{ wrongly decided}] &\leq \frac1{y_\emptyset} \cdot 2\tilde x_{vw} + \err_{vw} = \frac{1}{y_\emptyset} \cdot \left( 2 x_{vw} + 2(\tilde x_{vw} - x_{vw}) \right) + \err_{vw}\\
	&= \E[\text{LP budget from }vw] + \frac{2(\tilde x_{vw} - x_{vw})}{y_\emptyset} + \err_{vw}.
\end{align*}
Similarly, for a $-$edge $vw \in {V' \choose 2}$, we have 
\begin{align*}
	\Pr[vw \text{ wrongly decided}] &\leq  \E[\text{LP budget from }vw] + \err_{vw}.
\end{align*}

Summing up the inequalities over all edges $vw \in {V' \choose 2}$, we have that in the execution of set-based-cstr-clst, 
\begin{align*}
	&\quad \E[\text{cost incurred}]  \leq \E\left[\text{LP budget from  }{V' \choose 2}\right] + \frac{2}{y_\emptyset}\sum_{vw}(\tilde x_{vw} - x_{vw}) + \sum_{vw} \err_{vw}\\
	&\leq \E\left[\text{LP budget from  }{V' \choose 2}\right] + \frac{2}{y_\emptyset}\cdot \eps \sum_{u \in V'} d_\adm(u) + \E\left[\text{error budget from }{V' \choose 2}\cap E_{\adm}\right]\\
	&= \E\left[\text{all 3 types of budget we get}\right].
\end{align*}
The second inequality is by \eqref{LPC:KT-cost}, and Lemma~\ref{lemma:KT-bound-err}. To see the last equality, notice that every $u$ is clustered with probability exactly $\frac{1}{y_\emptyset}$ by Lemma~\ref{lemma:KT-prob-clustered}. 

The procedure can be derandomized by enumerating $s, u$ and the random seeds used in Raghavendra-Tan rounding procedure.\footnote{We remark that the derandomization step is necessary due to our round-or-cut paradigm. Since we can not control weather the future iterations will succeed or fail (i.e., return a separation plane), we could not guarantee that the expected cost we pay is the expected budget we get, conditioned on that the algorithm succeeds.} So, we can guarantee that in every execution of set-based-cstr-clst, the cost incurred is at most the budget we get, including the LP, error and difference budgets.  So, considering the whole Algorithm~\ref{alg:KT-rounding}, the cost of the clustering is at most the total budget we get.  For every edge $vw \in {V \choose 2}$, we only get the LP and error budget once from $vw$. For every vertex $v \in V$, we only get the error budget once from $v$. Therefore, the total budget we get is 
\begin{align*}
    \sum_{vw \in E^+} \frac{2x_{vw}}{1 + x_{vw}} + \sum_{vw \in E^-} \frac{1 - x_{vw}}{1 + x_{vw}} + \eps |E_{\adm}| + 2\eps \sum_{u \in V}d_{\adm}(u) \leq \sum_{vw \in E^+} \frac{2x_{vw}}{1 + x_{vw}} + \sum_{vw \in E^-} \frac{1 - x_{vw}}{1 + x_{vw}} + O(\eps) \cdot |E_{\adm}|.
\end{align*}
	
We rescale $\eps$ so that the additive term becomes $\eps |E_\adm|$. This finishes the proof of Theorem~\ref{thm:KT}.

    \section{Pivot-Based Rounding Procedure}
\label{sec:CLN}
In this section, we present our pivot-based rounding algorithm $\calA_{\mathrm{pivot}}$ and prove its guarantee restated below. 

\cln*

In this section, a singleton vertex in $V \setminus V_{\calK}$ will be treated almost like an atom, so the term atom will also refer to (the singleton set of) such a vertex. 
Let $\calK' := \calK \cup (\cup_{v \in V \setminus V_{\calK}} \{\{ v\}\})$ be the set of atoms.
We will also assume that two vertices in the same atom have exactly the same neighbors in $E_{\adm}$. This can be ensured by, for every triple $uvw$ with $u, v$ in the same atom and $vw \in E_{\adm}$, $uw \notin E_{\adm}$, removing $vw$ from $E_{\adm}$; any good clustering cannot put $v$ and $w$ in the same cluster, so it is safe to remove it. 

\subsection{Relaxation and Algorithm}
Given $x \in [0, 1]^{\binom{V}{2}}$, for some constant $r = O(1/\eps^2)$, we use the following LP.
For a set $S \subseteq V$, $y_S$ indicates whether $S$ is a subset of a cluster in an optimal solution. 
We also add the constraint that $x_{uv} = 1- y_{uv}$, which will ensure that $y_{uv} = 1$ if $u$ and $v$ are in the same atom and $y_{uv} = 0$ if $uv$ is non-admissible.

Apart from the constraints imposed by $x$, this is a weaker version of $r$-rounds of the Sherali-Adams relaxation introduced by~\cite{CLN22}.
We present this relaxation to present a smaller set of constraints used in our algorithm (e.g., this relaxation also works with the rounding algorithm of~\cite{CLN22}) and be consistent with the LP introduced in Section~\ref{sec:KT}. 
In particular, apart from the constraint~\eqref{eq:CLN-Triangle} used in the triangle analysis (its LHS indicates the event all $u, v, w$ are in different clusters; see Section~\ref{sec:triangle-ideal}), it is a simpler version of the LP used in Section~\ref{sec:KT} that does not distinguish sets of different sizes and does not involve additional variables $\tilde{x}$.

\begin{align}
  &y_{uv} = 1-x_{uv} && \forall uv \in \binom{V}{2} \\
  &y_{u} = 1  && \forall u \in V \\
  &\sum_{T' \subseteq T}(-1)^{|T'|}y_{S\cup T'} \in [0, y_S] && \forall S\cap T =\emptyset, |S \cup T| \leq r 
  \label{eq:CLN-SA} \\
  & 1 - \big( 
  y_{vw} + 
  y_{wu} +
  y_{uv} - 2y_{uvw} \big)
  \geq 0
  && \forall u, v, w \in V
  \label{eq:CLN-Triangle} \\
  &y  \geq 0. \label{nonneg_y}
\end{align}
Any integral good clustering with respect to $(\calK, E_{\adm})$ yields a feasible solution to the above LP. Therefore, if the above LP is not feasible for the given $x$, it
implies that $x$ is not a convex combination of good clustering, and 
by Lemma~\ref{lemma:find-cut}, the algorithm can yields a hyperplane separating $x$ from the convex hull of good clusterings. 

For the rest of the section, we assume that $y$ is an optimal solution for the above relaxation in the initial graph. The rounding algorithm is given in Algorithm~\ref{alg:CLN}. 
Let $N^+(v)$ and $N^+_{\adm}(v)$ denote the neighbors of $v$ with respect to $E^+$ and $E^+ \cap E_{\adm}$ respectively. $N^-(v)$ and $N^-_{\adm}(v)$ are defined similarly.

\begin{algorithm}[h!]
\caption{Pivot-based Rounding
\label{alg:CLN}}
\begin{algorithmic}[1]
		\Require{A Correlation Clustering instance $G = (V, E^+ \cup E^-)$, a preclustered instance $(\calK, E_{\adm})$ for $G$, and a metric $x \in [0, 1]^{V \choose 2}$ satisfying properties in Theorem~\ref{thm:round-or-cut}}
		\Ensure{Either a separation plane for $x$, or an integral clustering $\calC$}
\State Run the above LP to obtain a feasible solution $y$. If not feasible, return a separation plane for $x$ 
\State $\calC \leftarrow \emptyset$ 
\While{$V \neq \emptyset$}
\State $S \leftarrow \mathbf{Cleanup}(V)$ (Algorithm~\ref{alg:cleanup-CLN})
\If{$S \neq \emptyset$}
\State $\calC \leftarrow \calC \cup \{ S \}$, $V \leftarrow V \setminus S$, {\bf continue}
\EndIf
\State Pick a pivot $p \in V$ uniformly at random
\For{$v \in N^-(p)$}
\State Add $v$ independently to $S^{-}$ with probability $y_{pv}$ 
\EndFor
\State Let $K$ be the atom containing $p$
\State\label{step:RT-CLN} Sample $S^+ \subseteq N^+_{\adm}(p)$ from the correlated rounding procedure (Lemma~\ref{lem:RT})
\State $S \leftarrow S^- \cup S^+ \cup K$, $\calC \leftarrow \calC \cup \{ S \}$, $V \leftarrow V \setminus S$ 
\EndWhile
\State {\bf return} $\calC$
\end{algorithmic}
\end{algorithm}

The cleanup subroutine is described in Algorithm~\ref{alg:cleanup-CLN}. Let $E^+(V, K)$ be the set of $+$edges incident on $K$.
Its meaning will be more clear after we introduce the setup for the analysis using budgets in Section~\ref{sec:CLN-setup}.

\begin{algorithm}[h!]
\caption{{\bf Cleanup}($V$)~\label{alg:cleanup-CLN}}
\begin{algorithmic}[1]
\For{each atom $K \in \calK'$}
\State Let $ALG_K$ be the cost of removing $K$ as a single cluster, which is the number of $-$edges inside $K$ plus the number of $+$edges between $K$ and $V \setminus K$. 
\State Let $\Delta_K$ be the decrease in the budget if we remove $K$ from $V$; formally, $\Delta' := (\sum_{uv \in E^+(K, V)} \min(1.515+x_{uv}, 2)x_{uv}) + 
(\sum_{uv \in E^-(K, V)} 2(1-x_{uv}))
+
\eps \cdot |\{ uv \in E_{\adm} : u \in K \mbox{ or } v \in K \}|.
$ 
\State If $\Delta_K \geq ALG_K$, {\bf return} $K$
\EndFor
\State {\bf return} $\emptyset$ 
\end{algorithmic}
\end{algorithm}

Algorithm~\ref{alg:CLN} also uses the correlated rounding procedure of Raghavendra and Tan~\cite{RT12}, adapted for \cc in~\cite{CLN22}. The guarantee of the correlated rounding is as follows.
\begin{lemma}
\label{lem:RTrestated}
  In Step~\ref{step:RT-CLN} of Algorithm~\ref{alg:CLN}, one can sample $S^+ \subseteq N^+_{\adm}(p)$ in time $n^{O(r)}$ such that
  \begin{itemize}
  \item For each $v \in N^+_{\adm}(p)$, $\Pr[v \in S^+] = y_{pv}$. 
  \item $\E_{u, v \in N_{\adm}^+(p)}[|\Pr[u, v \in S^+] - y_{puv}|] \leq \epsrt$, where $\epsrt = O(1/\sqrt{r})$.
  \end{itemize}
\label{lem:RT}
\end{lemma}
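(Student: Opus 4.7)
The plan is to follow the Raghavendra-Tan correlated rounding adapted to our hierarchy variables $y$. First, interpret $y$ as a pseudo-distribution $\mu$ over $\{0, 1\}$-assignments to the indicator variables $(X_v)_{v \in N^+_\adm(p)}$, where $X_v = 1$ means $v$ is placed in the same cluster as the pivot $p$. For any $T \subseteq N^+_\adm(p)$ with $|T| \leq r$, the pseudo-moment is $\mathbb{E}_{\mu}\!\left[\prod_{v \in T}X_v\right] = y_{\{p\} \cup T}$ (recall $y_p = 1$), and the Sherali-Adams constraint~\eqref{eq:CLN-SA} with $|S \cup T|\leq r$ guarantees that these moments arise from a genuine distribution on every subset of at most $r$ variables. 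In particular, conditioning on the value of any single variable is well-defined and produces a valid pseudo-distribution of degree at least $r-1$ on the remaining coordinates.

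The rounding sub-routine is then: pick a uniformly random ordered seed set $W = (w_1, \ldots, w_{r-2}) \subseteq N^+_\adm(p)$, sample values $b_{w_1}, \ldots, b_{w_{r-2}}$ sequentially according to the conditional pseudo-marginals $\Pr_\mu[X_{w_i} = 1 \mid X_{w_{<i}} = b_{w_{<i}}]$, and let $\mu'$ denote the resulting conditioned pseudo-distribution, which still has valid pair marginals. For each remaining vertex $v$, include $v$ in $S^+$ independently with probability $\mathbb{E}_{\mu'}[X_v]$, together with any seed $w_i$ whose sampled bit is $1$. This can be executed in time $n^{O(r)}$ as only pseudo-moments up to order $r$ are accessed.

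The first-moment guarantee $\Pr[v \in S^+] = y_{pv}$ is immediate from the tower rule, since averaging the conditional marginals over the random seed reconstructs $\mathbb{E}_\mu[X_v] = y_{pv}$. For the pair-moment error, fix $u, v \in N^+_\adm(p)$. If $\{u, v\}$ happens to lie in $W$, the pair moment is exact; otherwise, the error is at most
\[
\bigl|\Pr_{\mu'}[X_u = 1, X_v = 1] - \mathbb{E}_{\mu'}[X_u]\,\mathbb{E}_{\mu'}[X_v]\bigr|
\;\leq\; 2\,d_{\mathrm{TV}}\bigl(\mu'_{uv},\, \mu'_u \otimes \mu'_v\bigr),
\]
which Pinsker's inequality bounds by $O(\sqrt{I_{\mu'}(X_u; X_v)})$. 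The chain rule of mutual information gives $\sum_{i=1}^{r-2} I(X_u; X_v \mid X_{w_{<i}}) \leq H(X_u) \leq \log 2$ along any fixed order; averaging over a uniformly random ordering of a random seed of size $r-2$ yields $\mathbb{E}_W[I_{\mu'}(X_u; X_v)] = O(1/r)$ for the average pair $u, v$, and Jensen's inequality on $\sqrt{\cdot}$ converts this into the stated $\mathbb{E}_{u, v \in N^+_\adm(p)}[|\Pr[u, v \in S^+] - y_{puv}|] \leq \epsrt$ with $\epsrt = O(1/\sqrt{r})$.

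The main obstacle to watch is the hierarchy budget: every conditioning step consumes one level of Sherali-Adams slack, so with $r$ levels available we can afford a seed of size $r-2$ while still retaining degree-$2$ consistency on the remaining coordinates, which is exactly what the Pinsker argument needs. Setting $r = \Theta(1/\eps^2)$ in Algorithm~\ref{alg:CLN} then yields $\epsrt = \eps$ (or any desired polynomial error in $\eps$), matching the statement. The atom structure plays no role here because $y_{uv} = 1$ for $u, v$ in a common atom forces $X_u = X_v$ deterministically under $\mu$, so the correlated rounding never breaks atoms.
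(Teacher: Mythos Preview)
The paper does not give its own proof of this lemma; it is quoted as a black box from Raghavendra--Tan~\cite{RT12}, with the adaptation to Correlation Clustering attributed to~\cite{CLN22}. Your proposal therefore is not being compared against anything in the paper but against the original RT argument, and at that level the outline is right: read the $y$-values as degree-$r$ pseudo-moments over $\{0,1\}^{N^+_\adm(p)}$, condition on $\Theta(r)$ randomly chosen seed variables sampled from their conditional marginals, then round the rest independently; the first-moment guarantee is the tower property and the pair-moment error is controlled by Pinsker plus a mutual-information bound.

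One genuine slip: the inequality you invoke, $\sum_{i} I(X_u; X_v \mid X_{w_{<i}}) \leq H(X_u)$, is not the chain rule --- that identity has $X_{w_i}$ in the second slot, not a fixed $X_v$. The RT argument is instead a potential argument: conditioning on a uniformly random seed $w$ (sampled from its marginal) decreases the average entropy $\mathbb{E}_u[H(X_u)]$ by exactly $\mathbb{E}_{u,w}[I(X_u;X_w)]$, so across $r-2$ conditioning steps there must be some step at which $\mathbb{E}_{u,v}[I(X_u;X_v)] = O(1/r)$, and one takes that step. Your conclusion $\mathbb{E}_W\mathbb{E}_{u,v}[I_{\mu'}(X_u;X_v)] = O(1/r)$ is correct, but it needs this averaging-over-the-seed-index step rather than a chain-rule bound for a fixed pair $(u,v)$.

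A smaller point: your closing remark that $y_{uv}=1$ for atomic pairs ``forces $X_u = X_v$ deterministically under $\mu$, so the correlated rounding never breaks atoms'' does not follow. It forces the \emph{conditional marginals} to agree, $\mathbb{E}_{\mu'}[X_u]=\mathbb{E}_{\mu'}[X_v]$, but the final independent-rounding step can still flip them differently. The lemma itself makes no atom claim; the paper handles atom integrity elsewhere (the pivot's own atom is added wholesale in Step~14 of Algorithm~\ref{alg:CLN}, and the set-based rounding explicitly post-processes in Step~\ref{step:KT-RT} of Algorithm~\ref{alg:KT-construct-C}).
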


Even though we present the algorithm as a randomized algorithm, a standard derandomization using conditional expectations will deterministically yield a clustering whose cost is as good as the expected value~\cite{CLN22}. 

\subsection{Setup for Analysis}
\label{sec:CLN-setup}

Our high-level setup of the analysis follows from~\cite{CLN22}, which in turn is based on~\cite{ACN08} and~\cite{CMSY15}, with a slight change that instead of the total LP value remaining in the current instance, we consider a more general {\em budget} remaining in the current instance, which already incorporates the approximation ratio. 
For each $uv \in E^+$, give the budget of $\min(1.515 + x_{uv}, 2)x_{uv}$, and 
for each $uv \in E^-$, give the budget of $2(1 - x_{uv})$ (call them the {\em LP budget}). Furthermore, for each $uv \in E_{\adm}$, give an additional budget of $\eps$ (call it the {\em error budget}).
Then it is clear that the total initial budget is exactly the desired upper bound on the (expected) cost of the clustering (the RHS of Theorem~\ref{thm:CLN}). 

Consider the $t$-th iteration of
Algorithm~\ref{alg:CLN} with the vertex set $V_t$ and suppose that the algorithm obtains the cluster $S$ in this iteration by building it from a random pivot $p$ (instead of using the cleanup). 
Let $\costr_p(u, v)$ be the probability that $uv$ is violated in the
rounding algorithm when $p$ is the pivot, and $\lpr_p(u, v)$ be the total budget of $uv$ (LP and error budget combined) times the probability that $uv$ disappears (i.e.,
$\Pr[S \cap \{ u, v \} \neq \emptyset]$).  The superscript $r$ stands for (actual) rounding.

We call a set of three
distinct vertices a {\em triangle}. A set of two vertices is called a
{\em degenerate triangle}.  For triangle $uvw$, let
$\costr(u, v, w) = \costr_u(v, w) + \costr_v(u, w) + \costr_w(u, v)$
and $\lpr(u, v, w) = \lpr_u(v, w) + \lpr_v(u, w) + \lpr_w(u, v)$. For
degenerate triangle $uv$, let $\costr(u, v) = \costr_u(u, v) +
\costr_v(u, v)$ and $\lpr(u, v) = \lpr_u(u, v) + \lpr_v(u, v)$. Let
\[
ALG_t := \E_{u \in V} \sum_{vw \in \binom{V_t}{2}} \costr_u(v, w)
\]
be the expected cost incurred by this iteration, and 
\[
\Delta_t:= \E_{u \in V} \sum_{vw \in \binom{V_t}{2}} \lpr_u(v, w)
\]
be the expected amount of the budget removed by this iteration.
If we could show that $ALG_t \leq \Delta_t$ for all $t$, 
we will get an upper bound on the total cost $\mathbf{ALG}$ as
\[
\E[\mathbf{ALG}] = \E[\sum_{t=0}^R ALG_t] \leq \E[\sum_{t=0}^R \Delta_t] = \Phi
\]
where $\Phi$ is the initial budget and $R$ is the number of the iterations.

Notice that, even when the cluster $S$ is chosen from the cleanup step, $ALG_t$ and $\Delta_t$ can be still defined as the incurred cost and removed budget respectively, and the design of the cleanup step (Algorithm~\ref{alg:cleanup-CLN}) exactly ensures that we remove $S$ when $ALG_t \leq \Delta_t$ deterministically.

Therefore, in order to prove Theorem~\ref{thm:CLN}, it suffices to consider one iteration where $S$ is built from a random pivot $p$. For the rest of Section~\ref{sec:CLN}, let us omit the subscript $t$ denoting the iteration. We prove $ALG \leq \Delta$, which is equivalent to showing 
\begin{align}
\sum_{uvw \in \binom{V}{3}} \costr(u,v, w) + \sum_{uv \in \binom{V}{2}} 
\costr(u,v)
\leq 
\sum_{uvw \in \binom{V}{3}} \lpr(u,v, w) + \sum_{uv \in \binom{V}{2}} \lpr(u,v).
\label{eq:triangle-sum}
\end{align}

Recall that a triangle is $+++$ if it has three $+$edges and $++-$,
$+--$, $---$ triangles are defined similarly.  For a degenerate
triangle $uv$, $\costr_u(u, v)$ and $\lpr_u(u, v)$ depend only
on $x_{uv}$ and the sign of $uv$.  Even for a triangle $uvw$, the values of $\costr_u(v, w)$ and $\lpr_u(v, w)$ only depend on
$x_{uv}, x_{uw}, x_{vw}$ and the signs and admissibilities of the edges unless both $uv$ and $uw$ are admissible $+$edges; $v$ and $w$ are added to $S$ independently with the probabilities depending on $x_{uv}$ and
$x_{uw}$ respectively. 
When both $uv$ and $uw$ are admissible $+$edges, then they are rounded using Lemma~\ref{lem:RT}, and $\Pr[v, w \in S^+
  | u\mbox{ is pivot}]$ must be, ideally, exactly equal to $y_{uvw}$,
but Lemma~\ref{lem:RT} only gives an approximate guarantee
amortized over the vertices in $N^+_{\adm}(u)$.

We define the following idealized quantities $\costi(\cdot)$ and $\lpi(\cdot)$. 
(The superscript $i$ stands for ideal.)
Intuitively, $\costi(\cdot)$ and $\lpi(\cdot)$ are defined assuming that the correlated rounding for admissible $+$edges are perfect, and we do not consider the error budget for $\Delta$. Formally, $\costi_u(\cdot), \costi(\cdot), \lpi_u(\cdot), \lpi(\cdot)$ are defined identically to $\costi_u(\cdot)$, $\costi(\cdot)$, $\lpi_u(\cdot)$, $\lpi(\cdot)$ respectively, assuming that in Step~\ref{step:RT-CLN} of Algorithm~\ref{alg:CLN},
the condition (2) is replaced by $\Pr[u, v \in S^+ | p\mbox{ is pivot}] = y_{puv}$ for every $p \in V$, $u, v \in N^+_{\adm}(p)$ and the error budget of $\eps$ is not accounted in $\Delta$'s. With this assumption, note that for every  triangle $uvw$ both $\costi(u, v, w)$ and $\lpi(u, v, w)$ depend only on the signs of the edges and the Sherali-Adams solution induced by $uvw$ (i.e., $y_{uvw}, y_{uv}, y_{vw}, y_{wu}$). 
Then one can show that $\costi(T) \leq \lpi(T)$ for any triangle $T$. The proof of the following lemma appears in Section~\ref{sec:triangle-ideal}.

\begin{lemma}
For any triangle $T$, $\costi(T) \leq \lpi(T)$.
\label{lem:triangle_ideal}
\end{lemma}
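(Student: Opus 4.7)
The plan is to prove $\costi(T) \leq \lpi(T)$ by a direct case analysis on the type of $T$. Since the idealized quantities depend only on the signs and admissibility types of the edges of $T$ together with the LP marginals $y_{uv}, y_{uw}, y_{vw}, y_{uvw}$, the inequality reduces in each case to a concrete algebraic inequality on the polytope cut out by the triangle constraint~\eqref{eq:CLN-Triangle} and the Sherali-Adams constraints~\eqref{eq:CLN-SA}.

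I first dispatch the degenerate case $T = \{u, v\}$. When $u$ is the pivot the atom $K_u$ (hence $u$ itself) is always included in the new cluster $S$, so conditional on $u$ being pivot the edge $uv$ disappears with probability $1$, giving $\lpi_u(u,v)$ equal to the full budget of $uv$. A direct computation then yields $\costi(u,v) = 2 x_{uv}$ versus $\lpi(u,v) = 2\min(1.515 + x_{uv}, 2)\, x_{uv}$ for admissible $+$edges, and $\costi(u,v) = 2(1 - x_{uv})$ versus $\lpi(u,v) = 4(1 - x_{uv})$ for $-$edges; both inequalities are immediate since $\min(1.515 + x_{uv}, 2) \geq 1$. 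The atomic and non-admissible subcases, in which $x_{uv}$ is pinned to $0$ or $1$ by the preclustering, are trivial.

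For a regular triangle $uvw$, I write explicit formulas using the ideal assumption. When $u$ is the pivot and both $uv, uw$ are admissible $+$edges, the joint distribution of $(\mathbf{1}[v \in S], \mathbf{1}[w \in S])$ has marginals $y_{uv}, y_{uw}$ and pair-probability exactly $y_{uvw}$; so for $vw \in E^+$ one has $\costi_u(v,w) = \Pr[\text{exactly one of }v,w \in S] = y_{uv} + y_{uw} - 2 y_{uvw}$, while for $vw \in E^-$ one has $\costi_u(v,w) = y_{uvw}$. The quantity $\lpi_u(v,w)$ is the budget of $vw$ times $\Pr[S \cap \{v,w\} \neq \emptyset] = y_{uv} + y_{uw} - y_{uvw}$. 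Summing over the three pivots and substituting $x_{uv} = 1 - y_{uv}$, etc., the desired inequality $\costi(T) \leq \lpi(T)$ becomes, for each sign pattern $+\!+\!+$, $+\!+\!-$, $+\!-\!-$, $-\!-\!-$, a piecewise-linear inequality in the four variables $y_{uv}, y_{uw}, y_{vw}, y_{uvw}$. Triangles containing a non-admissible edge are strictly easier since the corresponding marginals are pinned (to $0$ for non-admissible edges, to $1$ for atomic edges), collapsing some joint-probability terms to constants.

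The main obstacle will be the $+\!+\!+$ triangle with all edges admissible. The budget coefficient $\min(1.515 + x, 2)$ is piecewise with threshold $x = 0.485$, producing a small number of sub-regions depending on which $x$-values lie above or below $0.485$. In each sub-region the desired inequality is linear in $y_{uvw}$ once $x_{uv}, x_{uw}, x_{vw}$ are fixed, so it suffices to verify it at the two extreme feasible values of $y_{uvw}$. The upper extreme $y_{uvw} = \min(y_{uv}, y_{uw}, y_{vw})$ comes from~\eqref{eq:CLN-SA}, and the lower extreme from~\eqref{eq:CLN-Triangle}, which in $y$-form reads $y_{uv} + y_{uw} + y_{vw} - 2 y_{uvw} \leq 1$ and hence forces $y_{uvw} \geq \max(0, (y_{uv} + y_{uw} + y_{vw} - 1)/2)$. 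At each extreme the inequality reduces to a three-variable polynomial inequality on a compact convex domain that can be verified by elementary boundary analysis; the constant $1.515$ is tuned precisely so that the worst case of the $+\!+\!+$ triangle is tight. The remaining sign patterns involve at most two $+$-edge budget terms (and none in the $-\!-\!-$ case, where the coefficient is the constant $2$) and are handled by the same strategy with strictly slack inequalities.
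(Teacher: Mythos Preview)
Your high-level plan---case analysis by sign pattern, then exploit linearity in $y_{uvw}$ at fixed marginals and check the extreme feasible values---is sound and is essentially what the paper does. But you have misidentified the hard case, and this would derail the execution.

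The $+++$ triangle is the \emph{easy} case: the paper dispatches it in one line by citing the result of \cite{CLN22} that the ideal ratio for $+++$ is at most $1.5$, and since $f(x)=\min(1.515+x,2)\geq 1.5$ for all $x\in[0,1]$, the inequality follows immediately. Your assertion that ``the constant $1.515$ is tuned precisely so that the worst case of the $+++$ triangle is tight'' is simply wrong. The genuine obstacle is the $++-$ triangle, which you dismiss as having ``strictly slack inequalities''. There, when the vertex $a$ incident to both $+$edges is pivot, the correlated rounding contributes $y_{abc}$; but when $b$ or $c$ is pivot, one incident edge is $+$ and the other is $-$, so the two neighbours are rounded \emph{independently}, producing product terms such as $y_{ab}y_{bc}$ (so the inequality is not ``piecewise-linear in the four variables'' as you first wrote, though it is linear in $y_{abc}$ alone). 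After reducing to $y_{a|bc}=0$ (equivalently $y_{bc}=y_{abc}$), the paper must still analyse a function of $(y_{ab},y_{ac},y_{bc})$ across several sub-regions depending on whether $y_{ab},y_{ac}$ exceed $0.515$; this occupies Cases~1, 2a--2c, 3 and is the bulk of Section~\ref{sec:triangle-ideal}. The constant $1.515$ is chosen precisely here---specifically so that $\min(1.515+x,2)\geq(-1+4x-2x^2)/x^2$ on $[0,1/2]$, the inequality needed when $y_{bc}=0$ and $y_{ab}+y_{ac}=1$. Your plan would go through for $+++$, $---$, $+--$, but on the $++-$ case you would find neither slackness nor a quick endpoint check; that is where the real work and the tuning of $1.515$ live.
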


    \subsection{Incorporating Errors}

This subsection shows how to incorporate errors and finishes the proof of Theorem~\ref{thm:CLN} assuming Lemma~\ref{lem:triangle_ideal}.  
To prove the theorem, as explained in Section~\ref{sec:CLN-setup}, it suffices to show that $\E[ALG] \leq \E[\Delta]$ in one iteration 
where {\bf Cleanup}$(V)$ returns $\emptyset$ and Algorithm~\ref{alg:CLN} proceeds by choosing a random pivot $p$.  

When $p$ is chosen as a pivot, let $N^{+}_{\adm}(p)$ be the set of vertices connected to $p$ via an admissible $+$edge. By Lemma~\ref{lem:RT}, compared to the ideal case, $ALG$ is increased by at most $\epsrt \cdot \binom{|N^+_{\adm}(p)|}{2}$. The main challenge is to show that this can be compensated by the increase in $\Delta$ due to the additional error budget $\eps$ for each admissible edge.

Let 
$\err_{vw|u}$
be $\epsrt$ when both $uv$ and $uw$ are admissible $+$ edges and $0$ otherwise. Recall that $\epsrt = O(1/\sqrt{r})$ be the error parameter from Lemma~\ref{lem:RT}. 
Similarly, $\Delta'_u(v, w) := (1 - \min(x_{uv}, x_{uw})) \eps$ is an {\em lower bound} on the expected error budget decrease from if $vw$ if it is admissible, and $0$ otherwise. 
 (It is an lower bound because when $u$ is the pivot, since our rounding algorithm satisfies $\Pr[v \in S] = y_{uv}$ for all $v$, $vw$ will be removed with probability at least $1 - \min(x_{uv}, x_{uw})$.) By letting
\[
ALG^i := \sum_{u \in V, vw \in \binom{V}{2}} \costi_u(v, w)
\quad \mbox{ and } \quad 
ALG' := \sum_{u \in V, vw \in \binom{V}{2}} \err_{vw|u} 
\]
and 
\[
\Delta^i := \sum_{u \in V, vw \in \binom{V}{2}}  \lpi_u(v, w) 
\quad \mbox{ and } \quad 
\Delta' := \sum_{u \in V, vw \in \binom{V}{2}} \Delta'_u(v, w).
\]
we have that 
\[
|V| \cdot ALG \leq ALG^i + ALG'
\]
and 
\[
|V| \cdot \Delta \geq \Delta^i + \Delta'.
\]
Since Lemma~\ref{lem:triangle_ideal} shows that $ALG^i \leq \Delta^i$, it suffices to show that $ALG' \leq \Delta'$. 

We prove it on an atom-by-atom basis. Fix an atom $K \in \calK'$ (again, it might be a singleton outside $V_{\calK}$). Let $N$ be the set of their admissible neighbors. Recall that if every vertex in $K$ has exactly the same set of neighbors with respect to $E_{\adm}$.
Let $E' = K \times N$ be the set of pairs between them (all admissible), which is partitioned into $E'_+$ and $E'_-$. 
Then $K$'s contribution to $ALG'$ is 
\[
ALG'_K := 
\sum_{u \in K, vw \in \binom{N}{2} : uv, uw \in E'_+}   \epsrt \leq |K||N|^2\epsrt,
\]
such that $\sum_{K \in \calK'} ALG'_K = ALG'$. We can also define $K$'s contribution to $\Delta'$ as
\[
\Delta'_K := \sum_{u \in N, vw \in E'} \Delta'_u(v, w) = 
\sum_{u \in N, vw \in E'} 
(1 - \min(x_{uv}, x_{uw})) \eps,
\]
so that $\sum_{K \in \calK'} \Delta'_K \leq 2\Delta'$. (For each triple $uvw$, $\Delta'_u(v, w)$ is counted at most twice, for atoms containing $v$ or $w$.) Therefore, it suffices to argue that $2ALG'_K \leq \Delta'_K$. 

We use the fact that $K$ was not removed as a single cluster from the cleanup stage, which means that 
\begin{align*}
& ALG_K = 
|E^- \cap \binom{K}{2}| 
+ |E^+ \cap (K \times (V \setminus K))| 
\\
\geq \quad & \Delta_K = 
 \sum_{uv \in (E^+ \cap (K \times V))} \min\{1.515 + x_{uv}, 2 \} \cdot x_{uv} + \sum_{uv \in (E^- \cap (K \times V))} 2(1 - x_{uv}) + \eps |E'|.
\end{align*}
Note that for each $e \in E^- \cap \binom{K}{2}$, it is an atomic edge with $x_e = 0$, so it contributes exactly $1$ to $ALG_K$ and at least $1$ to $\Delta_K$. Similarly, for each $e \in (E^+ \cap (K \times V)) \setminus E'_+$, it is a non-admissible edge with $x_e = 1$, so it contributes exactly $1$ to $ALG_K$ and at least $1$ to $\Delta_K$. Then, by only considering edges in $E'$, one can conclude that 
\[
|E'_+|
\geq \quad 
 \sum_{uv \in E'_+} \min\{1.515 + x_{uv}, 2 \} \cdot x_{uv} + \sum_{uv \in E'_-} 2(1 - x_{uv}) + \eps |E'|,
\]
which implies that $|E'_+| \geq \eps |E'|$ and $|E'_+| \geq \sum_{uv \in E'_+} 1.5 x_{uv}$, so that $\E_{uv \in E'_+} [x_{uv}] \leq 2/3$ and $\E_{uv \in E'} [x_{uv}] \leq 1-\eps/3$. 
Then one can lower bound $\Delta'_K$ as 
\[
\Delta'_K = 
 \sum_{u \in N, vw \in E'} 
\eps(1 - \min(x_{uv}, x_{uw}))
\geq \sum_{w \in N, u \in N, v \in K} 
\eps(1 - x_{uv})
= |N| \sum_{uv \in E'} \eps(1 - x_{uv}) \geq |N|^2 |K| (\eps/3).
\]
Therefore, by ensuring that $\epsrt \leq \eps/6$, we can ensure that $ALG'_K \leq \Delta'_K/2$ for every $K$ and $ALG' \leq K'$ eventually.

    \subsection{Analysis of Error-Free Version}
\label{sec:triangle-ideal}
This subsection proves Lemma~\ref{lem:triangle_ideal}. 
Specifically, for every triangle $abc$, we prove
    \begin{align}
		&\quad \cost^i_a(b,c)+\cost^i_b(c, a)+\cost^i_c(a, b)
		\leq 
  \Delta^i_a(b,c)+
  \Delta^i_b(c,a)+ \Delta^i_c(a,b).
  \label{inequ:cost-to-budget}
	\end{align}
	When two of $a, b, c$ are the same vertex, it is easy to see that~\eqref{inequ:cost-to-budget} is true~\cite{CLN22}. Therefore, we assume that $a, b, c$ are distinct vertices. We consider the four types of triangles according to the signs of their edges.
 Recall the coefficient of the LP budget for a \pedge $e$ is $f(x_{e})$ defined as $f(x) = \min(1.515+x, 2)$, whereas the coefficient for a \medge $e$ is always 2. 
 Also, note that given $y_{ab}, y_{bc}, y_{ca}, y_{abc}$, we can define $y_{a|b} = y_a - y_{ab}$ indicating the event $a$ and $b$ are separate (similarly for $y_{b|c}, y_{c|a}$), and $y_{a|bc} = y_{bc} - y_{abc}$ indicating the event $b$, $c$ are together but $a$ is separate, and $y_{a|b|c} = 1 - (y_{ab} + y_{bc} + y_{ca} - 2y_{abc})$ indicating the event $a$, $b$, $c$ are all separate. By the constraint~\eqref{eq:CLN-Triangle}, all of them are nonnegative with $y_{abc}+y_{a|bc}+y_{b|ca}+y_{c|ab}+y_{a|b|c} = 1$, so they exactly define the probability distribution over partitions on $\{ a, b, c\}$. (These variables were explicitly defined in the Sherali-Adams relaxation of~\cite{CLN22}.) 

	\paragraph{$+++$ triangles} \cite{CLN22} already proved that in their setting, the approximation ratio for $+++$ triangles is at most $1.5$. In our setting, as we have $f(x) \geq 1.5$ for every $x\in [0, 1]$, \eqref{inequ:cost-to-budget} holds.
	
	\paragraph{$---$ triangles} For a $---$ triangle $abc$, \eqref{inequ:cost-to-budget} holds even if we set the coefficients for $-$edges to 1 (instead of 2):
     \begin{align*}
         \text{left side of \eqref{inequ:cost-to-budget}} &= y_{ab}y_{ac} + y_{ab}y_{bc} + y_{ac}y_{bc}.\\ 
         \text{right side of \eqref{inequ:cost-to-budget}} &\geq (y_{ab}+y_{ac} - y_{ab}y_{ac})y_{bc} + (y_{ab} + y_{bc} - y_{ab}y_{bc})y_{ac} + (y_{ac} + y_{bc} - y_{ac}y_{bc})y_{ab} \\ 
         &= 2(y_{ab}y_{ac} + y_{ab}y_{bc} + y_{ac}y_{bc}) - 3y_{ab}y_{ac}y_{bc} \geq y_{ab}y_{ac} + y_{ab}y_{bc} + y_{ac}y_{bc}.
     \end{align*}
	
	\paragraph{$+--$ triangles} For a $+--$ triangle $abc$, we assume $ab$ and $ac$ are $-$edges and $bc$ is the $+$edge. We show that \eqref{inequ:cost-to-budget} holds even if we set the coefficient for $bc$ to $1$:
	\begin{align*}
		\text{left side of \eqref{inequ:cost-to-budget}} &= y_{ab} + y_{ac} - 2y_{ab}y_{ac} + y_{ab}y_{bc} + y_{ac}y_{bc} = (2 - x_{bc})(y_{ab} + y_{ac}) - 2y_{ab}y_{ac}. \\
		\text{right side of \eqref{inequ:cost-to-budget}} &\geq (y_{ab} + y_{ac} - y_{ab}y_{ac})x_{bc} + 2(y_{ab} + y_{bc} - y_{ab}y_{bc}) y_{ac} + 2(y_{ac} + y_{bc} - y_{ac}y_{bc}) y_{ab}\\
        &=(y_{ab} + y_{ac} - y_{ab}y_{ac})x_{bc}  + 2 (1 - x_{bc}+ y_{ab}x_{bc} )y_{ac} + 2(1 - x_{bc} + y_{ac}x_{bc})y_{ab}\\
		&= (2-x_{bc})(y_{ab} + y_{ac}) + y_{ab}y_{ac}x_{bc}.
	\end{align*}
	So the left side is at most the right side.

	\paragraph{$++-$ triangles} 
 It remains to consider the $++-$ triangle case, which contributes to the bulk of the analysis.  Focus on a $++-$ triangle $abc$. We assume $ab$ and $ac$ are $+$edges, and $bc$ is the $-$edge.  The left side of \eqref{inequ:cost-to-budget} is 
	\begin{align*}
		y_{abc} + y_{ab} + y_{bc} - 2y_{ab}y_{bc} + y_{ac} + y_{bc} - 2y_{ac}y_{bc} = y_{abc} + y_{ab} + y_{ac} + 2y_{bc} - 2(y_{ab} + y_{ac})y_{bc}.
	\end{align*}

	The right side of \eqref{inequ:cost-to-budget} is 
	\begin{align*}
		&\quad2(y_{ab} + y_{ac} - y_{abc})y_{bc} + f(x_{ac})(y_{ab} + y_{bc} - y_{ab}y_{bc})(1-y_{ac})  + f(x_{ab})(y_{ac} + y_{bc} - y_{ac}y_{bc})(1-y_{ab})\\
		&=f(x_{ac}) y_{ab} + f(x_{ab})y_{ac} + (f(x_{ab})+f(x_{a,c}))y_{bc} + (2 - f(x_{ab}) - f(x_{ac}))(y_{ab} + y_{ac})y_{bc}\\
		&\quad  - (f(x_{ab}) + f(x_{ac}))y_{ab}y_{ac} (1-y_{bc}) - 2 y_{abc}y_{bc} .
	\end{align*}
	
	The right side of \eqref{inequ:cost-to-budget} minus the left side is 
	\begin{align}
		&\quad (f(x_{ac}) - 1) y_{ab} + (f(x_{ab})-1)y_{ac} + (f(x_{ab})+f(x_{a,c}) - 2)y_{bc} + (4 - f(x_{ab}) - f(x_{ac}))(y_{ab} + y_{ac})y_{bc}\nonumber\\
		&\quad  - (f(x_{ab}) + f(x_{ac}))y_{ab}y_{ac} (1-y_{bc}) - 2 y_{abc}y_{bc} - y_{abc}.   \label{equ:diff}
	\end{align}
	We need to prove that \eqref{equ:diff} is non-negative.  We define $y_{ab|c} = y_{ab} - y_{abc}$ to indicate if $ab$ are in the same cluster that does not contain $c$; define $y_{a|bc}$ and $y_{b|ac}$ similarly. Define $y_{a|b|c} = 1 - (y_{ab} + y_{ac}+y_{bc}) + 2y_{abc}$ to indicate if $a, b, c$ are in three different clusters; this is at least $0$ by \eqref{eq:CLN-Triangle}. So $y_{ab|c}, y_{ac|b}, y_{a|bc}, y_{abc}$ and $y_{a|b|c}$ indicate the 5 cases for the clustering status of $a, b, c$. They are non-negative reals summing up to $1$. Moreover, $y_{ab} = y_{ab|c} + y_{abc}$, $y_{ac} = y_{ac|b} + y_{abc}$ and $y_{bc} = y_{a|bc} + y_{abc}$. 

    First we show that moving mass from $y_{a|bc}$ to $y_{a|b|c}$ can only decrease \eqref{equ:diff}.  Notice that this operation does not change $y_{abc}, y_{ab}, x_{ab}, y_{ac}$ and $x_{ac}$, and it decreases $y_{bc}$.   The derivative of \eqref{equ:diff} w.r.t $y_{bc}$ is \emph{at least} $f(x_{ab}) + f(x_{ac}) - 2 + (f(x_{ab}) + f(x_{ac}))y_{ab}y_{ac} - 2y_{abc}  \geq 1 + 3y^2_{abc} - 2y_{abc} \geq 0$.  Therefore, we can assume $y_{a|bc} = 0$, and thus $y_{bc} = y_{abc} \leq \min\{y_{ab}, y_{ac}\}$.  \eqref{equ:diff} becomes 
	\begin{align}
		&\quad (f(x_{ac}) - 1) y_{ab} + (f(x_{ab})-1)y_{ac} + (f(x_{ab})+f(x_{a,c}) - 3)y_{bc} + (4 - f(x_{ab}) - f(x_{ac}))(y_{ab} + y_{ac})y_{bc}\nonumber\\
		&\quad  - (f(x_{ab}) + f(x_{ac}))y_{ab}y_{ac}(1-y_{bc}) - 2y^2_{bc}. \label{equ:diff1}
	\end{align} 

	It remains to prove that \eqref{equ:diff1} is non-negative, under the condition that $y_{ab}, y_{ac},y_{bc} \in [0, 1]$ and $(y_{ab}+y_{ac}-1)_+ \leq y_{bc} \leq \min\{y_{ab}, y_{ac}\}$.   \medskip
	
	Fixing $y_{ab}$ and $y_{ac}$, \eqref{equ:diff1} is a quadratic function of $y_{bc}$. So it is minimized when $y_{bc} = (y_{ab}+y_{ac}-1)_+$ or $y_{bc} = \min\{y_{ab}, y_{ac}\}$, as the coefficient for the quadratic term $y_{bc}^2$ is $-2$.  
	We consider three cases.
	
	\paragraph{Case 1: $y_{ab} + y_{ac} \leq 1$ and $y_{bc} = 0$.} In this case, \eqref{equ:diff1} becomes 
		\begin{align}
			&\quad (f(x_{ac}) - 1) y_{ab} + (f(x_{ab})-1)y_{ac} - (f(x_{ab}) + f(x_{ac}))y_{ab}y_{ac}. \label{equ:diff2}
		\end{align}
		If both of $y_{ab}$ and $y_{ac}$ are at most $1/2$, then $x_{ab} \geq 1/2$ and $x_{ac } \geq 1/2$. Thus $f(x_{ab}) = f(x_{ac}) = 2$.  \eqref{equ:diff2} is $y_{ab} + y_{ac} - 4y_{ab}y_{ac} \geq 0$, as $y_{ab} \geq 2y_{ab}y_{ac}$ and $y_{ac} \geq 2y_{ab}y_{ac}$. 
		
		So, we assume exactly one of $y_{ab}$ and $y_{ac}$ is at least $1/2$. Wlog, we assume $y_{ab} < 1/2 \leq y_{ac}$. Thus $x_{ac}\leq 1/2 < x_{ab}$ and $f(x_{ab}) = 2$.  \eqref{equ:diff2} becomes $(f(x_{ac})-1) y_{ab} + y_{ac}-(2+f(x_{ac}))y_{ab}y_{ac}$. Fixing $y_{ac} \geq 1/2$ and $x_{ac} = 1 - y_{ac}$, the function is linear in $y_{ab}$. Thus it is minimized when $y_{ab} = 0$ or $y_{ab} = 1 - y_{ac}$. In the former case, \eqref{equ:diff2} is $y_{ac} > 0$. In the latter case, it is 
		\begin{align*}
			&\quad(f(x) - 1) x+ 1-x - (2 + f(x))x(1-x) = (x - x(1-x)) f(x) +  1 - 2x - 2x(1-x) \\
			&= x^2f(x) + 1 - 4x+2x^2, 
		\end{align*}where $x: = x_{ac}  = 1 - y_{ac} = y_{ab}\in[0, 1/2]$. 
        The number $1.515$ is chosen so that we have $\min\{1.515+x, 2\} \geq \frac{-1 + 4x - 2x^2}{x^2}$ for every $x \in [0, 1/2]$. See Figure~\ref{fig:f}. So in this case \eqref{equ:diff1} is at least $0$.
        \begin{figure}
            \centering
            \includegraphics[width=0.3\textwidth]{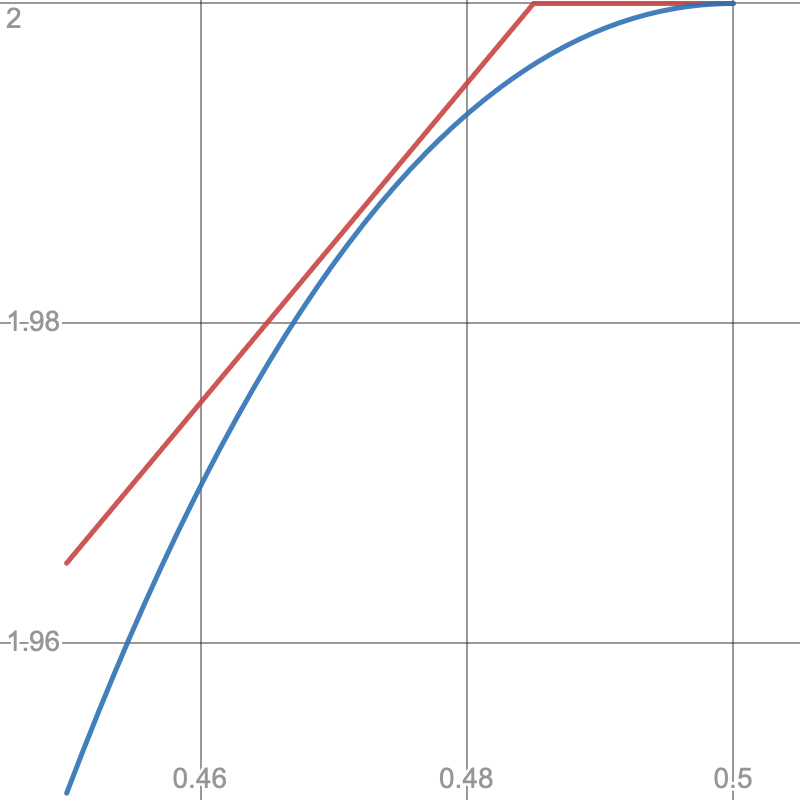}
            \caption{The function $f(x) = \min\{1.515+x, 2\}$ (the red line) and $\frac{-1 + 4x - 2x^2}{x^2}$ (the blue line) over the interval $[0.45, 0.5]$.}
            \label{fig:f}
        \end{figure}
		
		\paragraph{Case 2: $y_{ab} + y_{ac} \geq 1$ and $y_{bc} = y_{ab} + y_{ac} - 1$.} We further divide this case into many sub-cases, which by the symmetry between $y_{ab}$ and $y_{ac}$, cover the whole Case 2.
		\begin{itemize}
			\item Case 2a: $y_{ab} \leq 0.515, y_{ac} \in (0.515, 0.69]$. Then  $x_{ab}\geq 0.485$ and $x_{ac} < 0.485$.  So, $f(x_{ab}) = 2$ and $f(x_{ac}) = 1.515 + x_{ac} = 2.515 - y_{ac}$.  \eqref{equ:diff1} equals 
			\begin{align}
				&\quad (1.515 - y_{ac})y_{ab} + y_{ac} + (1.515 - y_{ac}) y_{bc}+ (y_{ac}-0.485)(y_{ab} + y_{ac})y_{bc} \nonumber\\
                &- (4.515-y_{ac})y_{ab}y_{ac} (1-y_{bc}) - 2y_{bc}^2. \label{equ:diff3}
			\end{align}
		
			We fix $y_{ac}$. Consider the operation of increasing $y_{ab}$ and $y_{bc}$ at rate 1 so as to keep $y_{bc} = y_{ab} + y_{ac} - 1$. \eqref{equ:diff3} will increase at a ratio of 
			\begin{align}
				&\quad 1.515 - y_{ac} + 1.515 - y_{ac} + (y_{ac}-0.485)(y_{ab}  + y_{ac}  + y_{bc}) - (4.515-y_{ac})y_{ac} (1-y_{bc} - y_{ab})- 4y_{bc} \nonumber\\
				&= (3.03 - 2y_{ac}) + (y_{ac} - 0.485)(2y_{ab} + 2y_{ac} -1) - (4.515-y_{ac})y_{ac} (2-2y_{ab}- y_{ac})   - 4(y_{ab} + y_{ac} - 1) \nonumber \\
                & = 7.515 - 4.97y_{ab} - 17y_{ac} + 11.03y_{ab}y_{ac} +8.515y_{ac}^2 - 2y_{ab}y_{ac}^2 - y_{ac}^3. \label{equ:deriv}
			\end{align}
            We then show \eqref{equ:deriv} is positive for $y_{ac} \in (1/2, 0.69]$.  For a fixed $y_{ac}$, the quantity is a linear function of $y_{ab}$, and the coefficient for the $y_{ab}$-term is $-4.97 + 11.03y_{ac}  - 2y_{ac}^2$. This function is at least $0$ for $y_{ac} \in [1/2, 1]$ as it is monotone increasing from $\infty$ to $11.03/4$, and its value is positive when $y_{ac} = 1/2$. Therefore \eqref{equ:deriv} is minimized when $y_{ab}$ is minimized; that is, $y_{bc} = 1 - y_{ac}$.  In this case, \eqref{equ:deriv} becomes
            \begin{align*}
                 &\quad 7.515 - 4.97(1-y_{ac}) - 17y_{ac} + 11.03(1-y_{ac})y_{ac} +8.515y_{ac}^2 - 2y_{ac}^2(1-y_{ac}) - y_{ac}^3\\
                &= 2.545 - y_{ac} - 4.515 y_{ac}^2 + y_{ac}^3.
            \end{align*}
            For $y_{ac} \in [1/2, 0.69]$, the function is at least $0$, which implies that \eqref{equ:deriv} is at least $0$. 
            
			So, the operation of decreasing $y_{ab}$ and $y_{bc}$ at the same rate can only decrease the \eqref{equ:diff3}. Thus \eqref{equ:diff3}, which is equal to \eqref{equ:diff1}, is minimized when $y_{ab} = 1 - y_{ac}$ and $y_{bc} = 0$. This is already considered in Case 1.

            \item Case 2b: $y_{ab} \leq 0.515, y_{ac} \in [0.69, 1]$. In this case, we shall simply use $1.5$ for $f(x_{a,c})$ and $2$ for $f(x_{ab})$. That is, \eqref{equ:diff1} is lower bounded by 
            \begin{align}
                &\quad 0.5 y_{ab} + y_{ac} + 0.5 y_{bc} + 0.5(y_{ab} + y_{ac})y_{bc} - 3.5 y_{ab}y_{ac}(1-y_{bc})-2y_{bc}^2. \label{equ:diff4}
            \end{align}
            Consider the operation of increasing $y_{ab}$ and decreasing $y_{ac}$ at the same rate. This does not change $y_{ab} + y_{ac}$ and $y_{bc} = y_{ab} + y_{ac} - 1$. It increases $y_{ab}y_{ac}$. Then it is easy to see that the operation will decrease the above quantity.  So, the above quantity is minimized either when $y_{ab} = 1/2$ and $y_{ac} \geq 0.69$, or when $y_{ab} < 1/2$ and $y_{ac} = 0.69$ (we shall have $y_{ab} \geq 0.31$ as we have $y_{ab} + y_{ac} \geq 1$).

            When $y_{ab} = 1/2$ and $y_{ac} \geq 0.69$, we consider the operation of increasing $y_{ac}$ and $y_{bc}$ at the same rate to maintain $y_{bc} = y_{ab} + y_{ac} - 1 = y_{ac}-1/2$. \eqref{equ:diff4} will increase at a rate of $1 + 0.5 + 0.5(y_{ab} + y_{ac} + y_{bc}) - 3.5 y_{ab}(1-y_{bc} - y_{ac}) - 4y_{bc} = 1.5 + 0.5\cdot 2y_{ac} - 1.75 \cdot (1.5 - 2y_{ac}) - 4(y_{ac} - 1/2) = 0.5y_{ac}+0.875 > 0$. So, the operation of decreasing $y_{ac}$ and $y_{bc}$ at the same rate will decrease \eqref{equ:diff4}. So, \eqref{equ:diff4} is minimized when $y_{ab} = 1/2$ and $y_{ac} = 0.69$. This will covered by the second case. 

            When $y_{ab} < 1/2$ and $y_{ac} = 0.69$, we consider the operation of increasing $y_{ab}$ and $y_{bc}$ at the same rate, so as to maintain $y_{bc} = y_{ab} + y_{ac} - 1 = y_{ab} - 0.31$. \eqref{equ:diff4} will increase at a rate of $0.5 + 0.5 + 0.5(y_{ab} + y_{ac} + y_{bc}) - 3.5 y_{ac}(1-y_{bc} - y_{ab}) - 4y_{bc} = 1 + 0.5(2y_{ab}+0.69-0.31) - 3.5 \cdot 0.69 (1.31 - 2y_{ab}) - 4(y_{ab} - 0.31) = 1.83y_{ab} - 0.73365$. This is negative when $y_{ab} < \frac{0.73365}{1.83}$ and positive when $y_{ab} > \frac{0.73365}{1.83}$. So, \eqref{equ:diff4} is minimized when $y_{ab} = \frac{0.73365}{1.83} \in [0.4, 0.401]$ and $y_{bc} = y_{ab}-0.31 \in [0.09, 0.091]$. In this case, \eqref{equ:diff4} is at least
            \begin{align*}
                0.5 \times 0.4 + 0.69 + 0.5 \cdot 0.09 + 0.5 \cdot 1.09\cdot 0.09 - 3.5\cdot 0.401\cdot 0.69 \cdot 0.91 - 2\cdot 0.091^2 \geq 0.086 > 0.
            \end{align*}
            So, we have proved that \eqref{equ:diff4} is at least $0$ in this case, which implies \eqref{equ:diff1} is at least $0$. 
   
			\item Case 2c: $y_{ab} \geq 0.515, y_{ac} \geq 0.515$. Then $x_{ab} \leq 0.485$ and $x_{ac} \leq 0.485$.  $f(x_{a,b}) \geq 1.5 + x_{ab} = 2.5 - y_{ab}$, and $f(x_{ac}) \geq 2.5 - y_{ac}$.
			\eqref{equ:diff1} is lower bounded by 
			\begin{align}
				&\quad (1.5-y_{ac})y_{ab} + (1.5-y_{ab})y_{ac} + (2-y_{ab} - y_{ac})y_{bc} + (y_{ab}+y_{ac}-1)(y_{ab} + y_{ac})y_{bc}\nonumber\\
				&-(5-y_{ab}-y_{ac})y_{ab}y_{ac}(1-y_{bc}) - 2y_{bc}^2. \label{equ:diff5}
			\end{align}
            Notice that $(1.5-y_{ac})y_{ab} +(1.5-y_{ab})y_{ac} = 1.5(y_{ab}+y_{ac})-2y_{ab}y_{ac}$. We fix the sum $y_{ab} + y_{ac}$ and $y_{bc} = y_{ab} + y_{ac} - 1$ is fixed, while changing $y_{ab}y_{ac}$. The coefficient for $y_{ab}y_{ac}$ is 
			\begin{align*}
				-2 - (5- y_{ab} - y_{ac})(1-y_{bc}) \leq 0.
			\end{align*}
			So \eqref{equ:diff5} is minimized when $y_{ab} = y_{ac}$.  Letting $y = y_{ab} = y_{ac} \in [1/2, 1]$, and $y_{bc} = 2y-1$, \eqref{equ:diff5} becomes
			\begin{align*}
				&2(1.5-y)y+(2-2y)(2y-1) + (2y-1)^2\cdot 2y - (5-2y)\cdot y^2\cdot (2-2y) - 2(2y-1)^2 \\
				&=-4y^4 + 22y^3 -32y^2 + 19y - 4.
			\end{align*}
			This is monotone over $y \in [1/2, 1]$ and so its minimum is $0$,  achieved at $y=1/2$. So in this case, \eqref{equ:diff1} is at least $0$.
		\end{itemize}
  
	\paragraph{Case 3: $y_{bc} = \min\{y_{ab}, y_{ac}\}$.} We assume $y_{bc} = y_{ab} \leq y_{ac}$ wlog.  In this case, we can use the lower bound 1.5 for both $f(x_{ab})$  and $f(x_{ac})$.  That is, \eqref{equ:diff1} is at least 
		\begin{align}
			&\quad 0.5 y_{ab}  + 0.5y_{ac} + (y_{ab} + y_{ac})y_{bc}-3y_{ab}y_{ac}(1-y_{bc}) - 2y^2_{bc} \nonumber\\
			&=  0.5 y_{ab} + 0.5y_{ac} + y_{ab}^2 + y_{ab}y_{ac} - 3y_{ab}y_{ac} + 3y_{ab}^2y_{ac} - 2y_{ab}^2\nonumber\\
			&= 0.5y_{ab} + 0.5y_{ac}-y_{ab}^2-2y_{ab}y_{ac} + 3y_{ab}^2y_{ac}. \label{equ:diff6}
		\end{align}
		Fix $y_{ab}$ in \eqref{equ:diff6}. The coefficient for $y_{ac}$ is $0.5  - 2y_{ab} + 3y_{ab}^2$, which is always non-negative.  So the quantity is minimized when $y_{ab} = y_{ac}$. Under this condition \eqref{equ:diff6} becomes 
		\begin{align*}
			y_{ab} - 3y_{ab}^2 + 3y_{ab}^3 = y_{ab} ( 1 - 3y_{ab} + 3y_{ab}^2)  \geq 0, \text{for every } y_{ab} \geq 0.
		\end{align*}
		Therefore, \eqref{equ:diff1} is non-negative in this case.
	
    So, we have proved that \eqref{inequ:cost-to-budget} holds for all triangles $abc$.

    \section*{Acknowledgements}
    Vincent Cohen-Addad, Euiwooong Lee, and Alantha Newman are grateful to Claire Mathieu and Farid Arthaud for valuable discussions at an early stage of this project.
    \bibliographystyle{alpha}
    \bibliography{references}
\end{document}